\newtheorem{theorem}{Theorem}
\newtheorem{lemma}{Lemma}
\theoremstyle{definition}
\newtheorem{define}{Definition}
\newtheorem{remark}{Remark}
\newtheorem{example}{Example}
\begin{document}

\title{Joint Index Coding and Incentive Design for Selfish Clients}

%

\author{Yu-Pin Hsu, I-Hong Hou, and Alex Sprintson
	\footnote{Y.-P. Hsu is with  Department	of Communication Engineering, National Taipei University, Taiwan.  I-H. Hou and A. Sprintson are with Department of Electrical and Computer Engineering, Texas A\&M University, USA.   Email: \texttt{yupinhsu@mail.ntpu.edu.tw, \{ihou, spalex\}@tamu.edu}. 	This paper was presented in part in Proc. of IEEE ISIT \cite{hsu2013index}. This work was supported by Ministry of Science and Technology, Taiwan, under grant MOST 107-2221-E-305-007-MY3. }
}

\maketitle

\vspace{-1cm}

\begin{abstract}	
The  index coding problem includes a server,  a group of clients, and a set of data chunks.  While each client wants a subset of the data chunks and already has another subset as its side information, the server transmits some uncoded data chunks or coded data chunks to the clients over a noiseless broadcast channel. The objective of the problem is to satisfy the demands of all clients with the minimum number of transmissions. In this paper, we investigate the index coding setting from a game-theoretical perspective. We consider \textit{selfish} clients, where each selfish client has \textit{private side information} and a \textit{private valuation} of each data chunk it wants. In this context, our objectives are following: 1) to motivate each selfish client to reveal the correct side information and true valuation of each data chunk it wants; 2)  to maximize the social welfare, i.e., the total valuation of the data chunks recovered by the clients minus the total cost incurred by the transmissions  from the server. Our main contribution is to  jointly develop coding schemes and incentive schemes for achieving the first objective perfectly  and achieving the second objective optimally or approximately with guaranteed approximation ratios (potentially within some restricted sets of coding matrices). 
\end{abstract}

\section{Introduction} \label{section:introduction}

There has been a  dramatic proliferation of research on \textit{wireless network coding} because it can substantially reduce the number of transmissions by the \textit{broadcast} nature of a wireless medium. On one hand, the wireless medium allows a wireless sender node to broadcast data to all neighboring nodes with a single transmission. On the other hand, a wireless receiver node can  overhear the wireless channel and store the overheard data for decoding future transmissions, which  is referred to as \textit{side information}. Take the wireless network in Fig.~\ref{fig:index-coding}-(a) for example, where  sender $s_1$  sends  data chunk $d_1$ to  receiver $r_1$ through a  forwarder and  sender $s_2$ sends data chunk $d_2$ to  receiver $r_2$ also through  the forwarder. While receiver $r_1$ can obtain data chunk $d_2$ destined to $r_2$ by overhearing the transmissions from $s_2$,  receiver $r_2$ can also obtain data chunk $d_1$ destined to $r_1$  by overhearing the transmissions from $s_1$. 
Leveraging the side information, the forwarder can simply broadcast a single XOR-coded data chunk $d_1+d_2$, and then both receivers can obtain the data chunks they want by subtracting the side information they have from the received data chunk $d_1+d_2$. However, with the conventional communication approach (without coding), the forwarder has to transmit both data chunks $d_1$ and $d_2$ separately.



The index coding problem is one of  fundamental problems on the wireless network coding. An instance of the index coding problem includes a server (playing the role of the forwarder in Fig.~\ref{fig:index-coding}-(a)), a set of wireless clients, and a set $\mathbf{D}$ of data chunks. Each client wants a subset of the data chunks in set $\mathbf{D}$ and has a different subset of the data chunks in set $\mathbf{D}$ given to it as side information. The server can transmit uncoded data chunks or coded data chunks (i.e., combinations of  data chunks in set $\mathbf{D}$) to all clients over a noiseless broadcast channel. The goal of the problem is to identify a coding (transmission) scheme  requiring the minimum number  of transmissions to satisfy the demands of all clients. For example, Fig.~\ref{fig:index-coding}-(b) depicts an instance of the index coding problem. With the assist of coding,   broadcasting only three coded data chunks  $d_1 + d_2$, $d_2 + d_3$, and $d_4$ (over $GF 2)$) can satisfy all clients.

\begin{figure}
	\begin{center}
		\includegraphics[width=.6\textwidth]{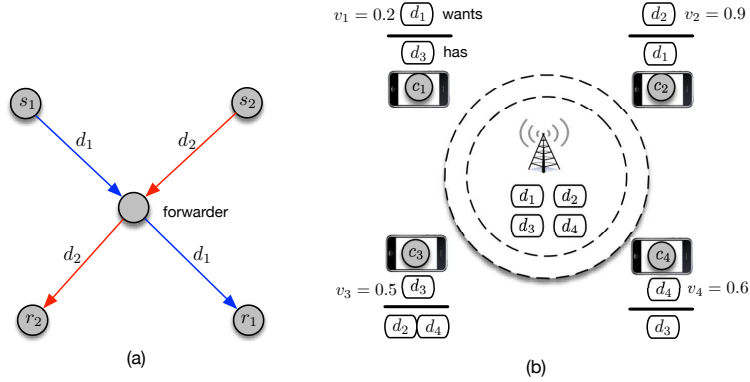}
		\caption{(a) Senders $s_1$ and $s_2$ transmit data chunks $d_1$ and $d_2$ to receivers $r_1$ and $r_2$, respectively, through a forwarder; (b) A server has a set $\mathbf{D}=\{d_1, d_2, d_3, d_4\}$ of data chunks. Client $c_i$ \textit{wants} data chunk $w_i=d_i$ with its valuation $v_i$ (as shown in the figure), and \textit{has} set $H_i \subseteq \mathbf{D}$ (as shown in the figure) as its side information.}		
		\label{fig:index-coding}
	\end{center}
\end{figure}


While transmitting an (uncoded or coded) data chunk can incur a significant transmission cost (like the energy consumption), the server transmits an (uncoded or coded) data chunk only when it is important enough to the clients. Precisely, the server transmits an (uncoded or coded) data chunk only when the overall valuation of the data chunks recovered by that transmitted data chunk can justify the transmission cost. In this context, a client has to evaluate the data chunks it wants. For example, while a client is watching an online video, it prefetches some subsequent data chunks and stores them in a prefetch queue for smoothly  playing the video. The lower the size of the prefetch queue  is, the higher the valuation of the next data chunk is. Once  the prefetch queue is empty, the client would announce a high enough valuation to justify the transmission cost. In contrast, if the size of the prefetch queue is larger, the client would announce a lower valuation and seek an opportunity that the server would transmit a coded data chunk (that can recover the data chunks the client wants) because the total valuation of the recovered data chunks can justify the transmission cost. 
	
Thus, unlike the original index coding problem (where the server has to satisfy all clients), this paper investigates the scenario where the server's transmissions strike a balance between the valuation of the data chunks and the cost of the transmissions. Instead of minimizing the number of transmissions, the first goal of this paper is to develop a coding scheme for maximizing the \textit{social welfare},  i.e., the difference between the  valuation of the data chunks that can be recovered by a client and the cost incurred by the transmissions from the server. 

To maximize the social welfare, the server needs to know each client's required data chunks,  each client's valuations of the data chunks it wants, and each client's  side information. Motivated by Fig.~\ref{fig:index-coding}-(a), the server forwards the data chunks according to the destination IP addresses; hence, it knows each client's required data chunks. However, the server does not know each client's valuations of the data chunks it wants because the data chunks are evaluated by the client itself. Moreover, the server also does not know each client's  side information because it does not fully know the client's surrounding environment like  Fig.~\ref{fig:index-coding}-(a). Thus, the server has to ask all clients to submit the unknown information. All prior works on the index coding problem assumed that the server knows the side information, for example, because the clients honestly and periodically broadcast the their side information like \cite{index-coding-3}. However, the server cannot expect a \textit{selfish} client to reveal its true information. For example, suppose that the server's strategy is simply to calculate the social welfare according to the information submitted by all clients and then to construct a code for maximizing that social welfare.  With that strategy, a selfish client would be reluctant to evaluate the data chunks it wants and simply submits  valuations as high as possible to get a higher chance of recovering the data chunks it wants. In particular,  with the wrong information, the server cannot maximize the true social welfare. 

To address that issue, an incentive for motivating each selfish client to reveal its true information is needed. This paper uses (money) payment adjustment between the server and the clients as an incentive (or a punishment).  The pricing idea has been widely used in  network design for motivating a selfish network user to behave in a prescribed way. For example, \cite{hou2010utility} developed a payment scheme for motivating each network user to submit its true utility function.   According to \cite[Theorem~9.36 or Lemma~11.9]{nisan2007algorithmic}, a payment scheme can motivate a selfish  client to submit the true valuation of a data chunk it wants (without considering the possibility that a client can lie about its side information) if and only if 1) there exists a threshold such that the client can recover the data chunk it wants if the valuation it submits is higher than the threshold, but  cannot otherwise, and 2) the client is charged the threshold if it can recover the data chunk it wants. Next, we consider a network consisting of a server and a single client that wants data chunk $d_1$ with valuation  0.6 and has data chunks $d_2, d_3$ as its side information. Following that theorem or lemma, the server might set some thresholds (satisfying the above theorem or lemma) as follows.
\begin{itemize}
	\item If the client submits the side information $\{d_2, d_3\}$, then set the threshold to 0.5. If the client submits the valuation  more than 0.5, the server transmits  $d_1+d_2+d_3$ (over $GF(2)$); otherwise, the server transmits nothing.
	\item If the client submits the side information $\{d_2\}$, then set the threshold to 0.4. If the client submits the valuation  more than 0.4, the server transmits $d_1+d_2$ (over $GF(2)$); otherwise, the server transmits nothing.
\end{itemize} 
Then, the client would submit the true valuation but the incomplete side information of  $\{d_2\}$ so that it can recover the data chunk it wants with the minimum payment 0.4. That is,  a selfish client in our problem can lie about \textit{its valuations} or \textit{the side information it has} to recover the data chunks it wants with the minimum payment.
Thus, we need to revisit those conditions for the incentive design in our problem.  The second goal of this paper is to develop a payment scheme for motivating all selfish clients to reveal their true information.




\subsection{Contributions} 
We investigate the index coding setting in the presence of selfish clients, aiming to propose a joint coding and incentive design (called a mechanism) for 1)  motivating each selfish client to truthfully reveal its side information and the valuation of each data chunk it wants  and 2)  maximizing the social welfare. 
Our first main contribution is to provide a sufficient condition for mechanisms that can motivate each selfish client to be truthful. Our second main contribution is to develop some mechanisms. With the proposed sufficient condition, we  can establish their truthfulness. Moreover, we analyze their optimality  or  worst-case approximation ratios (potentially within some restricted sets of coding matrices) in terms of the social welfare.   

%
%
%

%

\subsection{Related works}
The index coding problem was introduced in \cite{birk1998informed} and has become a hot topic. Most related works characterized  capacity regions (e.g., \cite{arbabjolfaei2019generalized,arunachala2019optimal}) for various network settings or developed computationally efficient  coding schemes to (optimally or approximately) achieve the regions (e.g., \cite{agarwal2019linear,li2018multi}). In addition to the original index coding problem,  some variants of the index coding problem have also been investigated, such as the pliable index coding  problem (e.g., \cite{brahma2015pliable}) and the secure index coding problem (e.g., \cite{mojahedian2017perfectly}). See  \cite{arbabjolfaei2018fundamentals} for extensive surveys. All the prior works on the index coding neglected potentially selfish clients. Thus, our work introduces another variant of the index coding problem by considering selfish clients. Moreover, our work is the first one to investigate the social welfare on the index coding setting.

Many prior works on network coding considered selfish clients. Most of those works  (e.g., \cite{bhadra2006min,mohsenian2013inter,zhang2008dice}) analyzed \textit{equilibrium} in the presence of selfish clients. Few works (e.g., \cite{wu2017social,chen2010inpac}) developed incentive schemes for network-coding-enabled networks. In particular, those works  focused on incentive design for \textit{fixed} coding schemes. For example, \cite{wu2017social} and \cite{chen2010inpac} used random linear codes. In contrast, our work considers a \textit{joint} coding and incentive design problem.  

Our problem is also related to combinatorial multi-item auction design (e.g., see \cite{nisan2007algorithmic}) for motivating an auction participant to reveal the true valuation of each item. However, our problem is fundamentally different from the traditional auction design. The traditional auction design can fully manage the item allocation. If an item is/isn't allocated to an auction participant, the  auction participant does/doesn't obtain it. However, because a client in our problem  can lie about its side information, the server cannot fully manage the recovery of data chunks for a client. While the server decides not to recover a data chunk that a client wants (based on the side information revealed by the client), the client might still recover the data chunk with the hidden side information. 
	Moreover, as discussed in Section~\ref{section:introduction}, the results (like  \cite[Theorem~9.36 or Lemma~11.9]{nisan2007algorithmic}) for the traditional auction design cannot immediately apply to our problem. Thus, this paper is exploring a new problem in the intersection of coding theory and game theory.

\section{System model}\label{section:model}
\subsection{Network model}
Consider a wireless broadcast network consisting of a server and a set $c_1,  \cdots, c_n$  of $n$ wireless clients, as illustrated in Fig.~\ref{fig:index-coding}-(b). The server has a set $\mathbf{D}=\{d_1, \cdots, d_m\}$ of $m$ data chunks, where each data chunk $d_i$ represents  an element of  the Galois field $GF(q)$ of order $q$.  The server can transmit uncoded data chunks or coded data chunks (combined from data chunks in set $\mathbf{D}$) to all clients over a noiseless broadcast channel. Each client  wants a single data chunk in set $\mathbf{D}$.  	 Let $w_i \in \mathbf{D}$ be the data chunk client $c_i$ wants. Multiple clients can request the same data chunk, i.e., $w_i=w_j$ for some $i$ and~$j$.  Moreover, each client already has a subset of data chunks in set $\mathbf{D}$ as its side information.  Let $H_i \subseteq  \mathbf{D}$ be the side information client $c_i$ has.


\subsection{Coding schemes}

In this paper, we consider  \textit{scalar-linear coding schemes}, where  every transmission made by the server is a linear combination of the data chunks in set $\mathbf{D}$. Precisely, the $i$-th transmission $t_i$ made by the server can be expressed by  $t_i=\sum^m_{j=1} g_{i,j} d_j$ with \textit{coding coefficient} $g_{i,j} \in GF(q)$  Let $G_i=(g_{i,1},  \cdots, g_{i,m})$ be the \emph{coding vector} of $t_i$. Moreover,  let $G=[G_i]$ be the \textit{coding matrix} whose  $i$-th rows is the coding vector of $t_i$. For a given coding matrix $G$, let function $\eta(G)$ represent the total number of transmissions made by the server.

After receiving the transmissions $t_1, \cdots, t_{\eta(G)}$ from the server, client $c_i$ can recover data chunk $w_i$ it wants if and only if there exists a (decoding) function that maps received data chunks $t_1, \cdots, t_{\eta(G)}$ and its side information $H_i$ to data chunk $w_i$.
Note that the server does not need to satisfy all clients in our setting.  For a given coding matrix $G$, let indicator function $\mathbf{1}_i(H_i, G)$ indicate if client $c_i$ can recover data chunk $w_i$ with side information $H_i$, where   $\mathbf{1}_i(H_i, G)=1$ if it can; $\mathbf{1}_i(H_i, G)=0$ if it cannot.

Each client $c_i$ has a  valuation $v_i \geq 0$ representing the importance of data chunk $w_i$ to it. Suppose that each transmission (from the server) incurs a transmission cost of one unit.  The transmission cost can reflect, for example, the power consumption. To capture the tradeoff between the importance of the data chunks and the power consumption, we define a \emph{social welfare} by 
\begin{align}
\sum_{i=1}^{n} v_i \cdot \mathbf{1}_i(H_i, G) - \eta(G), \label{eq:social-welfare}
\end{align} 
where the first term $v_i \cdot \mathbf{1}_i(H_i, G)$ expresses the valuation of data chunk $w_i$ that can be recovered by client $c_i$ and the second term $\eta(G)$ expresses the cost of the total  $\eta(G)$ transmissions made by the server. For example,  the social welfare of  transmitting $d_1+d_2$, $d_2+d_3$, and $d_4$ (the solution to the index coding problem) in Fig.~\ref{fig:index-coding}-(b)   is $0.2+0.9+0.5+0.6-3=-0.8$.  In contrast, the social welfare of transmitting  $d_3+d_4$ is $0.5+0.6-1=0.1$ (where only clients $c_3$ and $c_4$ can recover the data chunks they want). Thus, transmitting   $d_3+d_4$ is more valuable than transmitting  $d_1+d_2$, $d_2+d_3$, and $d_4$ from the global view. In this paper, we aim to develop a \textit{coding scheme} that identifies a coding matrix $G$ for maximizing the social welfare.

\subsection{Incentive schemes}
To maximize the social welfare, the server has to know  data chunk $w_i$, valuation $v_i$ and side information $H_i$ about each client $c_i$.  As discussed in Section~\ref{section:introduction},  we suppose that  the server knows $w_i$ for all $i$ but asks each client $c_i$ to submit its valuation and the \textit{indices} (but not the content) of the data chunks in its side information\footnote{Submitting the information would incur a slight transmission cost. The cost can be reflected in valuation $v_i$  while client $c_i$ evaluates data chunk $w_i$. Thus, the cost is not included in the social welfare in Eq.~(\ref{eq:social-welfare}).}. Let $\hat{v}_i>0$ and $\hat{H}_i \subseteq \mathbf{D}$ be the valuation and  the side information\footnote{If a client submits an index  that is out of the indices  of the data chunks in set $\mathbf{D}$ as its side information, then the server  neglects it (because the server only manages the data chunks in set $\mathbf{D}$). Thus, we  assume $\hat{H}_i \subseteq \mathbf{D}$ without loss of generality.}, respectively, revealed by client $c_i$. Each client $c_i$ can tell a lie, so that  $\hat{v}_i$ and $\hat{H}_i$ (obtained by the server) can be different from the true information $v_i$ and $H_i$ (owned by client $c_i$). Thus, in this paper, we also aim to develop an incentive scheme for motivating each  client $c_i$ to tell the truth so that $\hat{v}_i=v_i$ and $\hat{H}_i=H_i$ for all $i$. Let  \mbox{$\hat{\mathbf{V}}=\{\hat{v}_1, \cdots, \hat{v}_n\}$} and \mbox{$\hat{\mathbf{H}}=\{\hat{H}_1, \cdots, \hat{H}_n\}$} be the  sets of all corresponding elements.  Moreover, let $\hat{\mathbf{V}}_{-i}=\hat{\mathbf{V}}-\{\hat{v}_i\}$ and $\hat{\mathbf{H}}_{-i}=\hat{\mathbf{H}} - \{\hat{H}_i\}$ be the set of all corresponding elements except the one for client $c_i$.

In this paper, we consider money transfers between the server and the clients as an incentive. Each client $c_i$ has to pay the server for data chunk $w_i$ if the client can recover it.  In this context,  valuation $v_i$ of data chunk $w_i$ implies the maximum amount of money client $c_i$ is willing to pay to obtain  it.
Let $p_i \geq 0$ be the payment of client $c_i$ charged by the server.  A scheme determining payment $p_i$ for each client $c_i$ is  referred to as a \textit{payment scheme}. In general, a payment scheme depends on  valuation set $\hat{\mathbf{V}}$,   side information set $\hat{\mathbf{H}}$, and  coding matrix $G$ (which determines  indicator $\mathbf{1}_i(H_i, G)$ for each client $c_i$). The design of payment schemes and that of coding schemes depend on each other.  Thus, we define a \textit{mechanism}~$\pi$ by a joint coding and payment scheme.  

The mechanism used by the server is given to all clients.  For a given mechanism~$\pi$, we define a \textit{utility} for client $c_i$ by $u_i(\hat{\mathbf{V}}, \hat{\mathbf{H}},\pi)= (v_i - p_i)\cdot\mathbf{1}_i(H_i, G)$, which is the difference between the valuation of the data chunk and the money charged by the server if it can recover the data chunk it wants, but is zero otherwise. Though the utility of a client can be computed only when all clients' information is given (because the mechanism~$\pi$ needs all client's information to compute price $p_i$ and coding matrix $G$), we consider non-cooperative clients where a client has no information about  other clients and does not cooperate with other clients. In this context, we aim to develop a mechanism $\pi$ for guaranteeing that a client can maximize its utility when  submitting its true information, for any given information submitted by all other clients. 
 Mathematically, for every $\hat{\mathbf{V}}_{-i}$ and $\hat{\mathbf{H}}_{-i}$, the mechanism $\pi$ satisfies  
	\begin{align}
	u_i\bigl(\{v_i, \hat{\mathbf{V}}_{-i}\}, \{H_i, \hat{\mathbf{H}}_{-i}\}, \pi\bigr) \geq u_i\bigl(\{\hat{v}_i, \hat{\mathbf{V}}_{-i}\}, \{\hat{H}_i, \hat{\mathbf{H}}_{-i}\}, \pi\bigr), \label{eq:truthful}
\end{align} 
for all possible $\hat{v}_i$ and $\hat{H}_i$. In game theory, we call the mechanism  a  \textit{dominant strategy} \cite{mechanism-design}. A dominant strategy equilibrium is always a Nash equilibrium.


\subsection{Problem formulation}
A mechanism  satisfying Eq.~(\ref{eq:truthful}) is referred to as a \textit{truthful} mechanism. Moreover, if a truthful mechanism yields the social welfare equal to the maximum value of Eq.~(\ref{eq:social-welfare}) obtained by a coding scheme that knows the true information $\mathbf{V}$ and $\mathbf{H}$, it is referred to as an \textit{optimal} truthful mechanism.  We aim to develop an optimal truthful mechanism such that the social welfare and the utilities of all clients are simultaneously optimized.  Our problem involves both the global and  local optimization problems. 

Note that  local utility  $u_i(\hat{\mathbf{V}}, \hat{\mathbf{H}}, \pi)$ involves more than one type of private information (i.e., valuation  and  side information).  The traditional incentive design for a single type of private information might be insufficient to motivate a client in our problem to reveal the true information of both types.  Thus,  Section~\ref{section:condition} characterizes truthful mechanisms for our problem. With the results in Section~\ref{section:condition},  we will develop optimal or approximate truthful mechanisms for various  scenarios of our problem.

\section{Characterizing truthful mechanisms} \label{section:condition}
This section provides a sufficient condition of truthful mechanisms for our problem. To that end, we introduce a type of coding schemes as follows, where we  use  indicator function $\mathbf{1}_i(\hat{H}_i, G)$ to indicate if client $c_i$ can recover data chunk $w_i$ with side information $\hat{H}_i$ and coding matrix $G$.

\begin{define}
A coding scheme is a \textbf{threshold-type} coding scheme if, for every valuation set $\hat{\mathbf{V}}_{-i}$ and side information set $\hat{\mathbf{H}}$, there exists a  threshold $\bar{v}_i(\hat{\mathbf{V}}_{-i},\hat{\mathbf{H}})$ such that 
\begin{itemize}
	\item when $\hat{v}_i > \bar{v}_i(\hat{\mathbf{V}}_{-i},\hat{\mathbf{H}})$, it constructs a coding matrix $G$ such that  $\mathbf{1}_i(\hat{H}_i, G)=1$;
	\item when $\hat{v}_i < \bar{v}_i(\hat{\mathbf{V}}_{-i},\hat{\mathbf{H}})$, it constructs a coding matrix $G$ such that $\mathbf{1}_i(\hat{H}_i, G)=0$,
\end{itemize}
for all $i$.
\end{define}

Note that threshold $\bar{v}_i(\hat{\mathbf{V}}_{-i},\hat{\mathbf{H}})$ for client $c_i$ is independent of valuation $\hat{v}_i$ submitted by client $c_i$, but is dependent on side information $\hat{H}_i$ submitted by client $c_i$.  The next theorem provides a  sufficient condition of  truthful mechanisms for our problem.

\begin{theorem} \label{theorem:truthfulness}
Suppose that $\hat{H}_i \subseteq H_i$ for all $i$. A mechanism is truthful if the following four conditions hold:
\begin{enumerate}
	\item The coding scheme is a threshold-type coding scheme.
	\item  The payment scheme determines payment $p_i=\bar{v}_i(\hat{\mathbf{V}}_{-i},\hat{\mathbf{H}})$ for  client $c_i$ if $\mathbf{1}_i(\hat{H}_i, G)=1$, or $p_i=0$  if  $\mathbf{1}_i(\hat{H}_i, G)=0$, for all coding matrices $G$ constructed by the coding mechanism.
	\item   For every $\hat{\mathbf{V}}_{-i}$ and $\hat{\mathbf{H}}_{-i}$, $\bar{v}_i(\hat{\mathbf{V}}_{-i},\{H_i, \hat{\mathbf{H}}_{-i}\}) \leq \bar{v}_i(\hat{\mathbf{V}}_{-i},\{\hat{H}_i, \hat{\mathbf{H}}_{-i}\})$ for all $\hat{H}_i \subseteq H_i$.
	\item For every $\hat{\mathbf{V}}_{-i}$ and  $\hat{\mathbf{H}}_{-i}$, if the coding scheme can construct  a coding matrix $G$ such that $\mathbf{1}_i(\hat{H}_i, G)=0$ but $\mathbf{1}_i(H_i, G)=1$ for some $\hat{v}_i$ and  $\hat{H}_i$, then $\bar{v}_i(\hat{\mathbf{V}}_{-i},\{H_i, \hat{\mathbf{H}}_{-i}\})=0$.		
\end{enumerate}
\end{theorem}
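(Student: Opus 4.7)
The plan is to verify Eq.~(\ref{eq:truthful}) directly by fixing $\hat{\mathbf{V}}_{-i}$ and $\hat{\mathbf{H}}_{-i}$ and doing a case analysis on whether truthful reporting $(v_i, H_i)$ and a candidate lie $(\hat{v}_i, \hat{H}_i)$ each produce a coding matrix from which client $c_i$ can decode. Writing $\bar{v}_i^{\mathrm{T}} := \bar{v}_i(\hat{\mathbf{V}}_{-i},\{H_i, \hat{\mathbf{H}}_{-i}\})$ and $\bar{v}_i^{\mathrm{L}} := \bar{v}_i(\hat{\mathbf{V}}_{-i},\{\hat{H}_i, \hat{\mathbf{H}}_{-i}\})$, and letting $G^{\mathrm{T}}, G^{\mathrm{L}}$ denote the coding matrices Condition~1 yields under the two reports, I will throughout use one fundamental monotonicity: any linear decoding function that works from received transmissions and side information $\hat{H}_i$ keeps working when the client has a larger set, so $\hat{H}_i \subseteq H_i$ implies $\mathbf{1}_i(\hat{H}_i, G) = 1 \Rightarrow \mathbf{1}_i(H_i, G) = 1$ for every $G$.

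First I would handle the case $v_i > \bar{v}_i^{\mathrm{T}}$: by Conditions~1 and~2 the truthful utility equals $v_i - \bar{v}_i^{\mathrm{T}} > 0$. For any deviation I split on $\mathbf{1}_i(\hat{H}_i, G^{\mathrm{L}})$. If it is $1$, the server charges $\bar{v}_i^{\mathrm{L}}$, monotonicity gives $\mathbf{1}_i(H_i, G^{\mathrm{L}})=1$, and Condition~3 yields lying utility $v_i - \bar{v}_i^{\mathrm{L}} \leq v_i - \bar{v}_i^{\mathrm{T}}$. If it is $0$ but $\mathbf{1}_i(H_i, G^{\mathrm{L}}) = 1$, Condition~4 forces $\bar{v}_i^{\mathrm{T}} = 0$, so the truthful utility is already $v_i$, tying the unpaid-for lying utility. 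Otherwise the client cannot decode under the lie and gets~$0$. Next I would address $v_i < \bar{v}_i^{\mathrm{T}}$, where truthful reporting yields utility~$0$ and I must rule out any profitable deviation. If $\mathbf{1}_i(\hat{H}_i, G^{\mathrm{L}}) = 1$, the threshold property combined with Condition~3 gives $\hat{v}_i > \bar{v}_i^{\mathrm{L}} \geq \bar{v}_i^{\mathrm{T}} > v_i$, so the lying utility $v_i - \bar{v}_i^{\mathrm{L}}$ is strictly negative. If $\mathbf{1}_i(\hat{H}_i, G^{\mathrm{L}}) = 0$ but $\mathbf{1}_i(H_i, G^{\mathrm{L}}) = 1$, Condition~4 forces $\bar{v}_i^{\mathrm{T}} = 0$, contradicting $v_i < \bar{v}_i^{\mathrm{T}}$, so that subcase is vacuous; in the remaining subcase the lie gives utility~$0$. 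The boundary case $v_i = \bar{v}_i^{\mathrm{T}}$ is absorbed by either branch since both yield truthful utility~$0$.

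The main obstacle I expect is the interplay between the two deception channels. Understating side information simultaneously (i)~lowers the posted threshold the client has to clear, and (ii)~can cause the server to design a matrix $G^{\mathrm{L}}$ that the client unexpectedly decodes using hidden side information it did not declare, with no payment charged. Classical single-parameter arguments such as \cite[Lemma~11.9]{nisan2007algorithmic} handle the valuation-lying direction but give no protection against this combinatorial ``free decoding'' effect; Condition~4 is exactly the extra structural hypothesis that neutralises it, and checking that it does so is the only step in the argument that is not a routine reduction to classical mechanism design.
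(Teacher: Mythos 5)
Your proof is correct, and it reaches the same conclusion from the same ingredients (decoding monotonicity in side information, the threshold/payment structure of Conditions~1--2, and Conditions~3--4), but it is organized differently from the paper's argument. The paper first makes a global dichotomy on whether the coding scheme can \emph{ever} construct a matrix with $\mathbf{1}_i(\hat{H}_i,G)=0$ but $\mathbf{1}_i(H_i,G)=1$: in that case Condition~4 gives truthful utility $v_i$, which trivially dominates any lie; otherwise the two indicators coincide for all reports, and the paper proves truthfulness by chaining two separate steps --- truthful valuation is optimal for each \emph{fixed} reported $\hat{H}_i$ (three-way split on $v_i$ versus the threshold), and then, given the true valuation, the complete side information is optimal (another three-way split using Condition~3). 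You instead compare the truthful utility directly against an arbitrary joint deviation $(\hat{v}_i,\hat{H}_i)$, splitting on $v_i$ versus $\bar{v}_i(\hat{\mathbf{V}}_{-i},\{H_i,\hat{\mathbf{H}}_{-i}\})$ and on the deviation's decode pattern, and you invoke Condition~4 only for those specific deviations that achieve ``free decoding'' through hidden side information. This one-shot comparison avoids both the paper's outer dichotomy and its intermediate hybrid report (true valuation, understated side information), so it is somewhat more direct; the paper's two-step chaining, on the other hand, makes the relation to the classical single-parameter argument of \cite[Lemma~11.9]{nisan2007algorithmic} (valuation truthfulness first, side-information truthfulness layered on top) more explicit. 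Two cosmetic points: in the case $v_i<\bar{v}_i(\hat{\mathbf{V}}_{-i},\{H_i,\hat{\mathbf{H}}_{-i}\})$ with $\mathbf{1}_i(\hat{H}_i,G^{\mathrm{L}})=1$, you can only infer $\hat{v}_i \geq \bar{v}_i(\hat{\mathbf{V}}_{-i},\{\hat{H}_i,\hat{\mathbf{H}}_{-i}\})$ (equality is possible under the tie rule), but your conclusion needs only Condition~3 and $v_i \geq 0$, so nothing breaks; and the boundary case $v_i=\bar{v}_i(\hat{\mathbf{V}}_{-i},\{H_i,\hat{\mathbf{H}}_{-i}\})$ should be spelled out as the same three deviation subcases with weak inequalities, which is exactly what the paper does explicitly.
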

\begin{proof}
See Appendix \ref{appendix:theorem:truthfulness}.
\end{proof}

The first two conditions in the above theorem claim that a client cannot affect the payment by  lying about the valuation of the data chunk it wants, because its payment depends on  the valuations  submitted by other clients. The third condition claims that  a client can minimize its payment when  submitting its complete side information. The fourth condition further  considers the case when the server cannot fully manage the recovery of the data chunk a client wants, i.e., the server decides not to recover $w_i$ (based on $\hat{v}_{i}$ and $\hat{H}_i$) but client $c_i$ can still recover it by the hidden side information $H_i-\hat{H}_i$. For that case, the fourth condition  claims that the client can obtain the data chunk it wants for free when  submitting its true side information. The theorem will be used later to establish the truthfulness of the proposed mechanisms. We remark that the theorem generalizes the  sufficient condition in \cite[Theorem~9.36 or Lemma~11.9]{nisan2007algorithmic} to the case when a client can lie about the side information it has.

Note that Theorem~\ref{theorem:truthfulness} assumes $\hat{H}_i \subseteq H_i$ for all $i$. To avoid the case when $\hat{H}_i \not\subseteq H_i$ (i.e., client $c_i$ announces  data chunks it does not really have as its side information), the server can use a hash function for \textit{validating} if a client really has the content of a data chunk. The hash function  takes the index and the first few bits of a data chunk as an input, performs some operations on it, and returns a value as the output.  For example, the hash function can be SHA-256, which has been widely used in the Bitcoin protocol for \textit{validating} a transaction \cite{nakamoto2019bitcoin} An important property of the hash function is that  a slightly different input can produce a  completely different output value. That is, the hash output value for data chunk $d_i$ is different for all $i$.  The server can compute the mapping between the data chunks in set $\mathbf{D}$ and their respective hash output values in advance. Then, through a hash output value, the server can know the associated input data chunk. 

Thus, the server asks each client to submit the hash output values,  instead of submitting the indices of its side information.  If a client wants to lie to the server that it has a data chunk in set $\mathbf{D}-H_i$, it could only take the index and some  randomly guessed bits (as it does not really have the content of that data chunk) as an input of the hash function. However, because the random bits are unlikely to be exactly the same as those bits in a data chunk in set $\mathbf{D}-H_i$, the hash output value is unlikely to be that produced by a data chunk in set $\mathbf{D}-H_i$. 
Through the hash output values, the server can identify the side information that a client announces but it does not really have. Thus,  with the assist of the hash function techniques, a client does not want to lie about a data chunk it does not really have as its side information. The rest of this paper assumes $\hat{H}_i \subseteq H_i$ for all $i$ and aims to motivate each client to submit its \textit{complete} side information so that  $\hat{H}_i = H_i$ for all~$i$.

\section{VCG-based mechanism design} 
\label{section:vcg-coding}

This section proposes an optimal truthful mechanism leveraging the celebrated Vickrey-Clarke-Groves (VCG) approach \cite{nisan2007algorithmic}. Note that the original VCG mechanism  provides an auction participant with an incentive to reveal only the true valuation of each item. However, our Theorem~\ref{theorem:truthful-coding-pricing} will show that the proposed VCG-based mechanism can motivate each selfish client to reveal not only the true valuation of the data chunk it wants but also its complete side information.

Our VCG-based mechanism uses the following  function  
\begin{align}
w(\hat{\mathbf{V}}, \hat{\mathbf{H}},G)= \sum_{i=1}^{n} \hat{v}_i \cdot \mathbf{1}_i(\hat{H}_i, G) -\eta(G). \label{eq:w}
\end{align}
The function $w(\hat{\mathbf{V}}, \hat{\mathbf{H}},G)$ is  the social welfare in Eq.~(\ref{eq:social-welfare}) computed by the information $\hat{\mathbf{V}}$ and $\hat{\mathbf{H}}$ obtained by the server.  Then, we propose our VCG-based mechanism as follows, including a VCG-based coding scheme and a VCG-based payment scheme, for computing a coding matrix and payments when obtaining valuation set $\hat{\mathbf{V}}$ and side information set $\hat{\mathbf{H}}$ from the clients. 

\textbf{VCG-based coding scheme:} Identify a coding matrix $G^*$ for maximizing  function $w(\hat{\mathbf{V}}, \hat{\mathbf{H}}, G)$:
\begin{align} \label{eq:vcg-coding}
	G^* \in \arg\max_{G \in \mathbf{G}} w(\hat{\mathbf{V}},\hat{\mathbf{H}}, G),
\end{align}
where $\mathbf{G}$ is a set of the coding matrices that can be selected.
If there is a tie in Eq.~(\ref{eq:vcg-coding}), it is broken arbitrarily.

\textbf{VCG-based payment scheme:} If $\mathbf{1}_i(\hat{H}_i, G^*)=0$ for coding matrix $G^*$ computed by Eq.~(\ref{eq:vcg-coding}), then  client $c_i$ is charged $p_i=0$; otherwise, it is charged 
\begin{align} \label{eq:vcg-pricing}
	p_i = \max_{G \in \mathbf{G}} w(\{0, \hat{\mathbf{V}}_{-i}\}, \hat{\mathbf{H}}, G) - w(\{0, \hat{\mathbf{V}}_{-i}\}, \hat{\mathbf{H}}, G^*),
\end{align}
where valuation set $\{0, \hat{\mathbf{V}}_{-i}\}$ is valuation set $\hat{\mathbf{V}}$ with valuation $v_i$ being substituted by zero.  Because of the optimality of the first term in Eq.~(\ref{eq:vcg-pricing}), payment $p_i$ is non-negative. The idea underlying Eq.~(\ref{eq:vcg-pricing}) is to calculate  threshold $\bar{v}_i(\hat{\mathbf{V}}_{-i},\hat{\mathbf{H}})$. Suppose that client $c_i$ submits valuation $\tilde{v}_i > \max_{G \in \mathbf{G}} w(\{0, \hat{\mathbf{V}}_{-i}\}, \hat{\mathbf{H}}, G) - w(\{0, \hat{\mathbf{V}}_{-i}\}, \hat{\mathbf{H}}, G^*)$. Then, we can obtain
		\begin{align*}
		w(\{\tilde{v}_i, \hat{\mathbf{V}}_{-i}\}, \hat{\mathbf{H}}, G^*)\mathop{=}^{(a)}&w(\{0, \hat{\mathbf{V}}_{-i}\}, \hat{\mathbf{H}}, G^*)+\tilde{v}_i\\
		\mathop{>}^{(b)} & \max_{G \in \mathbf{G}} w(\{0, \hat{\mathbf{V}}_{-i}\}, \hat{\mathbf{H}}, G), 
	\end{align*}	
where (a) is from Eq.~(\ref{eq:w}); (b) is from the assumption of $\tilde{v}_i$. In the above inequality,  the term $\max_{G \in \mathbf{G}} w(\{0, \hat{\mathbf{V}}_{-i}\}, \hat{\mathbf{H}}, G)$ is the maximum function value among all possible coding matrices $G$ such that $\mathbf{1}_i(\hat{H}_i, G)=0$. Moreover, we want to emphasize that Eq.~(\ref{eq:vcg-pricing}) calculates payment $p_i$ for coding matrix $G^*$ such that $\mathbf{1}_i(\hat{H}_i, G^*)=1$. Thus, the inequality implies that  the VCG-based coding scheme constructs a coding matrix $G$ such that $\mathbf{1}_i(\hat{H}_i, G)=1$ when client $c_i$ submits a valuation greater than the value computed by Eq.~(\ref{eq:vcg-pricing}), implying the first and second conditions in Theorem~\ref{theorem:truthfulness}.

The idea behind why the VCG-based mechanism can satisfy the third in Theorem~\ref{theorem:truthfulness} is that for a fixed valuation $\tilde{v}_i$, if client $c_i$ reveals more data chunks as its side information $\hat{H}_i$, the VCG-based coding scheme is more likely to construct a coding matrix $G^*$ such that $\mathbf{1}_i(\hat{H}_i, G^*)=1$. That is, the threshold decreases with the size $|\hat{H}_i|$, implying the third condition in Theorem~\ref{theorem:truthfulness}.  Moreover, if the VCG-based coding scheme can construct a coding matrix $G^*$ such that $\mathbf{1}_i(\hat{H}_i,G^*)=0$ but $\mathbf{1}_i(H_i,G^*)=1$, then when client $c_i$ submits zero valuation $\tilde{v}_i=0$ and its complete side information $\hat{H}_i=H_i$, the  coding matrix $G^*$ is also a maximizer in Eq.~(\ref{eq:vcg-coding}) (and $\mathbf{1}_i(H_i, G^*)=1$). That is, the threshold is zero, implying the fourth condition in Theorem~\ref{theorem:truthfulness}.

The next theorem  establishes the truthfulness and the optimality of the proposed VCG-based mechanism.

\begin{theorem} \label{theorem:truthful-coding-pricing}  
The VCG-based mechanism is an optimal  truthful mechanism.
\end{theorem}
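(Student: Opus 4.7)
The plan is to invoke Theorem~\ref{theorem:truthfulness} for the truthfulness part and then read off optimality from the definition of the coding rule. To set up the key decomposition, observe that for any $G \in \mathbf{G}$, Eq.~(\ref{eq:w}) gives
\begin{align*}
w\bigl(\{\hat{v}_i,\hat{\mathbf{V}}_{-i}\},\hat{\mathbf{H}},G\bigr) = A(G) + \hat{v}_i\cdot \mathbf{1}_i(\hat{H}_i,G), \quad A(G):=\sum_{j\neq i}\hat{v}_j \mathbf{1}_j(\hat{H}_j,G)-\eta(G),
\end{align*}
so that $A(G)$ does not depend on $\hat{v}_i$ or $\hat{H}_i$. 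Consequently, the VCG maximizer $G^*$ in Eq.~(\ref{eq:vcg-coding}) chooses a $G$ with $\mathbf{1}_i(\hat{H}_i,G^*)=1$ precisely when $\hat{v}_i$ exceeds the gap
\begin{align*}
\bar{v}_i(\hat{\mathbf{V}}_{-i},\hat{\mathbf{H}}) := \max_{G:\,\mathbf{1}_i(\hat{H}_i,G)=0} A(G) \;-\; \max_{G:\,\mathbf{1}_i(\hat{H}_i,G)=1} A(G),
\end{align*}
with the opposite strict inequality guaranteeing $\mathbf{1}_i(\hat{H}_i,G^*)=0$. This immediately gives Condition~1 of Theorem~\ref{theorem:truthfulness}. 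For Condition~2, I would rewrite Eq.~(\ref{eq:vcg-pricing}) using the decomposition above: when $\mathbf{1}_i(\hat{H}_i,G^*)=1$, $G^*$ also maximizes $A(G)$ over $\{G:\mathbf{1}_i(\hat{H}_i,G)=1\}$, so $p_i=\max_G A(G) - \max_{G:\mathbf{1}_i=1} A(G)$, which agrees with $\bar{v}_i(\hat{\mathbf{V}}_{-i},\hat{\mathbf{H}})$ (truncated at zero, which is the regime in which $G^*$ is actually picked from the $\mathbf{1}_i=1$ family).

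The monotonicity of decodability in the side information is the engine for Conditions~3 and~4: if $\hat{H}_i\subseteq H_i$, then any decoding map that recovers $w_i$ from $(t_1,\dots,t_{\eta(G)},\hat{H}_i)$ also recovers it from $(t_1,\dots,t_{\eta(G)},H_i)$, so
\begin{align*}
\bigl\{G:\mathbf{1}_i(\hat{H}_i,G)=1\bigr\} \;\subseteq\; \bigl\{G:\mathbf{1}_i(H_i,G)=1\bigr\}.
\end{align*}
Enlarging the ``$=1$'' family can only increase $\max_{\mathbf{1}_i=1} A(G)$ and can only decrease $\max_{\mathbf{1}_i=0} A(G)$, so $\bar{v}_i(\hat{\mathbf{V}}_{-i},\{H_i,\hat{\mathbf{H}}_{-i}\})\leq \bar{v}_i(\hat{\mathbf{V}}_{-i},\{\hat{H}_i,\hat{\mathbf{H}}_{-i}\})$, giving Condition~3. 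For Condition~4, suppose the coding scheme can produce some $G$ with $\mathbf{1}_i(\hat{H}_i,G)=0$ and $\mathbf{1}_i(H_i,G)=1$. Because this $G$ is a VCG maximizer for the reported profile, the optimality inequalities force $A(G)\geq A(G')+\hat{v}_i\mathbf{1}_i(\hat{H}_i,G')$ for every $G'\in\mathbf{G}$, hence $A(G)=\max_{G'\in\mathbf{G}} A(G')$. Substituting the truthful report $(\tilde v_i=0,\hat H_i=H_i)$ into Eq.~(\ref{eq:vcg-coding}), the objective reduces to $A(G')$, so $G$ is again a maximizer; breaking ties in favor of $G$ (or any other $A$-maximizer with $\mathbf{1}_i(H_i,\cdot)=1$) yields $\bar{v}_i(\hat{\mathbf{V}}_{-i},\{H_i,\hat{\mathbf{H}}_{-i}\})=0$. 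Applying Theorem~\ref{theorem:truthfulness} then closes the truthfulness half.

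Optimality is then immediate: under the dominant strategy just established, $\hat{\mathbf{V}}=\mathbf{V}$ and $\hat{\mathbf{H}}=\mathbf{H}$, so $w(\hat{\mathbf{V}},\hat{\mathbf{H}},G)$ coincides with the social welfare in Eq.~(\ref{eq:social-welfare}) for every $G$, and the coding rule of Eq.~(\ref{eq:vcg-coding}) selects a $G^*$ achieving the maximum by construction; no coding scheme with knowledge of $(\mathbf{V},\mathbf{H})$ can do better. I expect the main obstacle to be Condition~4: it requires the somewhat delicate observation that $G^*$ optimal at $(\hat{v}_i,\hat{H}_i)$ with $\mathbf{1}_i(\hat{H}_i,G^*)=0$ must actually maximize $A$ unconditionally, and then an appeal to tie-breaking to ensure this same $G^*$ remains selected at the truthful report; the other three conditions follow cleanly from the additive decomposition of $w$ and the subset monotonicity of $\mathbf{1}_i(\cdot,G)$.
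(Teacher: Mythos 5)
Your proposal is correct and follows essentially the same route as the paper's own proof: it verifies the four conditions of Theorem~\ref{theorem:truthfulness} using the additive decomposition of $w$ in client $c_i$'s reported valuation (the paper's Eq.~(\ref{eq:w-function-relation})), derives the same explicit threshold as the paper's Eq.~(\ref{eq:threshold}), uses the inclusion $\hat{\mathbf{G}}_i \subseteq \mathbf{G}_i$ for Condition~3, and the same ``the reported-profile maximizer remains (strictly) optimal under the truthful report'' argument for Condition~4, after which optimality follows exactly as in the paper from Eq.~(\ref{eq:vcg-coding}) applied to the true information.
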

\begin{proof}
Appendix \ref{appendix:truthful-coding-pricing} confirms that the VCG-based mechanism is truthful (by Theorem~\ref{theorem:truthfulness}). Then, all clients submit the true valuations of the data chunks they want and their complete side information. Moreover, by Eq.~(\ref{eq:vcg-coding}), the VCG-coding scheme  maximizes the social welfare. Thus, the VCG-based mechanism is an optimal truthful mechanism. 
\end{proof}

The proposed VCG-based mechanism involves the combinatorial optimization problems in both Eqs.~(\ref{eq:vcg-coding}) and~(\ref{eq:vcg-pricing}). Appendix \ref{appendix:proposition:general-hard} shows (by a reduction from the original index coding problem) that the combinatorial optimization problems are NP-hard. To develop computationally efficient mechanisms, the rest of this paper focuses on \textit{sparse coding schemes}, which construct coding matrices $G$ (over $GF(2)$) such that at most two coding coefficients in each coding vector $G_i$ is nonzero. 	That is, a sparse coding scheme combines at most two data chunks based on a small field size $GF(2)$ for each transmission, resulting in a smaller packet size and a lower  encoding/decoding complexity. Some  sparse coding schemes have been also developed (e.g., \cite{chaudhry2011finding}) for approximately minimizing the number of transmissions in the original index coding problem.

This paper will  consider two different scenarios separately: the \textit{multiple unicast scenario} and the \textit{multiple multicast scenario}. While in the  multiple unicast scenario each client  wants a different data chunk (i.e., $n=m$), in the  multiple multicast scenario many clients can request the same data chunk (i.e., $n \geq m$). Moreover, this paper will also  consider two different decoding schemes separately: the \textit{instant decoding scheme} and the \textit{general decoding scheme}. While an instant decoding scheme can combine each \textit{individual} transmission (from the server) with its side information but cannot combine multiple transmissions, a general decoding scheme can combine more than one transmission with its side information. For example, in Fig.~\ref{fig:index-coding},  client $c_2$ can instantly decode  data chunk $d_2$ by   $d_1+d_2$; however,  it cannot instantly decode  data chunk $d_1$  by    $d_1+d_2$ or  $d_2+d_3$ separately (but it can decode $d_1$ by combining both  $d_1+d_2$ and $d_2+d_3$). The instant decoding scheme requires a client to store at most one transmission  for recovering the  data chunk it wants, resulting in a smaller buffer size and a lower decoding complexity. This paper has those partitions such that each section serves a certain set of clients.

Section~\ref{section:muliple-unicast} develops computationally efficient mechanisms for the multiple unicast scenario. While Section~\ref{subsection:multiple-unicast-instant} proposes an algorithm  optimally solving Eqs.~(\ref{eq:vcg-coding}) and~(\ref{eq:vcg-pricing}) in polynomial time  when the set $\mathbf{G}$ is restricted to sparse and instantly decodable coding matrices, Section~\ref{subsection:approximation-alg1} establishes that the combinatorial optimization problems in Eqs.~(\ref{eq:vcg-coding}) and (\ref{eq:vcg-pricing}) are still NP-hard  when the set $\mathbf{G}$ is restricted to only sparse  coding matrices. To cope with the NP-hardness, Sections~\ref{subsection:approximation-alg1} and~\ref{subsection:approximation-alg2} develop two approximate truthful mechanisms. Moreover, Section~\ref{subsection:complexity} analyzes the computational complexities of the proposed polynomial-time coding schemes. Subsequently, Section~\ref{section:multiple-multicast} shows that the combinatorial optimization problems in Eqs.~(\ref{eq:vcg-coding}) and~(\ref{eq:vcg-pricing})  for the multiple multicast scenario are not only NP-hard but also NP-hard to approximate even using those simple sparse coding schemes.  Table~\ref{table:alg-hard} summarizes our main results where the symbol $C$ will be defined soon.

\begin{table*}[!t]
	\footnotesize
	\caption{Summary of the proposed mechanisms}
	\begin{center}
		\begin{tabular}{| >{\raggedright\arraybackslash}m{2cm}| >{\raggedright\arraybackslash}m{2.5cm}| >{\raggedright\arraybackslash}m{3cm}| >{\raggedright\arraybackslash}m{2.2cm}| >{\raggedright\arraybackslash}m{2.5cm}| >{\raggedright\arraybackslash}m{2cm}|}
			\hline
			\textbf{proposed coding schemes} & \textbf{applicable scenarios} & \textbf{restrictions on  set $\mathbf{G}$} &\textbf{optimality results in the restricted  set $\mathbf{G}$}& \textbf{complexity results} & \textbf{associated payment scheme}\\
			\hline
			VCG-based coding scheme& multiple unicast, multiple multicast & general set & optimal &NP-hard, NP-hard to approximate in multiple multicast & VCG-based payment scheme\\
			\hline
			VCG-based coding scheme along with Alg.~\ref{alg:poly-time} & multiple unicast & sparse and instantly decodable coding matrices &optimal & $O(n^3)$ & VCG-based payment scheme along with Alg.~\ref{alg:poly-time}  \\
			\hline			Alg.~\ref{alg:greedy-vcg-coding} & multiple unicast  & sparse coding matrices  & $\max|C|$-approximate &  $O(n^4)$ & Alg.~\ref{alg:greedy-vcg-pricing}\\
			\hline
			revised Alg.~\ref{alg:greedy-vcg-coding} with Alg.~\ref{alg:max-sqrt-weight} & multiple unicast& sparse coding matrices & $\sqrt{n}$-approximate &  $O(n^5)$ & revised Alg.~\ref{alg:greedy-vcg-pricing}  with Alg.~\ref{alg:max-sqrt-weight} \\
			\hline					
		\end{tabular}
		\label{table:alg-hard}
	\end{center}
\end{table*}

%
%
%

\section{Mechanism design for the multiple unicast scenario} \label{section:muliple-unicast}

This section develops computationally efficient truthful mechanisms  for the multiple unicast scenario by  proposing polynomial-time algorithms for  (optimally or approximately) solving Eq.~(\ref{eq:vcg-coding}) within the set $\mathbf{G}$ of sparse coding matrices. We remark that an approximate solution to Eqs.~(\ref{eq:vcg-coding}) and~(\ref{eq:vcg-pricing}) is no longer a truthful mechanism (see Example~\ref{ex:counter} later). Thus, we devise alternative payment schemes to substitute the  previously proposed VCG-based payment scheme for guaranteeing the truthfulness (see Sections~\ref{subsection:approximation-alg1} and~\ref{subsection:approximation-alg2} later).

To solve Eq.~(\ref{eq:vcg-coding}),  we introduce  a  \textit{weighted dependency graph} constructed as follows:  given valuation set $\hat{\mathbf{V}}$ and side information set $\hat{\mathbf{H}}$, 
\begin{itemize}
	\item for each client $c_i$, construct  a  vertex $\lambda_i$;
	\item for any two clients $c_i$ and $c_j$ such that $w_i \in \hat{H}_j$, construct a directed arc $(\lambda_i, \lambda_j)$;
	\item associate each arc $(\lambda_i, \lambda_j)$ with an arc weight  $\gamma_{(\lambda_i, \lambda_j)}=\hat{v}_i$. 
\end{itemize}
The weighted dependency graph generalizes the dependency graph in \cite{birk1998informed} to a weighted version. 
We denote the weighted dependency graph by $\mathcal{G}(\mathbf{\Lambda}, \mathbf{A}, \mathbf{\Gamma})$, where $\mathbf{\Lambda}$ is the vertex set, $\mathbf{A}$ is the arc set, and $\mathbf{\Gamma}$ is the arc weight set.   Fig.~\ref{fig:dep-graph} illustrates the weighted dependency graph for the instance in  Fig.~\ref{fig:index-coding}. 

\begin{figure}[t]
	\begin{center}
		\includegraphics[width=.3\textwidth]{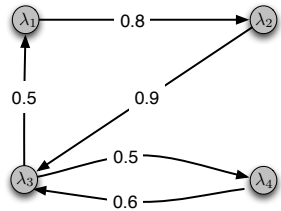}
		\caption{Weighted dependency graph for the instance  in Fig. \ref{fig:index-coding} when $\hat{v}_1=0.8$, $\hat{v}_2=0.9$, $\hat{v}_3=0.5$, $\hat{v}_4=0.6$, and $\hat{H}_i=H_i$ for all $i$.}
		\label{fig:dep-graph}
	\end{center}
\end{figure}

We make two  observations about weighted dependency graphs:
\begin{itemize}
		\item \textit{for the general decoding scheme}, the server can satisfy all clients in a cycle $C$ in graph $\mathcal{G}(\mathbf{\Lambda}, \mathbf{A}, \mathbf{\Gamma})$ with $|C|-1$  sparse coded data chunks;
	
	\item \textit{for the instant decoding scheme}, 	the server can satisfy all clients in a cycle $C$ with $|C|=2$ in graph $\mathcal{G}(\mathbf{\Lambda}, \mathbf{A}, \mathbf{\Gamma})$ with $|C|-1=1$ sparse coded data chunk.
\end{itemize} 
For example, with the general decoding scheme, clients $c_1$, $c_2$, and $c_3$ in cycle $(\lambda_1, \lambda_2, \lambda_3)$ of Fig.~\ref{fig:dep-graph}  can recover the data chunks they want with  $d_1 +d_2$ and $d_2 +d_3$. In contrast, with the instant decoding scheme, clients $c_1$, $c_2$, and $c_3$  cannot recover the data chunks they want with  any two transmissions among $d_1 +d_2$, $d_2 +d_3$, or $d_3+d_1$. However, clients $c_3$ and $c_4$ in cycle $(\lambda_3, \lambda_4)$ of Fig.~\ref{fig:dep-graph}  can instantly decode the data chunks they want with   $d_3 +d_4$.

We say that a coding scheme \textit{encodes along cycle $C$} in weighted dependency graph $\mathcal{G}(\mathbf{\Lambda}, \mathbf{A}, \mathbf{\Gamma})$ if it constructs  $|C|-1$ sparse coded data chunks for satisfying all clients in the cycle. Note that, for the instant decoding scheme, a coding scheme can  encode along cycle $C$ with $|C|=2$ only, according to the above observations. While encoding along a set of (vertex) disjoint cycles can satisfy all clients in those cycles with fewer transmissions than the number of the satisfied clients, all other sparse codes with no cycle being involved cannot (see \cite{chaudhry2011finding} for details). Those  transmissions with no cycle being involved can be substituted by uncoded data chunks without changing the function value in Eq.~(\ref{eq:w}). Thus, we can focus on sparse coding schemes that encodes along disjoint cycles and additionally transmits uncoded data chunk $w_i$ if vertex $\lambda_i$ is not in those cycles (i.e., client $c_i$ cannot recover data chunk $w_i$ with the coded data chunks along the cycles) but $\hat{v}_i \geq 1$.

We aim to identify a sparse coding matrix $G$ including the coding vectors  along a set $\mathbf{C}$ of disjoint cycles and  those of uncoded data chunk $w_i$ if $\lambda_i$ in not in those cycles but $\hat{v}_i \geq 1$, for maximizing 
\begin{align}
w(\hat{\mathbf{V}}, \hat{\mathbf{H}}, G)=\sum_{C \in \mathbf{C}}\underbrace{\left( \sum_{\lambda_i \in C} \hat{v}_i-(|C|-1)\right)}_{(a)}+\underbrace{\sum_{\substack{\lambda_i \notin C, \forall C\in \mathbf{C} \\ \hat{v}_i\geq 1}} (\hat{v}_i-1)}_{(b)}, \label{eq:w-cycle1}
\end{align}
where (a) is because encoding along cycle $C$ satisfies all clients in the cycle with $|C|-1$ transmissions; (b) considers uncoded data chunks for those clients that submit valuations of no less than one but cannot recover the data chunks they want with the coded data chunks along the cycles.  Then, we use the notation $[x]^+_1=\min\{x,1\}$ to represent the truncation of $x$ toward one; in particular, we can re-write Eq.~(\ref{eq:w-cycle1}) in terms of  truncated valuations as follow:
\begin{align}
w(\hat{\mathbf{V}}, \hat{\mathbf{H}}, G)=\sum_{C \in \mathbf{C}}\left( \sum_{\lambda_i \in C} [\hat{v}_i]^+_1-(|C|-1)\right)+\sum_{\substack{\lambda_i \notin C, \forall C\in \mathbf{C} \\ \hat{v}_i\geq 1}} ([\hat{v}_i]^+_1-1)+\underbrace{\sum_{\hat{v}_i \geq 1} (\hat{v}_i-1)}_{(a)}, \label{eq:w-cycle2}
\end{align}
where (a) adds back the deducted value (caused by the truncation). Because the value of the term  $\sum_{\substack{\lambda_i \notin C, \forall C\in \mathbf{C} \\ \hat{v}_i\geq 1}} ([\hat{v}_i]^+_1-1)$ is zero and the value of the term $\sum_{\hat{v}_i \geq 1} (\hat{v}_i-1)$  is constant, it suffices to  maximize  $\sum_{C \in \mathbf{C}}(\sum_{\lambda_i \in C} [\hat{v}_i]^+_1-(|C|-1))$. 

To that end,  we associate  each cycle $C$  in weighted dependency graph $\mathcal{G}(\mathbf{\Lambda}, \mathbf{A}, \mathbf{\Gamma})$ with a cycle weight $\gamma(C)$ defined by
\begin{align}
\gamma(C)=\sum_{a \in C} [\gamma_a]^+_1- (|C|-1),  \label{eq:cycle-weight}
\end{align}
which implies the difference between the total truncated valuation submitted by the clients  in cycle $C$ and the  cost of encoding along cycle $C$. Note that, for the instant decoding scheme, we  assign cycle weight $\gamma(C)$ to cycles $C$ with $|C|=2$ only. Then,  we can turn our attention to a maximum weight cycle packing problem: identifying a set $\mathbf{C}$ of disjoint cycles for maximizing the total cycle weight $\sum_{C \in \mathbf{C}} \gamma(C)$ in graph $\mathcal{G}(\mathbf{\Lambda}, \mathbf{A}, \mathbf{\Gamma})$. 

Section~\ref{subsection:multiple-unicast-instant} optimally solves our maximum weight cycle packing problem for the instant decoding scheme. For the general decoding scheme, Sections~\ref{subsection:approximation-alg1} and~\ref{subsection:approximation-alg2} propose two approximate solutions to our maximum weight cycle packing problem and their respective payment schemes as the incentives.

\subsection{The instant decoding scheme} \label{subsection:multiple-unicast-instant}
This section develops Alg.~\ref{alg:poly-time} for optimally solving Eqs.~(\ref{eq:vcg-coding})  and (\ref{eq:vcg-pricing}) when set $\mathbf{G}$ is restricted to those sparse and instantly decodable coding matrices. Note that Theorem~\ref{theorem:truthful-coding-pricing} holds for any set $\mathbf{G}$. Thus, the VCG-based mechanism along with Alg.~\ref{alg:poly-time} is an optimal truthful mechanism in the set $\mathbf{G}$ of sparse and instantly decodable coding matrices.

Given valuation set $\hat{\mathbf{V}}$ and side information set $\hat{\mathbf{H}}$, Alg.~\ref{alg:poly-time} aims to construct a sparse and instantly decodable coding matrix $G$   for maximizing function $w(\hat{\mathbf{V}}, \hat{\mathbf{H}}, G)$ in Eq.~(\ref{eq:w-cycle2}) in polynomial time. To that end, Alg.~\ref{alg:poly-time} constructs weighted dependency graph $\mathcal{G}(\mathbf{\Lambda}, \mathbf{A}, \mathbf{\Gamma})$ in Line~\ref{alg:poly-time:g1}, aiming to identify a set $\mathbf{C}$ of disjoint cycles $C$ with $|C|=2$ for maximizing total cycle weight $\sum_{C \in \mathbf{C}} \gamma(C)$ in set $\mathbf{C}$. To identify such a set of disjoint cycles, Alg.~\ref{alg:poly-time} constructs an undirected graph $\mathcal{G}(\tilde{\mathbf{\Lambda}},\tilde{\mathbf{E}}, \tilde{\mathbf{\Lambda}})$ in Line~\ref{alg:poly-time:g2} with the following procedure:
\begin{itemize}
	\item for each vertex $\lambda\in \mathbf{\Lambda}$, construct a vertex $\tilde{\lambda}\in \tilde{\mathbf{\Lambda}}$;
	\item for any two vertices $\lambda_i, \lambda_j \in \mathbf{\Lambda}$ such that both arcs $(\lambda_i, \lambda_j)$ and $(\lambda_j, \lambda_i)$ are in set $\mathbf{A}$, construct an edge $(\tilde{\lambda}_i, \tilde{\lambda}_j) \in \tilde{\mathbf{E}}$;
	\item associate each edge $(\tilde{\lambda}_i, \tilde{\lambda}_j) \in \tilde{\mathbf{E}}$ with an edge weight $\tilde{\gamma} \in \tilde{\mathbf{\Gamma}}$ such that $\tilde{\gamma}_{(\tilde{\lambda}_i, \tilde{\lambda}_j)}=[\gamma_{(\lambda_i,\lambda_j)}]^+_1+[\gamma_{(\lambda_j,\lambda_i)}]^+_1-1$.
\end{itemize}
With the construction, each cycle $C$ with $|C|=2$ in graph $\mathcal{G}(\mathbf{\Lambda}, \mathbf{A}, \mathbf{\Gamma})$  corresponds to  an edge in graph $\mathcal{G}(\tilde{\mathbf{\Lambda}},\tilde{\mathbf{E}}, \tilde{\mathbf{\Gamma}})$;  in particular, a set of disjoint cycles in  graph $\mathcal{G}(\mathbf{\Lambda}, \mathbf{A}, \mathbf{\Gamma})$  corresponds to  a matching in graph $\mathcal{G}(\tilde{\mathbf{\Lambda}},\tilde{\mathbf{E}}, \tilde{\mathbf{\Gamma}})$.
Moreover, a cycle weight in graph $\mathcal{G}(\mathbf{\Lambda}, \mathbf{A}, \mathbf{\Gamma})$ corresponds to the edge weight in graph $\mathcal{G}(\tilde{\mathbf{\Lambda}},\tilde{\mathbf{E}}, \tilde{\mathbf{\Gamma}})$. Thus, a set of disjoint cycles in graph $\mathcal{G}(\mathbf{\Lambda}, \mathbf{A}, \mathbf{\Gamma})$ for maximizing the total cycle weight corresponds to a maximum weight matching in graph $\mathcal{G}(\tilde{\mathbf{\Lambda}},\tilde{\mathbf{E}}, \tilde{\mathbf{\Gamma}})$. Alg.~\ref{alg:poly-time} identifies a maximum weight matching in graph $\mathcal{G}(\mathbf{\Lambda}, \mathbf{A}, \mathbf{\Gamma})$ in Line~\ref{alg:poly-time:max-matching} (by some polynomial-time algorithms like the Edmonds's algorithm \cite{galil1986efficient}). Subsequently, Alg.~\ref{alg:poly-time}  adds the coding vectors of the coded data chunks (along those cycles corresponding to the maximum weight matching) to coding matrix $G^*$ in Line~\ref{alg:poly-time:coding}. Finally, if client $c_i$ submitting valuation $\hat{v}_i \geq 1$ is not satisfied by the coding matrix constructed by the maximum weight matching,  then Alg.~\ref{alg:poly-time} adds the coding vector of  data chunk $w_i$ to coding matrix $G^*$ in Line~\ref{alg:poly-time:last}. The discussion in this paragraph  leads to the following result: Alg. \ref{alg:poly-time} can optimally solve the combinatorial optimization problems in Eqs. (\ref{eq:vcg-coding}) and (\ref{eq:vcg-pricing}) in polynomial time in the set $\mathbf{G}$ of sparse and instantly decodable coding matrices.

\begin{algorithm}[t]
	\SetCommentSty{text}
	\SetAlgoLined 
	\SetKwFunction{Union}{Union}\SetKwFunction{FindCompress}{FindCompress} \SetKwInOut{Input}{input}\SetKwInOut{Output}{output}
	
	\Input{Valuation set $\hat{\mathbf{V}}$ and  side information set $\hat{\mathbf{H}}$.}
	\Output{Sparse and instantly decodable coding matrix $G^*$ for maximizing function $w(\hat{\mathbf{V}}, \hat{\mathbf{H}}, G)$.}
	
	$G^* \leftarrow \emptyset$\;
	Construct  weighted dependency graph $\mathcal{G}(\mathbf{\Lambda}, \mathbf{A}, \mathbf{\Gamma})$\; \label{alg:poly-time:g1}
	
	Construct undirected auxiliary graph $\mathcal{G}(\tilde{\mathbf{\Lambda}},\tilde{\mathbf{E}}, \tilde{\mathbf{\Gamma}})$\;\label{alg:poly-time:g2}
	
	Find a maximum weight matching $M^*$ in $\mathcal{G}( \tilde{\mathbf{\Lambda}}, \tilde{\mathbf{E}}, \tilde{\mathbf{\Gamma}})$\; \label{alg:poly-time:max-matching}
	For each edge $(\tilde{\lambda}_i, \tilde{\lambda}_j) \in M^*$, add the coding vector of $w_i +w_j$ to coding matrix $G^*$\; \label{alg:poly-time:coding}

	For each vertex $\lambda_i \notin M^*$  but $\hat{v}_i\geq 1$, add the coding vector of data chunk $w_i$ to coding matrix $G^*$\; \label{alg:poly-time:last}
	
	\caption{Polynomial-time algorithm for solving Eqs.~(\ref{eq:vcg-coding}) and (\ref{eq:vcg-pricing}) in the multiple unicast scenario.}
	
	\label{alg:poly-time}
\end{algorithm}

\subsection{General decoding scheme: $\max |C|$-approximate truthful mechanism} \label{subsection:approximation-alg1}
Appendix~\ref{appendix:lemma:multiple-unicast-hard} shows  (by a reduction from the cycle packing problem \cite{cycle-packing}) that the combinatorial optimization problems in Eqs.~(\ref{eq:vcg-coding}) and~(\ref{eq:vcg-pricing}) in the set $\mathbf{G}$ of sparse and general decodable coding matrices are still NP-hard. Thus, this section and next section develop two algorithms (Alg.~\ref{alg:greedy-vcg-coding} and its further modification)  for approximately solving  Eq.~(\ref{eq:vcg-coding}) in the set $\mathbf{G}$ of sparse coding matrices. To that end, Alg.~\ref{alg:greedy-vcg-coding} constructs weighted dependency graph $\mathcal{G}(\mathbf{\Lambda}, \mathbf{A}, \mathbf{\Gamma})$ in Line~\ref{algorithm1:graph}, aiming to approximately solving our maximum weight cycle packing problem.

\begin{algorithm}[t]
\SetCommentSty{text}
\SetAlgoLined 
\SetKwFunction{Union}{Union}\SetKwFunction{FindCompress}{FindCompress} \SetKwInOut{Input}{input}\SetKwInOut{Output}{output}

\Input{Valuation set $\hat{\mathbf{V}}$ and side information set $\hat{\mathbf{H}}$.}
\Output{Sparse coding matrix $G$.}

$G \leftarrow \emptyset$\;
 
Construct   weighted dependency graph $\mathcal{G}(\mathbf{\Lambda}, \mathbf{A}, \mathbf{\Gamma})$\;  \label{algorithm1:graph}

Associate each arc $a \in \mathbf{A}$ with arc cost $\zeta_a=1-[\gamma_a]^+_1$\; \label{algorithm1:cost}

 
\While{there is a cycle in the present graph $\mathcal{G}(\mathbf{\Lambda}, \mathbf{A}, \mathbf{\Gamma})$ whose cycle cost is less than or equal to one \label{algorithm1:condition}}{
	Find a cycle $C$ in the present graph $\mathcal{G}(\mathbf{\Lambda}, \mathbf{A}, \mathbf{\Gamma})$ for minimizing cycle cost $\zeta(C)$\;\label{algorithm1:greedy}
%
	Add the coding vectors of the $|C|-1$ coded data chunks along   cycle $C$ to coding matrix $G$\; \label{algorithm1:add-coding}
	Remove all vertices in cycle $C$ and their incident arcs  from the  present graph $\mathcal{G}(\mathbf{\Lambda}, \mathbf{A}, \mathbf{\Gamma})$\; \label{algorihtm1:remove}
} 


\For{$i \leftarrow 1$ \KwTo $n$}{
	\If{$\lambda_i$ is not in the selected cycles but $\hat{v}_i \geq 1$}{
		Add the coding vector of  data chunk $w_i$ to coding matrix $G$\;\label{algorithm1:serve2} 
	}

}

\caption{$\max |C|$-approximate coding scheme for the multiple unicast scenario.}

\label{alg:greedy-vcg-coding}
\end{algorithm}

%


The idea underlying Alg.~\ref{alg:greedy-vcg-coding} is to iteratively identify  a maximum weight cycle in a greedy way. Note that, in general, identifying a maximum weight cycle in a graph is NP-hard \cite{schrijver2003combinatorial}. However, for our problem, we can observe that cycle weight $\gamma(C)$ of cycle $C$ in Eq.~(\ref{eq:cycle-weight}) can be written as 
\begin{align}
\gamma(C)=1- \sum_{a \in C}(1-[\gamma_a]^+_1). \label{eq:cycle-weight-2}
\end{align}
By associating each arc $a \in \mathbf{A}$ with an arc cost $\zeta_{a}=1-[\gamma_a]^+_1$, we can associate  each cycle $C$ with a cycle cost $\zeta(C)=\sum_{a \in C} \zeta_{a}$,
which is the total arc cost in  cycle $C$. Then, cycle weight $\gamma(C)$  in Eq.~(\ref{eq:cycle-weight-2}) becomes $\gamma(C)=1-\zeta(C)$. Removing the constant, a maximum weight cycle $C$ minimizes cycle cost $\zeta(C)$. Thus, Alg.~\ref{alg:greedy-vcg-coding} identifies a minimum cost cycle in Line~\ref{algorithm1:greedy}  (by some polynomial-time algorithms like the Floyd-Warshall algorithm \cite{algorithm}), followed by  adding the coding vectors of the coded data chunks along the cycle to  coding matrix $G$ in Line~\ref{algorithm1:add-coding}. Subsequently,  Alg.~\ref{alg:greedy-vcg-coding}  removes the cycle from the present graph in Line~\ref{algorihtm1:remove}. The condition in Line~\ref{algorithm1:condition} guarantees that the maximum weight cycle in the present graph has a non-negative weight. 
Finally, if client $c_i$ submitting valuation $\hat{v}_i \geq 1$ is not in those selected cycles, then Alg.~\ref{alg:greedy-vcg-coding} adds the coding vector of data chunk $w_i$ to $G$ in Line~\ref{algorithm1:serve2}. 

Let $G_{\text{alg~\ref{alg:greedy-vcg-coding}}}$ be the coding matrix produced by Alg. \ref{alg:greedy-vcg-coding}  and  let $G^*$ be a sparse coding matrix maximizing
function $w(\hat{\mathbf{V}}, \hat{\mathbf{H}}, G)$. The next theorem analyzes the \emph{approximation ratio} $\frac{w(\hat{\mathbf{V}}, \hat{\mathbf{H}}, G^*)}{w(\hat{\mathbf{V}}, \hat{\mathbf{H}}, G_{\text{alg~\ref{alg:greedy-vcg-coding}}})}$ of Alg.~\ref{alg:greedy-vcg-coding}. 

\begin{theorem} \label{theorem:apx-ratio}
The approximation ratio (with respect to an optimal sparse coding matrix) of Alg.~\ref{alg:greedy-vcg-coding} is  the maximum cycle length in  weighted dependency graph $\mathcal{G}(\mathbf{\Lambda}, \mathbf{A}, \mathbf{\Gamma})$. 
\end{theorem}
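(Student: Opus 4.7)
The plan is to reduce the claim to a purely combinatorial statement about cycle packings in the weighted dependency graph $\mathcal{G}(\mathbf{\Lambda},\mathbf{A},\mathbf{\Gamma})$. By Eq.~(\ref{eq:w-cycle2}), any sparse coding matrix $G$ corresponds to a set $\mathbf{C}(G)$ of disjoint cycles and satisfies
\begin{equation*}
w(\hat{\mathbf{V}},\hat{\mathbf{H}},G)=\sum_{C\in\mathbf{C}(G)}\gamma(C)+K,
\end{equation*}
where $K=\sum_{\hat v_i\ge 1}(\hat v_i-1)\ge 0$ depends only on the input. Writing $\mathbf{C}^*$ and $\mathbf{C}_1$ for the cycle sets associated with $G^*$ and $G_{\text{alg~\ref{alg:greedy-vcg-coding}}}$, it therefore suffices to prove that $\sum_{C\in\mathbf{C}^*}\gamma(C)$ is at most $(\max|C|)\sum_{C\in\mathbf{C}_1}\gamma(C)$, where the maximum runs over all cycles in $\mathcal{G}$; adding $K$ to both sides and using $K\ge 0$ then upgrades the bound to the approximation ratio stated for $w$.

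First I would observe that, without loss of generality, every $C^*\in\mathbf{C}^*$ has $\gamma(C^*)>0$, since dropping nonpositive-weight cycles from $\mathbf{C}^*$ only increases the left-hand side. The key structural step is to show that every such $C^*$ shares at least one vertex with some cycle in $\mathbf{C}_1$. I would argue this by contradiction: if no vertex of $C^*$ belongs to any cycle in $\mathbf{C}_1$, then no arc of $C^*$ is ever removed in Line~\ref{algorihtm1:remove}, so $C^*$ persists intact in the residual graph at termination; but by Eq.~(\ref{eq:cycle-weight-2}), $\gamma(C^*)>0$ means $\zeta(C^*)<1$, so the while-loop guard in Line~\ref{algorithm1:condition} is still satisfied, contradicting that Alg.~\ref{alg:greedy-vcg-coding} has halted.

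Next I would run a standard charging argument. Charge each positive-weight $C^*\in\mathbf{C}^*$ to the cycle $C\in\mathbf{C}_1$ that Alg.~\ref{alg:greedy-vcg-coding} selects earliest among those sharing a vertex with $C^*$. At the moment this $C$ is picked, no earlier ALG cycle touches $C^*$, so every arc of $C^*$ is still present in the residual graph; Line~\ref{algorithm1:greedy} picks a minimum-cost cycle there, which by Eq.~(\ref{eq:cycle-weight-2}) is equivalently a maximum-weight cycle, hence $\gamma(C^*)\le\gamma(C)$. Because the cycles in $\mathbf{C}^*$ are pairwise vertex-disjoint and every cycle charged to $C$ must contain one of the $|C|$ vertices of $C$, at most $|C|$ members of $\mathbf{C}^*$ can be charged to any single $C\in\mathbf{C}_1$, giving
\begin{equation*}
\sum_{C^*\in\mathbf{C}^*}\gamma(C^*)\;\le\;\sum_{C\in\mathbf{C}_1}|C|\,\gamma(C)\;\le\;\bigl(\max|C|\bigr)\sum_{C\in\mathbf{C}_1}\gamma(C).
\end{equation*}

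The main obstacle I anticipate is pinning down the interaction between the greedy step and the charge: one must verify that at the instant $C$ is selected, $C^*$ is genuinely a candidate cycle in the residual graph (which is exactly where the choice of the \emph{earliest} such $C$ is used) and that $C$ is in fact maximum-weight among the cycles then available, not merely among a restricted pool. Once these two points are settled, the non-negativity of $K$ together with the displayed inequality finishes the proof.
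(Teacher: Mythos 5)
Your proposal is correct and follows essentially the same route as the paper's proof: the paper's per-iteration bound $OPT_k \leq |C_k|\cdot APX_k$ is exactly your charging of each optimal cycle to the first greedy cycle it meets (using greedy maximality for $\gamma(C^*)\leq\gamma(C_k)$ and vertex-disjointness for the factor $|C_k|$), followed by summing and restoring the constant $\sum_{\hat v_i\geq 1}(\hat v_i-1)$. Your explicit termination argument that every positive-weight optimal cycle must intersect some greedy cycle is a point the paper leaves implicit, but it is the same underlying argument.
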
 

\begin{proof}
See Appendix \ref{appendix:theorem:apx-ratio}.
\end{proof} 

Because of Theorem~\ref{theorem:apx-ratio}, we  refer to Alg.~\ref{alg:greedy-vcg-coding} as $\max|C|$-approximate coding scheme. Next, we show that  applying $\max|C|$-approximate coding scheme to solve Eqs.~(\ref{eq:vcg-coding}) and ~(\ref{eq:vcg-pricing}) is no longer a truthful mechanism.

%
%
%
%
%

\begin{figure}
	\begin{center}
		\includegraphics[width=.6\textwidth]{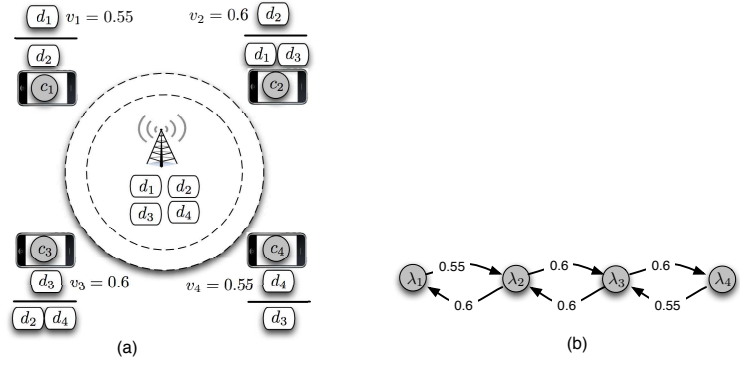}
		\caption{(a) Counter-example of the truthfulness property when we use Alg.~\ref{alg:greedy-vcg-coding} to solve Eqs.~(\ref{eq:vcg-coding}) and~(\ref{eq:vcg-pricing}); (b) The weighted dependency graph.}
		\label{fig:counter-ex}
	\end{center}
\end{figure}
\begin{example} \label{ex:counter}
Look at  Fig.~\ref{fig:counter-ex}.  First, suppose that   all clients  submit  the true valuations of the data chunks they want. Then,  Alg.~\ref{alg:greedy-vcg-coding}  produces  $d_2+d_3$ along cycle $(\lambda_2,\lambda_3)$ in Fig.~\ref{fig:counter-ex}(b). In this case, client $c_1$ has zero utility. Second,  suppose that client $c_1$ submits $\hat{v}_1=0.7$ but other clients submit the true valuations of the data chunks they want. Then, Alg.~\ref{alg:greedy-vcg-coding}  produces  $d_1+d_2$ and $d_3+d_4$ along cycles $(\lambda_1,\lambda_2)$ and $(\lambda_3, \lambda_4)$, respectively. By solving Eq.~(\ref{eq:vcg-pricing}) with Alg.~\ref{alg:greedy-vcg-coding}, client $c_1$ is charged  $p_1=(0.6+0.6-1)-(0.6+0.6+0.55-2)=0.45$. In this case, client $c_1$ has utility $0.55-0.45=0.1$.   Client $c_1$ can obtain a higher utility  by lying about the valuation of data chunk $w_i$. 
\end{example}

To address the issue  in the above example, we propose a payment scheme in Alg.~\ref{alg:greedy-vcg-pricing} so that  the joint design of Algs. \ref{alg:greedy-vcg-coding} and~\ref{alg:greedy-vcg-pricing} is a truthful mechanism. 
The underlying idea of Alg.~\ref{alg:greedy-vcg-pricing} is to calculate  threshold $\bar{v}_i(\hat{\mathbf{V}}_{-i},\hat{\mathbf{H}})$ for each client $c_i$ (with $\mathbf{1}_i(\hat{H}_i,G)=1$ under coding matrix $G$ constructed by Alg.~\ref{alg:greedy-vcg-coding}) as payment $p_i$. To that end, Alg.~\ref{alg:greedy-vcg-pricing} constructs weighted dependency graph $\mathcal{G}(\mathbf{\Lambda}, \mathbf{A}, \mathbf{\Gamma})$ in Line~\ref{alg:greedy-vcg-pricing:graph}; moreover, Alg.~\ref{alg:greedy-vcg-pricing} associates each arc $a \in \mathbf{A}$ with an arc cost $\zeta_a$ in Line~\ref{alg:greedy-vcg-pricing:cost}. Note that Alg.~\ref{alg:greedy-vcg-pricing}  defines the arc costs in a different way from Alg.~\ref{alg:greedy-vcg-coding}; precisely, Alg.~\ref{alg:greedy-vcg-pricing} associates each outgoing arc from vertex $\lambda_i$ with the cost of one unit (i.e., assuming  valuation $\hat{v}_i=0$). Then, Alg.~\ref{alg:greedy-vcg-pricing} calculates the difference of the cycle costs between  cycle $C_1$ (in Line~\ref{alg:greedy-vcg-pricing:c1}) and cycle $C_2$ (in Line~\ref{alg:greedy-vcg-pricing:c2}), where cycle $C_1$  has the \textit{globally} maximum weight  but cycle $C_2$ has the \textit{locally} maximum weight among those cycles  containing vertex $\lambda_i$.  While the value of $1-\zeta(C_1)$ (see Eq.~(\ref{eq:cycle-weight-2}))  is analogous to  the first term of Eq.~(\ref{eq:vcg-pricing}), that of $1-\zeta(C_2)$  is analogous to the second term of Eq.~(\ref{eq:vcg-pricing}). Thus, the difference $\zeta(C_2)-\zeta(C_1)$ of the cycle costs in Line~\ref{alg:greedy-vcg-pricing:s1} for each iteration is the minimum valuation $\hat{v}_i$ submitted by client $c_i$ such that a cycle  containing  vertex $\lambda_i$ can be selected by Line~\ref{algorithm1:greedy} of Alg.~\ref{alg:greedy-vcg-coding} in that iteration. Then, Alg.~\ref{alg:greedy-vcg-pricing} identifies  threshold $\bar{v}_i(\hat{\mathbf{V}}_{-i},\hat{\mathbf{H}})$ by searching for the minimum  among all iterations in Line~\ref{alg:greedy-vcg-pricing:s1} along with the initial valuation of $p_i$ being 1 as in Line~\ref{alg:greedy-vcg-pricing:start} (because each client $c_i$ can recover the data chunk it wants when submitting $\hat{v}_i \geq 1$).  
%


\begin{algorithm}[t]
\SetCommentSty{text}
\SetAlgoLined 
\SetKwFunction{Union}{Union}\SetKwFunction{FindCompress}{FindCompress} \SetKwInOut{Input}{input}\SetKwInOut{Output}{output}

\Input{Valuation set $\hat{\mathbf{V}}$,  side information set $\hat{\mathbf{H}}$, and client $c_i$ with $\mathbf{1}_i(\hat{H}_i, G)=1$ under coding matrix $G$ constructed by Alg.~\ref{alg:greedy-vcg-coding}.}
\Output{Payment $p_i$ for client $c_i$.}



Construct  weighted dependency graph $\mathcal{G}(\mathbf{\Lambda}, \mathbf{A}, \mathbf{\Gamma})$\;\label{alg:greedy-vcg-pricing:graph}
Associate each arc $a \in \mathbf{A}$ with arc cost $\zeta_{a}$:\label{alg:greedy-vcg-pricing:cost}
\begin{align*}
\zeta_a=\left\{
\begin{array}{ll}
1 & \text{if $a=(\lambda_i,\lambda_j)$ for some $j$};\\
1-[\gamma_a]^+_1 & \text{if $a \neq (\lambda_i,\lambda_j)$ for all $j$}.
\end{array}
\right.
\end{align*}
  
 
	$p_i\leftarrow 1$\; \label{alg:greedy-vcg-pricing:start}
	
\While{there exists a cycle in the present graph $\mathcal{G}(\mathbf{\Lambda}, \mathbf{A}, \mathbf{\Gamma})$ whose  cost is less than or equal to one, and there exists a cycle containing vertex $\lambda_i$
 \label{algorithm2:iteration}}  {
	Find a cycle $C_1=\arg\min_{C} \zeta(C)$ in the present graph $\mathcal{G}(\mathbf{\Lambda}, \mathbf{A}, \mathbf{\Gamma})$ for minimizing the cycle cost\; \label{alg:greedy-vcg-pricing:c1}
	
	Find a cycle $C_2 = \arg\min_{C \cap \lambda_i \neq \emptyset} \zeta(C)$ in the present graph $\mathcal{G}(\mathbf{\Lambda}, \mathbf{A}, \mathbf{\Gamma})$ that contains vertex $\lambda_i$ and minimizes the cycle cost among those cycles containing vertex $\lambda_i$\; \label{alg:greedy-vcg-pricing:c2}

	$p_i\leftarrow\min\{p_i,\zeta(C_2)-\zeta(C_1)\}$\; \label{alg:greedy-vcg-pricing:s1}
	
	Remove all vertices in cycle $C_1$ and their incident arcs  from the present graph $\mathcal{G}(\mathbf{\Lambda}, \mathbf{A}, \mathbf{\Gamma})$\;\label{alg:greedy-vcg-pricing:remove}
}



\caption{Payment scheme for those clients with $\mathbf{1}_i(\hat{H}_i, G)=1$ under coding matrix $G$ constructed by  Alg.~\ref{alg:greedy-vcg-coding}.}

\label{alg:greedy-vcg-pricing}
\end{algorithm}

By verifying the four conditions in Theorem~\ref{theorem:truthfulness}, the next theorem shows that the joint Algs.~\ref{alg:greedy-vcg-coding} and \ref{alg:greedy-vcg-pricing} is a truthful mechanism. 
\begin{theorem} \label{theorme:alg1-truthful}
In the multiple unicast scenario, the mechanism consisting of the coding scheme in Alg.~\ref{alg:greedy-vcg-coding} and the payment scheme in Alg.~\ref{alg:greedy-vcg-pricing} is a truthful mechanism. 
\end{theorem}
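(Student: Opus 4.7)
The plan is to verify the four sufficient conditions of Theorem~\ref{theorem:truthfulness} for the joint mechanism. The structural fact I will exploit throughout is that changing $\hat{v}_i$ only perturbs the costs of the outgoing arcs from $\lambda_i$ in the weighted dependency graph (each such cost equals $1-[\hat{v}_i]^+_1$ and is non-increasing in $\hat{v}_i$), while enlarging $\hat{H}_i$ only adds incoming arcs to $\lambda_i$; in particular, the cost of any cycle that avoids $\lambda_i$ is invariant under both kinds of perturbation.

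For Conditions~1 and~2, fix $\hat{\mathbf{V}}_{-i}$ and $\hat{\mathbf{H}}$ and regard $\hat{v}_i=\tau$ as a parameter. I prove by induction on the while-loop iteration of Alg.~\ref{alg:greedy-vcg-coding} that if $\lambda_i$ is selected at parameter $\tau_1$, it is also selected at every $\tau_2>\tau_1$: at any given iteration, either both runs pick the same non-$\lambda_i$ cycle (so their present graphs remain identical for the next iteration), or only the higher-$\tau$ run picks a $\lambda_i$-containing cycle, because non-$\lambda_i$ cycles have identical cost in the two runs while $\lambda_i$-containing cycles are only cheaper at $\tau_2$. This yields a well-defined threshold $\bar{v}_i(\hat{\mathbf{V}}_{-i},\hat{\mathbf{H}})$, verifying Condition~1. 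To match this threshold with the payment in Alg.~\ref{alg:greedy-vcg-pricing}, I observe that the arc costs used by Alg.~\ref{alg:greedy-vcg-pricing} coincide with those of Alg.~\ref{alg:greedy-vcg-coding} at $\hat{v}_i=0$, so the two iteration sequences agree up to and including the first iteration at which $\lambda_i$ could be selected. At each iteration~$k$, the quantity $\zeta(C_2)-\zeta(C_1)$ is precisely the smallest value of $[\tau]^+_1$ that would make the best $\lambda_i$-containing cycle tie with $C_1$ in the greedy choice; taking the minimum over iterations and capping by $1$ (to absorb the uncoded-transmission rule in Line~\ref{algorithm1:serve2}) gives exactly $\bar{v}_i$, which is what Alg.~\ref{alg:greedy-vcg-pricing} returns. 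This verifies Condition~2.

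For Condition~3, the argument mirrors the monotonicity above but with $\tau$ fixed and the side-information set growing. If $\hat{H}_i\subseteq H_i$, the weighted dependency graph under $H_i$ differs from that under $\hat{H}_i$ only by additional incoming arcs to $\lambda_i$, so every non-$\lambda_i$ cycle has identical cost in both graphs and only $\lambda_i$-containing cycles can become cheaper or newly available. The same iteration-by-iteration coupling argument shows that if Alg.~\ref{alg:greedy-vcg-coding} selects $\lambda_i$ under $\hat{H}_i$ at parameter $\tau$, it also selects $\lambda_i$ under $H_i$ at the same $\tau$. Hence $\bar{v}_i$ can only decrease when $\hat{H}_i$ is enlarged to $H_i$.

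For Condition~4, I will show that its premise is vacuous in the multiple unicast scenario. Every coded packet produced by Alg.~\ref{alg:greedy-vcg-coding} along a cycle~$C$ is a binary sum $w_j+w_{j'}$ with $\lambda_j,\lambda_{j'}\in C$, and every uncoded packet from Line~\ref{algorithm1:serve2} is some $w_j$; thus the $\mathbb{F}_2$-span of the transmitted packets is contained in the span of $\{w_j:\lambda_j \text{ lies in a selected cycle or is served uncoded}\}$. Because clients want pairwise distinct data chunks, if $\lambda_i$ is neither in a selected cycle nor served by an uncoded transmission, then $w_i$ lies outside this span and $c_i$ cannot decode $w_i$ regardless of its side information; otherwise $c_i$ decodes $w_i$ regardless of its side information. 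Thus $\mathbf{1}_i(\hat{H}_i,G)=\mathbf{1}_i(H_i,G)$ for every coding matrix $G$ produced by Alg.~\ref{alg:greedy-vcg-coding}, so the premise of Condition~4 never holds. The chief technical obstacle is the careful bookkeeping of ties in the greedy choices used in the coupling arguments for Conditions~1 and~3, which I handle by defining $\bar{v}_i$ via strict inequalities so that the coupling does not depend on any particular tie-breaking rule.
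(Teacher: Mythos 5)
Your proposal takes essentially the same route as the paper's own proof (Appendix~\ref{appendix:alg1-truthful}): verify the four conditions of Theorem~\ref{theorem:truthfulness}, obtain the threshold property of Alg.~\ref{alg:greedy-vcg-coding} by comparing greedy runs as $\hat{v}_i$ (resp.\ $\hat{H}_i$) grows, identify the output of Alg.~\ref{alg:greedy-vcg-pricing} with the threshold through the per-iteration quantity $\zeta(C_2)-\zeta(C_1)$, and dispose of the fourth condition by noting that in multiple unicast an unserved client's wanted chunk appears in no transmission. Your iteration-by-iteration coupling and your explicit treatment of ties are in fact more careful than the paper's corresponding sentences.

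One step, however, is asserted rather than proved, and it is exactly the step the paper also glosses over: that the minimum over iterations of $\zeta(C_2)-\zeta(C_1)$, capped at $1$, equals $\bar{v}_i(\hat{\mathbf{V}}_{-i},\hat{\mathbf{H}})$. Being selected by Line~\ref{algorithm1:greedy} of Alg.~\ref{alg:greedy-vcg-coding} requires not only that the best $\lambda_i$-containing cycle tie with $C_1$, but also that it satisfy the loop condition in Line~\ref{algorithm1:condition} ($\zeta(C)\le 1$, i.e.\ nonnegative weight), while Alg.~\ref{alg:greedy-vcg-pricing}'s loop only runs while some cycle has cost at most one under its modified costs (which set $\hat{v}_i=0$). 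These two feasibility tests do not coincide. Concretely, take three clients on a single directed $3$-cycle through $\lambda_1,\lambda_2,\lambda_3$ with $\hat{v}_2=0.4$ and $\hat{v}_3=0.9$: under Alg.~\ref{alg:greedy-vcg-pricing}'s costs the unique cycle costs $1+0.6+0.1=1.7>1$, so its loop never executes and it returns $p_1=1$; yet Alg.~\ref{alg:greedy-vcg-coding} serves $c_1$ whenever $[\hat{v}_1]^+_1\ge 0.7$, so the threshold is $0.7$. The payment then exceeds the threshold, Condition~2 of Theorem~\ref{theorem:truthfulness} fails, and a client with true valuation $0.8$ strictly prefers to underreport. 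The binding constraint in this regime is $\zeta(C_2)-[\tau]^+_1\le 1$, not the comparison with $C_1$, so your characterization ``smallest value making the best $\lambda_i$-cycle tie with $C_1$'' does not capture the threshold whenever every remaining cycle has modified cost above one while some $\lambda_i$-cycle has modified cost at most two. This defect is inherited from Alg.~\ref{alg:greedy-vcg-pricing} as written (the paper's proof asserts the same identification without addressing it), but a complete argument must either exclude this regime or amend the loop condition and the update, e.g.\ continuing while a $\lambda_i$-cycle of modified cost at most two remains and using $\min\{p_i,\max\{\zeta(C_2)-\zeta(C_1),\,\zeta(C_2)-1\}\}$ with $C_1$ ranging over cycles avoiding $\lambda_i$.
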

\begin{proof}
See Appendix \ref{appendix:alg1-truthful}.
\end{proof}

\subsection{$\sqrt{n}$-approximate truthful mechanism} \label{subsection:approximation-alg2}

This section proposes another approximate sparse coding scheme and its corresponding payment scheme for guaranteeing the truthfulness. The approximate coding scheme modifies the previously proposed Alg.~\ref{alg:greedy-vcg-coding}. The modified approximate coding scheme  substitutes Line~\ref{algorithm1:greedy} of Alg.~\ref{alg:greedy-vcg-coding} (i.e., identifying  a  cycle for maximizing  cycle weight $\gamma(C)$) by  identifying a cycle $C$ for maximizing  $\frac{\gamma(C)}{\sqrt{|C|}}$. The underlying idea is to maximize cycle weight $\gamma(C)$ (as in Alg.~\ref{alg:greedy-vcg-coding}) and at the same time to minimize the number of transmissions (because shorter cycle lengths $|C|$ can yield more cycles). To that end, we propose Alg. \ref{alg:max-sqrt-weight}  for obtaining such a cycle in a weighted dependency graph.   Line \ref{alg-max-sqrt:fix-i} of Alg.~\ref{alg:max-sqrt-weight}  searches for a cycle for maximizing cycle weight  subject to the cycle length being no more than $i$. Then,  Line~\ref{alg-max-sqrt:opt} of Alg.~\ref{alg:max-sqrt-weight} can identify cycle $C$ for maximizing $\frac{\gamma(C)}{\sqrt{|C|}}$ subject to  the cycle length being no more than $i$; in particular, Line~\ref{alg-max-sqrt:opt} can identify cycle $C$ for maximizing $\frac{\gamma(C)}{\sqrt{|C|}}$ in the last iteration.  See Appendix \ref{appendix:lemma:maximum-valuation-cycle} for carefully justifying the correctness of Alg.~\ref{alg:max-sqrt-weight}.

\begin{algorithm}[t]
\SetCommentSty{text}
\SetAlgoLined 
\SetKwFunction{Union}{Union}\SetKwFunction{FindCompress}{FindCompress} \SetKwInOut{Input}{input}\SetKwInOut{Output}{output}

\Input{Weight dependency graph $\mathcal{G}(\mathbf{\Lambda}, \mathbf{A}, \mathbf{\Gamma})$.}
\Output{Cycle $C$  maximizing $\frac{\gamma(C)}{\sqrt{|C|}}$.}

$C \leftarrow \emptyset$\;

\For{$i \leftarrow 2$ \KwTo $n$}{
Find a cycle $C'=\arg\min_{C''} \zeta(C'')$ subject to $|C''| \leq i$\; \label{alg-max-sqrt:fix-i}

 $C \leftarrow \arg\max_{C' \text{\,or\,} C} \{\frac{\gamma(C')}{\sqrt{|C'|}}, \frac{\gamma(C)}{\sqrt{|C|}}\}$\; \label{alg-max-sqrt:opt}
} 

\caption{Identifying cycle $C$ for maximizing $\frac{\gamma(C)}{\sqrt{|C|}}$.}

\label{alg:max-sqrt-weight}
\end{algorithm}

The next theorem provides the approximation ratio of the modified approximation algorithm. 
\begin{theorem} \label{theorem:sqrt-approximation}
In the multiple unicast scenario, substituting Line \ref{algorithm1:greedy}  of Alg. \ref{alg:greedy-vcg-coding} by identifying a cycle $C$ for maximizing $\frac{\gamma(C)}{\sqrt{|C|}}$ yields the approximation ratio (with respect to the welfare yielded by an optimal sparse coding matrix) of $\sqrt{n}$.
\end{theorem}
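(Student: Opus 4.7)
The plan is to compare the welfare obtained by the modified algorithm with that of an optimal sparse coding scheme through a standard charging argument on the underlying disjoint cycle packing problem. Let $\mathbf{C}^*$ be the set of disjoint cycles corresponding to an optimal sparse coding matrix $G^*$ and $\mathbf{C}$ be the set chosen by the modified algorithm. Because the constant term $\sum_{\hat{v}_i \geq 1}(\hat{v}_i-1)$ in Eq.~(\ref{eq:w-cycle2}) appears in both $w(\hat{\mathbf{V}},\hat{\mathbf{H}},G^*)$ and $w(\hat{\mathbf{V}},\hat{\mathbf{H}},G_{\text{alg}})$ and the uncoded-singleton contributions cancel, it suffices to prove $\sum_{C^*\in\mathbf{C}^*}\gamma(C^*) \leq \sqrt{n}\cdot \sum_{C\in\mathbf{C}}\gamma(C)$. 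Without loss of generality I assume every cycle in $\mathbf{C}^*$ has strictly positive weight, since non-positive-weight cycles can be deleted from OPT without decreasing its objective.

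Enumerate the cycles chosen by the modified greedy algorithm in order as $C_1,C_2,\ldots$. For each $C^*\in\mathbf{C}^*$, let $f(C^*)=C_{i(C^*)}$, where $i(C^*)$ is the earliest iteration at which the greedy choice shares a vertex with $C^*$. Such an iteration must exist: otherwise $C^*$ remains intact with positive weight at termination, contradicting the termination condition inherited from Line~\ref{algorithm1:condition} of Alg.~\ref{alg:greedy-vcg-coding}. At iteration $i(C^*)$ the full cycle $C^*$ was still a valid candidate, so the greedy rule (which picks the cycle maximizing $\gamma(C)/\sqrt{|C|}$, as justified by Alg.~\ref{alg:max-sqrt-weight}) gives
\begin{align*}
\frac{\gamma(C_{i(C^*)})}{\sqrt{|C_{i(C^*)}|}} \geq \frac{\gamma(C^*)}{\sqrt{|C^*|}}, \quad\text{equivalently}\quad \gamma(C^*) \leq \gamma(C_{i(C^*)})\sqrt{\tfrac{|C^*|}{|C_{i(C^*)}|}}.
\end{align*}

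Group OPT cycles by their charges: $N(C_i)=\{C^*\in\mathbf{C}^*:f(C^*)=C_i\}$. Since cycles in $\mathbf{C}^*$ are vertex-disjoint and every member of $N(C_i)$ must touch $C_i$, we have $|N(C_i)|\leq |C_i|$; and because the members of $N(C_i)$ are disjoint subsets of $\mathbf{\Lambda}$, we also have $\sum_{C^*\in N(C_i)}|C^*|\leq n$. Summing the greedy inequality over $\mathbf{C}^*$, grouping by $C_i$, and applying Cauchy--Schwarz gives
\begin{align*}
\sum_{C^*\in\mathbf{C}^*}\gamma(C^*) \;\leq\; \sum_{C_i\in\mathbf{C}} \frac{\gamma(C_i)}{\sqrt{|C_i|}} \sum_{C^*\in N(C_i)} \sqrt{|C^*|} \;\leq\; \sum_{C_i}\frac{\gamma(C_i)}{\sqrt{|C_i|}}\sqrt{|N(C_i)|}\sqrt{\sum_{C^*\in N(C_i)}|C^*|}.
\end{align*}
Using $|N(C_i)|\leq|C_i|$ and $\sum_{C^*\in N(C_i)}|C^*|\leq n$, the right-hand side is bounded by $\sqrt{n}\sum_{C_i}\gamma(C_i)$, which yields the claimed ratio.

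The two steps most in need of care are: (i) the termination argument guaranteeing that every positive-weight OPT cycle actually gets charged (requiring a careful statement of the modified algorithm's stopping rule, analogous to Line~\ref{algorithm1:condition}); and (ii) verifying the factor-$\sqrt{n}$ is tight enough --- the combined use of $|N(C_i)|\leq |C_i|$, which cancels the $\sqrt{|C_i|}$ in the denominator, together with the global bound $\sum_{C^*\in N(C_i)}|C^*|\leq n$, is the key inequality that turns a per-cycle bound into a uniform $\sqrt{n}$ factor. The correctness of Alg.~\ref{alg:max-sqrt-weight} itself (namely, that iterating over $|C|\leq i$ and maximizing $\gamma(C')/\sqrt{|C'|}$ over those candidates yields a global maximizer of $\gamma(C)/\sqrt{|C|}$) is deferred to Appendix~\ref{appendix:lemma:maximum-valuation-cycle}, so I would invoke it as a lemma rather than prove it inline.
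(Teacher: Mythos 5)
Your proof is correct and follows essentially the same route as the paper's: charge each optimal cycle to the first greedy iteration whose chosen cycle touches it, use the selection rule $\gamma(C^*)\leq \gamma(C_k)\sqrt{|C^*|/|C_k|}$, and combine Cauchy--Schwarz with the bounds $|N(C_k)|\leq|C_k|$ and $\sum_{C^*\in N(C_k)}|C^*|\leq n$. The paper phrases this per iteration as $OPT_k\leq\sqrt{n}\cdot APX_k$ rather than as a single global grouping, and leaves the termination/charging step implicit, but the decomposition and key inequalities are identical.
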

\begin{proof}
See Appendix \ref{appendix:theorem:sqrt-approximation}.
\end{proof}

Because of Theorem~\ref{theorem:sqrt-approximation}, we refer to the coding scheme modified from Alg.~\ref{alg:greedy-vcg-coding} as $\sqrt{n}$-approximate coding scheme. To the best of our knowledge, \cite{cycle-packing} developed the best approximation algorithm for a cycle packing problem (identifying the maximum number of disjoint cycles), which is a special case of our problem when $\hat{v}_i=1$ for all~$i$. 
That paper  showed that the approximation ratio of that algorithm is $\sqrt{n}$;	furthermore, it conjectured that $\sqrt{n}$ is the best approximation ratio for that cycle packing problem.

Moreover, following Theorem \ref{theorme:alg1-truthful}, we can establish the truthfulness as follows. 

\begin{theorem}
Substituting  Line \ref{algorithm1:greedy}  of Alg. \ref{alg:greedy-vcg-coding} and  Lines~\ref{alg:greedy-vcg-pricing:c1} and~\ref{alg:greedy-vcg-pricing:c2} of Alg.~\ref{alg:greedy-vcg-pricing}  by identifying a cycle $C$ for maximizing $\frac{\gamma(C)}{\sqrt{|C|}}$ yields a truthful mechanism. 
\end{theorem}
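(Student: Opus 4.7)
The plan is to invoke Theorem~\ref{theorem:truthfulness} by verifying its four sufficient conditions, in direct analogy to the proof of Theorem~\ref{theorme:alg1-truthful}. Because the overall skeleton of the mechanism is unchanged (iteratively pick a cycle; serve the clients on it; compute $p_i$ as the minimum per-iteration valuation that would have caused some cycle through $\lambda_i$ to be selected), the task reduces to checking that each ingredient in that skeleton survives the replacement of $\gamma(C)$ by $\gamma(C)/\sqrt{|C|}$ as the selection criterion.

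For Condition 1 (threshold-type coding), I would fix $\hat{\mathbf{V}}_{-i}$ and $\hat{\mathbf{H}}$, and track the effect of varying $\hat{v}_i$. The quantity $\hat{v}_i$ enters $\gamma(C)/\sqrt{|C|}$ monotonically (through $[\gamma_a]_1^+$ for outgoing arcs $a=(\lambda_i,\lambda_j)$) for any cycle $C$ containing $\lambda_i$, and has no effect on cycles avoiding $\lambda_i$. A standard monotonicity argument then yields a threshold $\bar{v}_i(\hat{\mathbf{V}}_{-i},\hat{\mathbf{H}})$ such that $\mathbf{1}_i(\hat{H}_i,G)=1$ iff $\hat{v}_i$ exceeds this threshold.

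For Condition 2 (payment equals threshold), I would reinterpret the substituted Lines~\ref{alg:greedy-vcg-pricing:c1} and~\ref{alg:greedy-vcg-pricing:c2}, which now return the cycles $C_1$ and $C_2$ that globally (resp.\ over cycles through $\lambda_i$) maximize $\gamma(C)/\sqrt{|C|}$ under the arc-cost rule of Line~\ref{alg:greedy-vcg-pricing:cost} (which is the effective assumption $\hat{v}_i=0$). In each iteration, I would solve the one-variable inequality that equates the score of $C_2$, boosted by the truncated contribution of $\hat{v}_i$, with the score of $C_1$; the minimum such $\hat{v}_i$ is precisely the per-iteration threshold for $\lambda_i$ being covered in that round. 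Taking the minimum over iterations (with the initial value $1$) yields exactly $\bar{v}_i(\hat{\mathbf{V}}_{-i},\hat{\mathbf{H}})$, reproducing Line~\ref{alg:greedy-vcg-pricing:s1}'s role. The key obstacle is here: unlike the original algorithm, where $\gamma(C)=1-\zeta(C)$ makes the per-iteration threshold a clean additive gap $\zeta(C_2)-\zeta(C_1)$, the denominator $\sqrt{|C|}$ introduces a multiplicative factor $\sqrt{|C_2|/|C_1|}$, so the per-iteration computation must be read as the solution of $(\gamma(C_2)+[\hat{v}_i]_1^+)/\sqrt{|C_2|}=\gamma(C_1)/\sqrt{|C_1|}$ rather than a simple subtraction. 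Checking that Line~\ref{alg:greedy-vcg-pricing:s1} continues to return this quantity (with the obvious reinterpretation) is the main technical verification.

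For Condition 3 (monotonicity in $\hat{H}_i$), enlarging $\hat{H}_i$ only adds incoming arcs $(\lambda_j,\lambda_i)$ to $\mathcal{G}(\mathbf{\Lambda},\mathbf{A},\mathbf{\Gamma})$, so every cycle through $\lambda_i$ available under $\hat{H}_i$ remains available under $H_i\supseteq \hat{H}_i$. Hence in each iteration the best $\gamma(C)/\sqrt{|C|}$ over cycles through $\lambda_i$ can only weakly improve, yielding a weakly smaller per-iteration threshold and thus $\bar{v}_i(\hat{\mathbf{V}}_{-i},\{H_i,\hat{\mathbf{H}}_{-i}\})\leq \bar{v}_i(\hat{\mathbf{V}}_{-i},\{\hat{H}_i,\hat{\mathbf{H}}_{-i}\})$. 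For Condition~4, if some chosen $G$ satisfies $\mathbf{1}_i(\hat{H}_i,G)=0$ but $\mathbf{1}_i(H_i,G)=1$, then the extra side information $H_i\setminus \hat{H}_i$ enables a cycle through $\lambda_i$ that is compatible with the iteration in which $G$'s cycles are chosen; the associated per-iteration threshold for $\lambda_i$ drops to zero, forcing $\bar{v}_i(\hat{\mathbf{V}}_{-i},\{H_i,\hat{\mathbf{H}}_{-i}\})=0$. Together, Conditions 1--4 of Theorem~\ref{theorem:truthfulness} are met, establishing the truthfulness of the $\sqrt{n}$-approximate mechanism.
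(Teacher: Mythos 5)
Your route is the paper's own: the paper gives no separate argument for this theorem, asserting only that it follows the proof of Theorem~\ref{theorme:alg1-truthful}, i.e., one re-verifies the four conditions of Theorem~\ref{theorem:truthfulness} as in Appendix~\ref{appendix:alg1-truthful} with the selection rule $\gamma(C)/\sqrt{|C|}$ in place of $\gamma(C)$. Your handling of Conditions 1, 3 and 4 is fine: increasing $\hat v_i$ (or enlarging $\hat H_i$) only raises the normalized score of cycles through $\lambda_i$ and leaves all other cycles untouched, so the monotonicity arguments of Appendix~\ref{appendix:alg1-truthful} carry over verbatim.

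The gap is Condition 2, precisely the step you label ``the main technical verification'' and then leave undone; moreover the reinterpretation you sketch is not the critical value. Under the new rule, the per-iteration threshold for client $c_i$ is $\min_{C\ni\lambda_i}\bigl(\sqrt{|C|}\,\gamma(C_1)/\sqrt{|C_1|}-\gamma(C)\bigr)$, with weights evaluated at $\hat v_i=0$ as in Line~\ref{alg:greedy-vcg-pricing:cost}. Because the reported amount $[\hat v_i]^+_1$ is divided by $\sqrt{|C|}$, short cycles gain more per unit of reported valuation, so the cycle through $\lambda_i$ that first reaches the bar need not be the cycle $C_2$ that maximizes $\gamma(C)/\sqrt{|C|}$ at $\hat v_i=0$; solving your equation with that fixed $C_2$ can therefore overstate the threshold, and keeping Line~\ref{alg:greedy-vcg-pricing:s1} literally as $\zeta(C_2)-\zeta(C_1)=\gamma(C_1)-\gamma(C_2)$ is also wrong. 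Concretely, suppose the best cycle avoiding $\lambda_i$ is a $2$-cycle with $\gamma(C_1)/\sqrt{|C_1|}=0.3$, and $\lambda_i$ lies on a $9$-cycle with $\gamma=0$ and on a $4$-cycle with $\gamma=-0.1$: the $9$-cycle has the larger normalized score ($0>-0.05$), so it is your $C_2$ and your equation returns $0.9$, the $4$-cycle is activated already at $\hat v_i=0.7$ (the true critical value), and the literal Line~\ref{alg:greedy-vcg-pricing:s1} charges $\gamma(C_1)-\gamma(C_2)\approx 0.42$; a client with $v_i=0.5$ then profits from overbidding, so truthfulness fails unless the payment is redefined as the minimum above (and one re-proves that taking the minimum over iterations, initialized at $1$, equals the overall threshold). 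Neither your proposal nor the paper carries out this repair, so Condition 2 of Theorem~\ref{theorem:truthfulness} is not established as written.
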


\subsection{Complexities of the proposed coding schemes} \label{subsection:complexity}
This section investigates the computational complexities of the three proposed coding schemes for the multiple unicast scenario: 1) Alg. \ref{alg:poly-time}  for the set $\mathbf{G}$ of sparse and instantly decodable coding matrices; 2) $\max |C|$-approximate coding scheme (Alg. \ref{alg:greedy-vcg-coding}) for  for the set $\mathbf{G}$ of sparse  coding matrices; 3) $\sqrt{n}$-approximate coding scheme (modified Alg. \ref{alg:greedy-vcg-coding} along with Alg.~\ref{alg:max-sqrt-weight}) for the set $\mathbf{G}$ of sparse  coding matrices. All  three schemes are based on a weighted dependency graph. Constructing a weighted dependency graph takes $O(n^2)$ steps to check all pairs of clients. 

Regarding Alg.~\ref{alg:poly-time}, we can apply the Edmond's maximum weight matching algorithm \cite{galil1986efficient} to Line~\ref{alg:poly-time:max-matching}  of Alg. \ref{alg:poly-time}, whose complexity is $O(n^3)$. Then, the complexity of Alg. \ref{alg:poly-time} is $O(n^3)$. 

Regarding Alg.~\ref{alg:greedy-vcg-coding}, we can apply  the Floyd-Warshall algorithm \cite{algorithm} to Line~\ref{algorithm1:cost} for each iteration, whose complexity  is $O(n^3)$. Because there are at most $n/2$ cycles in a weighted dependency graph (i.e., at most $n/2$ iterations from Line \ref{algorithm1:condition}),  the complexity of $\max|C|$-approximate coding scheme is $O(n^4)$.

Regarding Alg.~\ref{alg:max-sqrt-weight}, we can apply the Bellman-Ford algorithm \cite{algorithm} to Line \ref{alg-max-sqrt:fix-i} for each iteration $i$, whose complexity is $O(i \cdot n^2)$. Hence, the complexity of Alg. \ref{alg:max-sqrt-weight} is $O(n^4)$; in turn, the complexity of the $\sqrt{n}$-approximate coding scheme is $O(n^5)$.

\subsection{Numerical results} \label{section:simulation}

This section numerically analyzes the proposed  coding schemes for the multiple unicast scenario via computer simulations, including Alg. \ref{alg:poly-time}, $\max|C|$-approximate coding scheme in Alg.~\ref{alg:greedy-vcg-coding}, and  $\sqrt{n}$-approximate coding scheme modified from Alg.~\ref{alg:greedy-vcg-coding}.

Fig.~\ref{fig:result12} simulates Alg. \ref{alg:poly-time} with the instant decoding scheme and the two approximate coding schemes with the general decoding scheme. The two sub-figures display the social welfare when each client has 3 and 6 data chunks, respectively, in its side information. The experiment setting is following: We simulate $n$ clients (x-axle) and set $\mathbf{D}=\{d_1, \cdots, d_n\}$, where client $c_i$ wants data chunk $d_i$. Valuation $v_i$ of data chunk $d_i$ is uniformly picked between 0 and 1.  The data chunks in side information $H_i$ of client $c_i$ is randomly selected from set $\mathbf{D} - \{d_i\}$. As a result of the proposed payment schemes as the incentives, we can guarantee that all clients submit their true information. Thus, we simulate the proposed coding schemes along with the true information (like no selfish clients). Note that under that setting, the proposed approximate coding schemes can approximate the \textit{true} social welfare in Eq.~(\ref{eq:social-welfare}). 
  All results are averaged over 500 simulation times.

From Fig.~\ref{fig:result12}, we can observe that even though both approximate coding schemes cannot achieve the maximum social welfare (in the set $\mathbf{G}$ of sparse coding matrices),  they still outperform Alg.~\ref{alg:poly-time} (that is optimal in the set of sparse and instantly decodable coding matrices). The result tells us that the proposed approximate coding schemes can take advantage of the general  decoding scheme.  Moreover, Table~\ref{table:difference} shows the difference between the social welfare yielded by  $\sqrt{n}$-approximate coding scheme in Alg.~\ref{alg:greedy-vcg-coding} and that yielded by Alg.~\ref{alg:poly-time} for various fixed numbers of clients and fixed sizes of side information.  From the table,  the difference when $|H_i|=6$ is almost more than that when $|H_i|=3$ (especially when the number of clients is larger like $n=34$, $40$, or $46$), i.e., the advantage from the general decoding scheme becomes more obvious when the clients have more data chunks as their side information. That would be because when clients have more data chunks as their side information, there are more cycles in the weighted dependency graph whose lengths are more than two.

\begin{table*}[!t]
	\footnotesize
	\caption{Differences of the social welfares}
	
	\begin{center}
		\begin{tabular}{|c|ccccccc|}
			\hline
		the number $n$ of client & 10 & 16 & 22 & 28 & 34 & 40 & 46\\
		\hline
		$|H_i|=3$ & 0.03 & 0.1 & 0.12 &  0.17 & 0.17 & 0.2 & 0.21\\ 
		\hline
		$|H_i|=6$ & 0 & 0.1 & 0.12 & 0.2 & 0.22 & 0.38 & 0.39\\
		\hline
		\end{tabular}
		\label{table:difference}
	\end{center}
\end{table*}


\begin{figure}[t]
	\begin{minipage}{.45\textwidth}
		\centering
			\includegraphics[width=\textwidth]{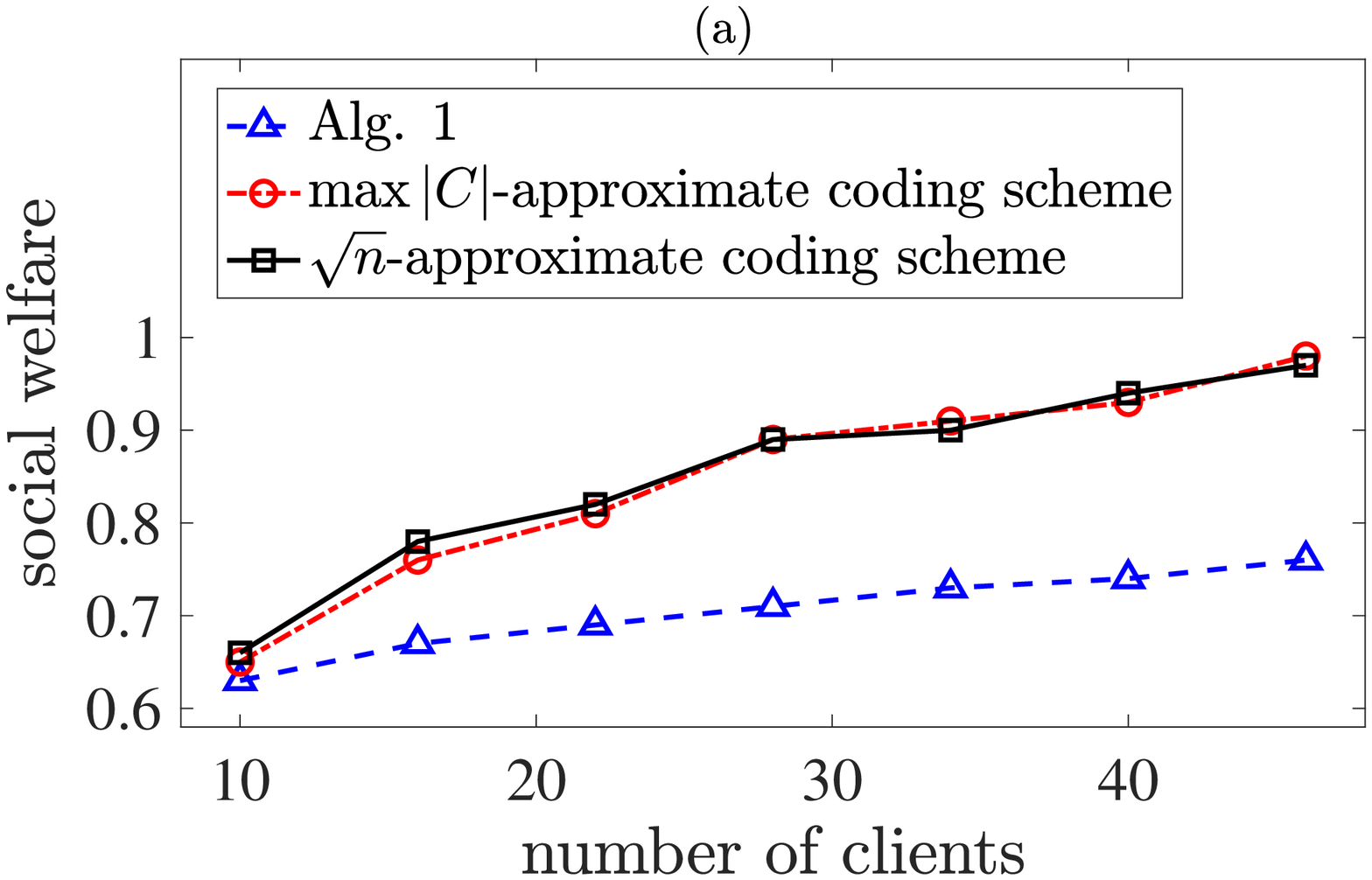}
	\end{minipage}\hfill
	\begin{minipage}{.45\textwidth}
	\centering
			\includegraphics[width=\textwidth]{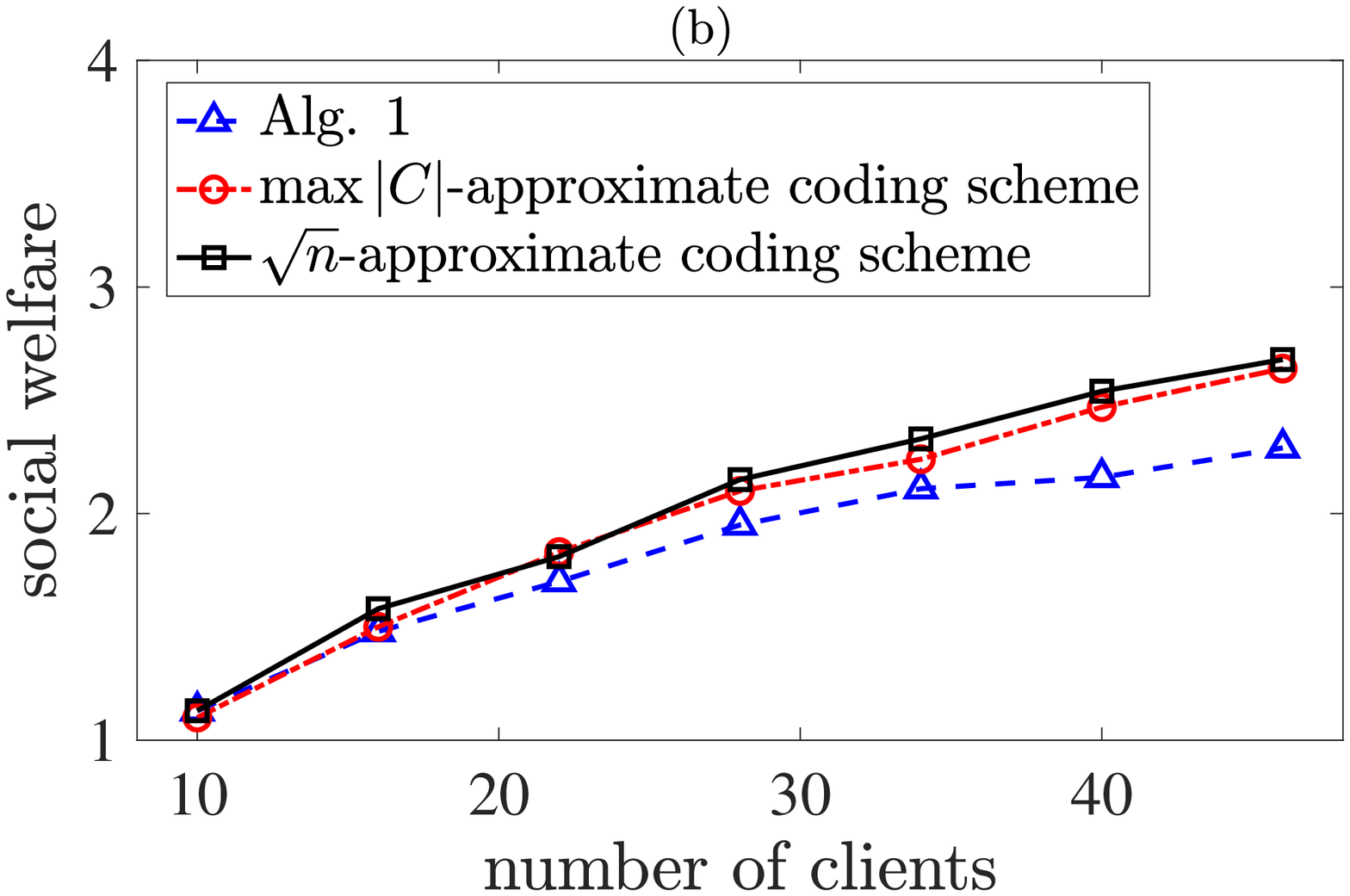}
	\end{minipage}\hfill
		\caption{Social welfare of Alg.~\ref{alg:poly-time}, $\max|C|$-approximate coding scheme in Alg.~\ref{alg:greedy-vcg-coding}, and $\sqrt{n}$-approximate coding scheme modified from Alg.~\ref{alg:greedy-vcg-coding}: (a) each client has 3 data chunks in its side information; (b) each client has 6 data chunks in its side information;}
	\label{fig:result12}	
\end{figure}
\begin{figure}[t]
	\begin{minipage}{.45\textwidth}
		\centering
		\includegraphics[width=\textwidth]{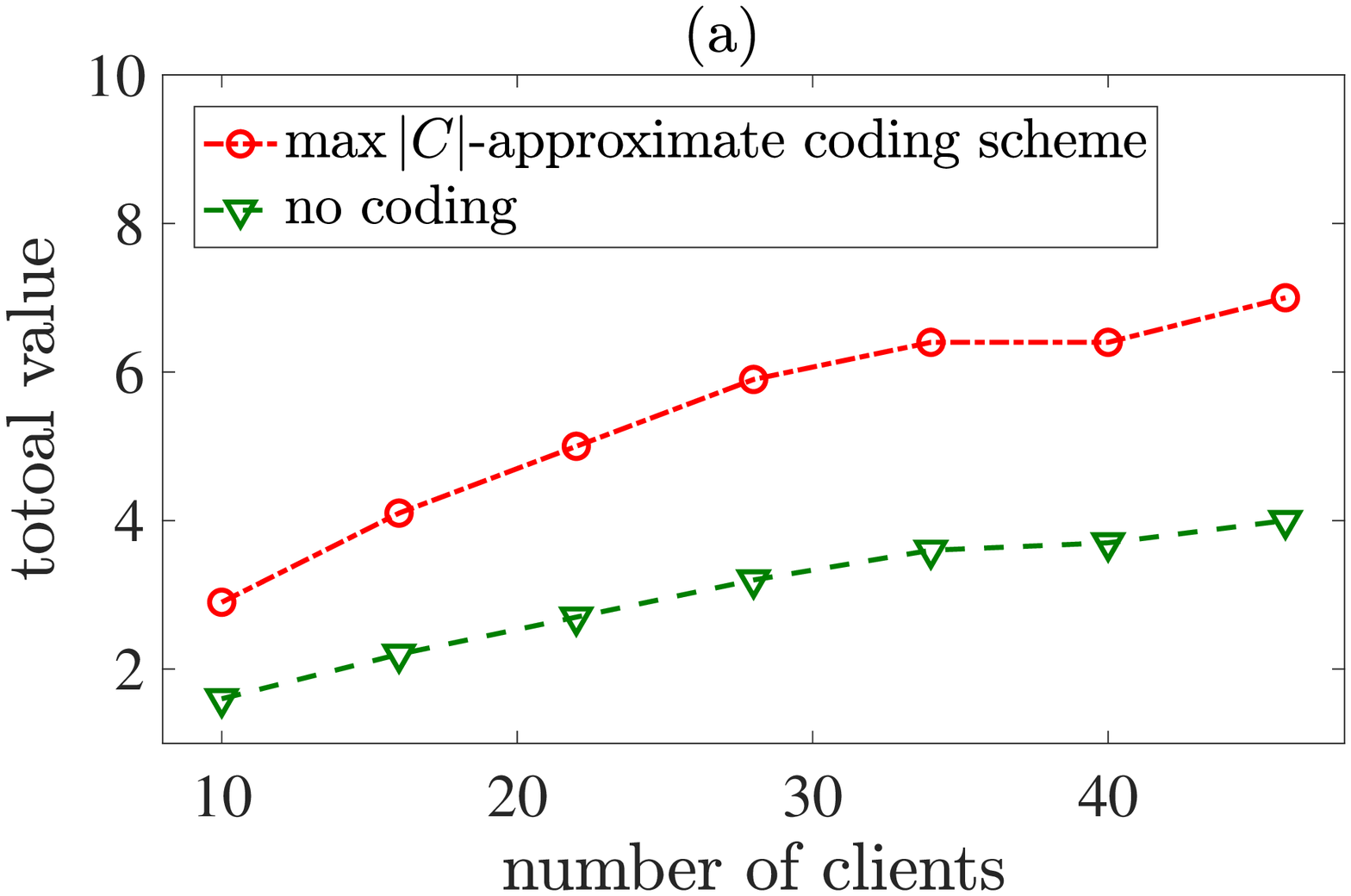}
	\end{minipage}\hfill
	\begin{minipage}{.45\textwidth}
		\centering
		\includegraphics[width=\textwidth]{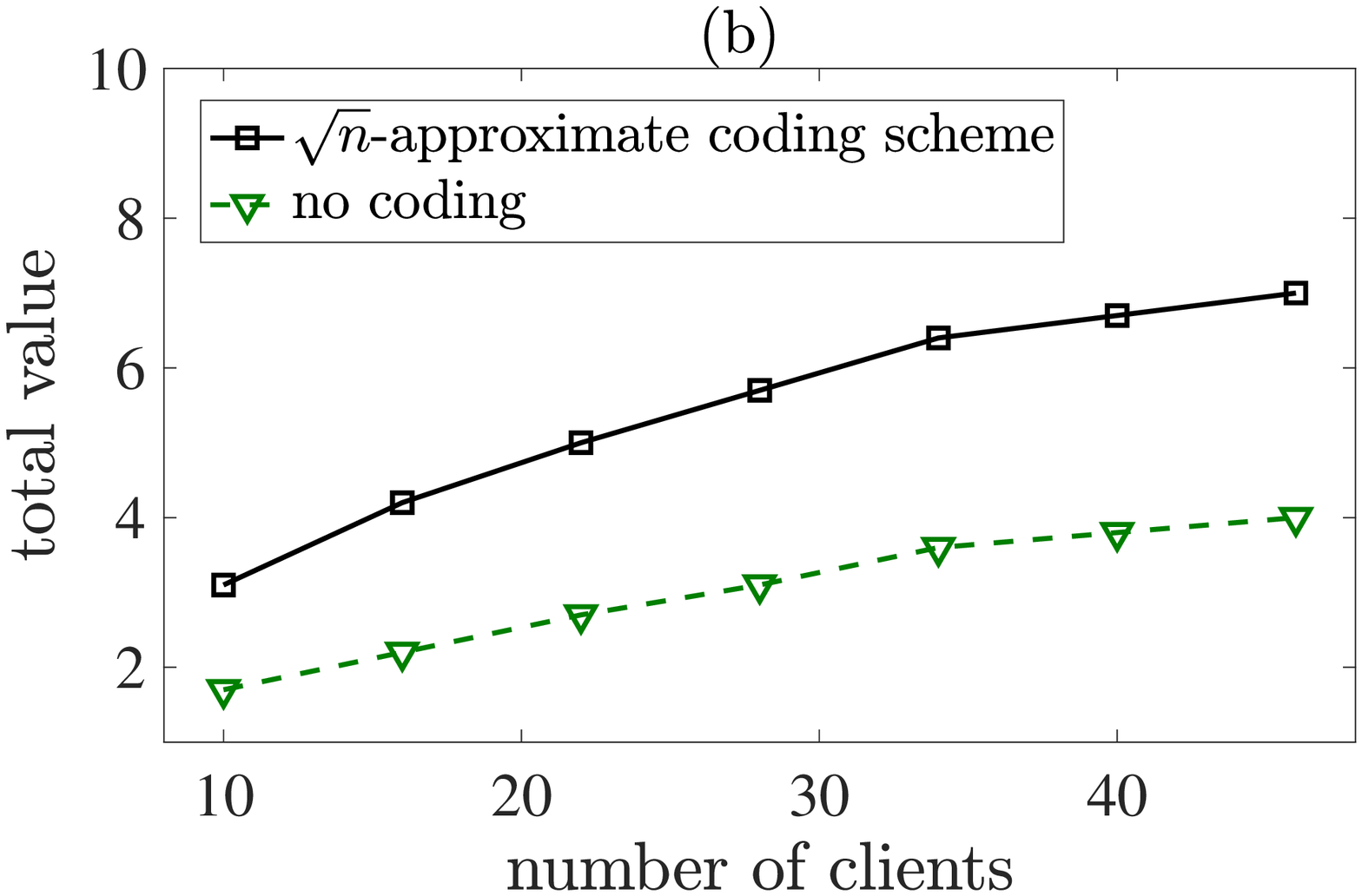}
	\end{minipage}\hfill
	\caption{Benefit from the approximate coding schemes when each client has 3 data chunks in its side information.}
	\label{fig:result34}	
\end{figure}

\begin{figure}[t]
	\begin{minipage}{.45\textwidth}
		\centering
		\includegraphics[width=\textwidth]{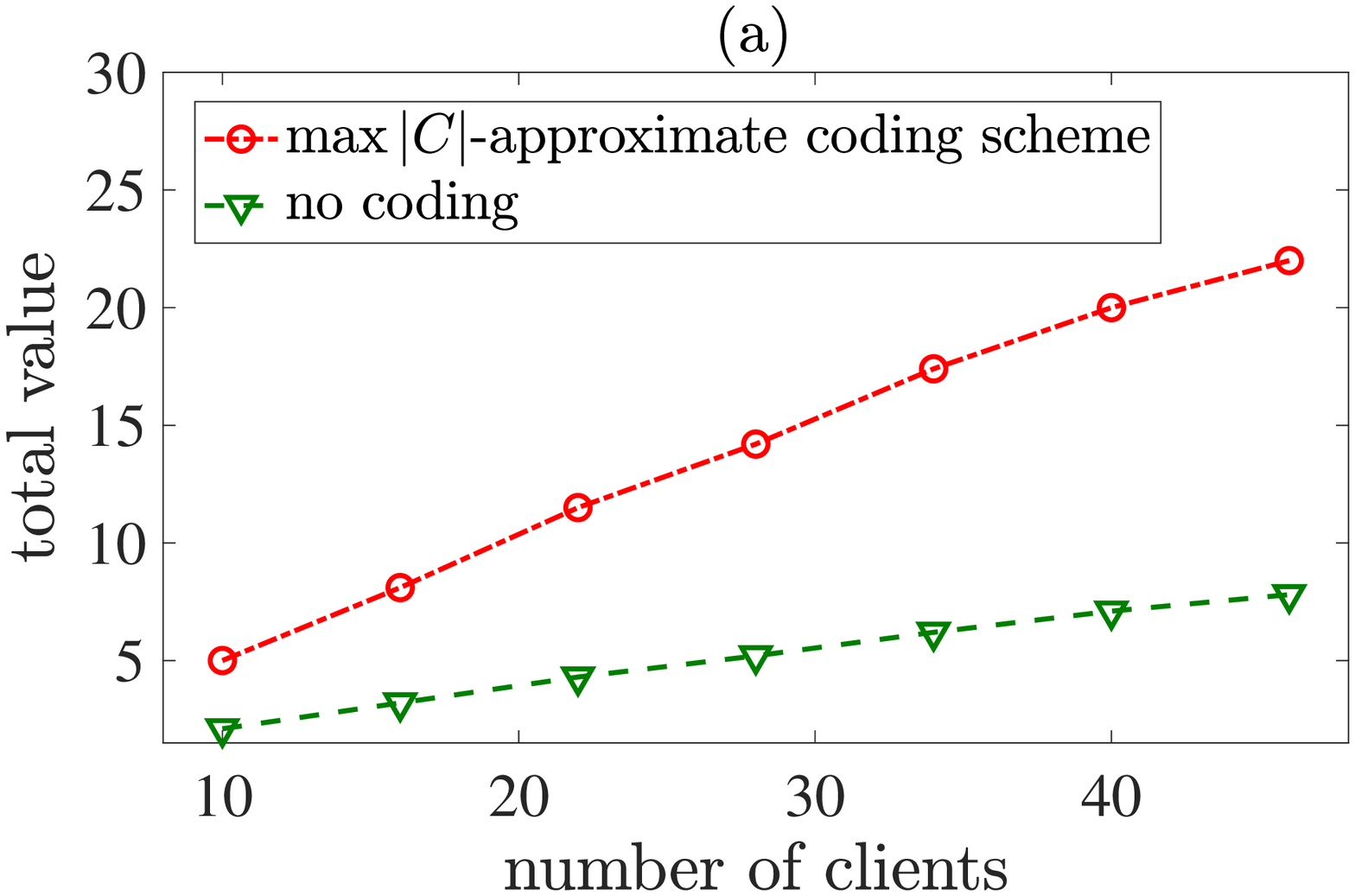}
	\end{minipage}\hfill
	\begin{minipage}{.45\textwidth}
		\centering
		\includegraphics[width=\textwidth]{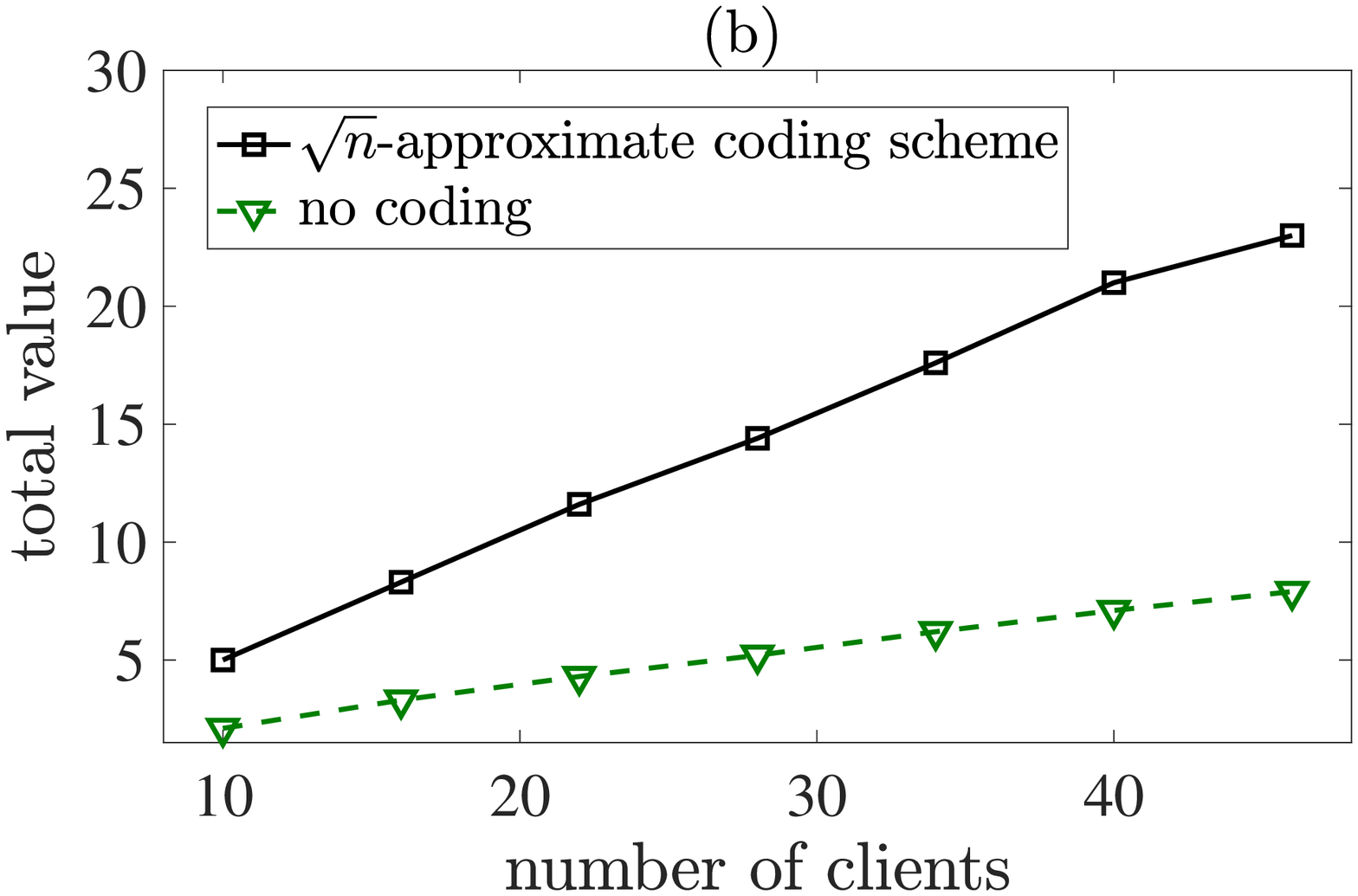}
	\end{minipage}\hfill
	\caption{Benefit from the approximate coding schemes  when each client has 6 data chunks in its side information.}
	\label{fig:result56}	
\end{figure}

While the proposed approximate coding schemes cannot be optimal in the set $\mathbf{G}$ of sparse coding matrices, we want to further validate them via simulations.  Figs.~\ref{fig:result34} and~\ref{fig:result56} display the total valuation $\sum_{i=1}^n v_i \cdot \mathbf{1}_i(H_i, G)$ of the data chunks that can be recovered. For a fair comparison of the total valuation, we have to fix the number of transmissions from the server for various coding schemes. Thus, we cannot compare the total valuation yielded by the approximate coding schemes with that yielded by Alg.~\ref{alg:poly-time} (as they cause different numbers of transmissions). Instead, Figs.~\ref{fig:result34} and~\ref{fig:result56} compare the approximate coding schemes with the best ``no coding'' scheme.  The results for the ``no coding'' scheme in Figs.~\ref{fig:result34} and~\ref{fig:result56} are obtained in the following way: we first obtain the  number  $\eta(G)$ of transmissions incurred by a given approximate coding scheme, and then calculate the sum of the top  $\eta(G)$ valuations of the data chunks in set $\mathbf{D}$ (which is the maximum total valuation when the server transmits $\eta(G)$ uncoded data chunks).  From Figs.~\ref{fig:result12} - \ref{fig:result56}, we can observe that both approximate coding schemes can improve the total valuation over the best uncoded transmission scheme.

\section{Inapproximability results for the multiple multicast scenario} \label{section:multiple-multicast}
 
Thus far, we analyzed the multiple unicast scenario. This section  analyzes the multiple multicast scenario.  Given a graph $\mathcal{G}(\mathbf{\Lambda},\mathbf{E})$ (with vertex set $\mathbf{\Lambda}$ and edge set $\mathbf{E}$) of the independent set  problem, we construct an instance of our problem as follows. For each vertex $\lambda \in \mathbf{\Lambda}$ and edge $e \in \mathbf{E}$, we construct data chunks $d_{\lambda}$ and $d_e$. The data chunk set $\mathbf{D}$  consists of $d_{\lambda}$ and $d_e$ for all $\lambda \in \mathbf{\Lambda}$ and $e \in \mathbf{E}$. For each edge $e=(x,y) \in \mathbf{E}$, we construct three clients $c_{e,1}, c_{e,2}, c_{e,3}$ such that
\begin{itemize}
	\item $w_{e,1}=d_e$, $\hat{H}_{e,1}=\{d_x, d_y\}$, $\hat{v}_{e,1}=1$,
	\item $w_{e,2}=d_x$, $\hat{H}_{e,2}=\{d_e\}$, $\hat{v}_{e,2}=\frac{1}{\text{deg}(x)}$,
	\item $w_{e,3}=d_y$, $\hat{H}_{e,3}=\{d_e\}$, $\hat{v}_{e,3}=\frac{1}{\text{deg}(y)}$,
\end{itemize}
where $\text{deg}(\lambda)$ is the number of edges that are incident to vertex $\lambda \in \mathbf{\Lambda}$. With the reduction,  Appendix~\ref{appendix:theorem:general-arg-hard-multicast} shows that the combinatorial optimization problem in Eq.~(\ref{eq:vcg-coding}) equivalently becomes the independent set problem (for both the set $\mathbf{G}$ of sparse and instantly decodable coding matrices and the set $\mathbf{G}$ of sparse and general decodable coding matrices). Thus, the social welfare under our reduction is at most the number $|\boldsymbol{\Lambda}|$ of vertices (i.e., the approximation ratio is at most $|\boldsymbol{\Lambda}|$). Following the hardness result \cite{clq-hard} of the independent set problem, the combinatorial optimization problem in Eq.~(\ref{eq:vcg-coding}) in the multiple multicast scenario is NP-hard and NP-hard to approximate within the factor of $|\boldsymbol{\Lambda}|^{1-\epsilon}$ for any constant $\epsilon>0$.

%
%
%
%
%
%
%

\section{Concluding remarks}\label{sec:conclusion}
This paper treated the index coding setting in the presence of selfish clients. We proposed a sufficient condition for truthful mechanisms (i.e., joint coding and payment schemes). Leveraging the proposed condition, we proposed truthful mechanisms, including the VCG-based mechanism and some polynomial-time mechanisms. While the VCG-based mechanism can maximize the social welfare, other polynomial-time mechanisms can either maximize the social welfare  or approximate it with provable approximation ratios within some restricted  sets of coding matrices. Following are some remarks on the assumptions made in the paper and possible extensions.

\begin{itemize}
	\item This paper supposed that each client wants a single data chunk. Consider the case when a	client wants a bundle of more than one data chunk. If the total valuation of different data chunks is the sum of their individual valuations (i.e., the client treats each data chunks in the bundle separately), then the client can be substituted by multiple clients that each wants a single data chunk in the bundle and has the same side information. In this context, the proposed mechanisms and their results hold for that case. The more general case when there are different valuations for different subsets of the bundle would be an interesting extension.  
	\item This paper assumed scalar-linear coding schemes. We made that assumption  for consistency.  In fact, while those polynomial-time mechanisms (with sparse coding schemes) need that assumption, Theorem~\ref{theorem:truthfulness} and the proposed VCG-based mechanism do not. 
	\item This paper developed polynomial-time mechanisms with sparse coding schemes. To develop a mechanism that can encode more than two data chunks is an interesting extension. A possible solution would be to construct an approximation algorithm that can identify a clique cover in a weighted dependency graph. Then, for each of the cliques, the server can broadcast a single XOR-coded data chunk combining all data chunks associated with the clique (like \cite{birk1998informed}).
	\item The inapproximability results for the multiple multicast scenario in the paper is based on the proposed VCG-based mechanism. It is a promising future work to develop a polynomial-time mechanism for the multiple multicast scenario beyond the scope of the  VCG-based mechanism.
	\item The payment schemes proposed in the paper can motivate each client to submit its true information such that the server can maximize or approximate the true social welfare. However, the payment collected by the server do not always cover the transmission cost. For example, the proposed VCG-based mechanism satisfies the fourth condition in Theorem~\ref{theorem:truthfulness}. Though that issue might be hard to avoid according to \cite[Section 9.3.5.5]{nisan2007algorithmic}, that is still an interesting future work. 
	
\end{itemize}


\appendices

\section{Proof of Theorem  \ref{theorem:truthfulness}} \label{appendix:theorem:truthfulness}
Consider a fixed valuation set $\hat{\mathbf{V}}_{-i}$ and side information set $\hat{\mathbf{H}}_{-i}$. Next, we consider two cases as follows.
\begin{enumerate}

	\item \textit{The coding scheme can construct a coding matrix $G$ such that $\mathbf{1}_i(\hat{H}_i, G)=0$ but $\mathbf{1}_i(H_i, G)=1$ for some $\hat{v}_i$ and $\hat{H}_i$}: First, suppose that client $c_i$ submits the true valuation $v_i$ and the complete side information $H_i$. By the first and fourth conditions, client $c_i$ can recover data chunk $w_i$ if $v_i>0$ and cannot if $v_i=0$. By the second and the fourth condition, client is charged 0 if it can recover $w_i$. Thus, client $c_i$ has utility $v_i$.   Second, suppose that client $c_i$ submits a valuation $\hat{v}_i\neq v_i$ or side information $\hat{H}_i \subset H_i$. By the definition of the utility, client $c_i$ has utility $(v_i-p_i)\cdot \mathbf{1}_i(H_i, G)\leq v_i$ or $(v_i-p_i)\cdot \mathbf{1}_i(\hat{H}_i, G)\leq v_i$. Thus, client $c_i$ can maximize its utility by submitting both the true valuation and the complete side information.
	\item \textit{The coding scheme cannot construct a coding matrix $G$ such that $\mathbf{1}_i(\hat{H}_i, G)=0$ but $\mathbf{1}_i(H_i, G)=1$ for all $\hat{v}_i$ and $\hat{H}_i$}: Note that, under the condition of this case, we have $\mathbf{1}_i(H_i, G)=0$ if $\mathbf{1}_i(\hat{H}_i, G)=0$. Moreover,  we also have $\mathbf{1}_i(H_i, G)=1$ if $\mathbf{1}_i(\hat{H}_i, G)=1$, because of  $\hat{H}_i \subseteq H_i$. In summary, this case has $\mathbf{1}_i(H_i, G)=\mathbf{1}_i(\hat{H}_i, G)$  for all $\hat{H}_i \subseteq H_i$ and all possible coding matrices $G$.
	Then,  we claim that \textit{for a fixed $\hat{H}_i$, client $c_i$ can maximize its utility  by submitting the true valuation $v_i$ of  data chunk $w_i$.} To prove that claim, we  consider three cases as follows.
	
	\begin{enumerate}
		\item  \textit{$v_i > \bar{v}_i(\hat{\mathbf{V}}_{-i},\hat{\mathbf{H}})$}:  First, suppose that client $c_i$ submits the true valuation $v_i$ of data chunk $w_i$. By the first and the second conditions,  client $c_i$ can recover data chunk $w_i$ (because $\mathbf{1}_i(H_i, G)=\mathbf{1}_i(\hat{H}_i, G)=1$) and is charged  threshold $\bar{v}_i(\hat{\mathbf{V}}_{-i},\hat{\mathbf{H}})$. Thus, client $c_i$ has utility   $v_i-\bar{v}_i(\hat{\mathbf{V}}_{-i},\hat{\mathbf{H}}) > 0$.
		Second, suppose that  client $c_i$  submits valuation $\hat{v}_i \neq v_i$ of data chunk $w_i$.  By the first and second conditions, if $\mathbf{1}_i(\hat{H}_i, G)=1$ for some coding matrix $G$,  then client $c_i$ has utility   $v_i-\bar{v}_i(\hat{\mathbf{V}}_{-i},\hat{\mathbf{H}})$; if $\mathbf{1}_i(\hat{H}_i, G)=0$,  then client $c_i$ has zero utility.  In summary, client $c_i$ can maximize its utility  by submitting the true valuation $v_i$ of data chunk $w_i$. 
		\item \textit{$v_i < \bar{v}_i(\hat{\mathbf{V}}_{-i},\hat{\mathbf{H}})$}: First, suppose that client $c_i$ submits the true valuation $v_i$ of data chunk $w_i$. By the first  condition, client $c_i$ has  zero utility. Second, suppose that  client $c_i$ submits valuation $\hat{v}_i \neq v_i$ of data chunk $w_i$. By the first and second conditions, if $\mathbf{1}_i(\hat{H}_i, G)=1$ for some coding matrix $G$, then  client $c_i$ has utility $v_i-\bar{v}_i(\hat{\mathbf{V}}_{-i},\hat{\mathbf{H}})<0$; if $\mathbf{1}_i(\hat{H}_i, G)=0$, then  client $c_i$ has zero utility. In summary,  	client $c_i$ can maximize its utility  by submitting the true valuation $v_i$ of data chunk $w_i$.
		\item \textit{$v_i = \bar{v}_i(\hat{\mathbf{V}}_{-i},\hat{\mathbf{H}})$}: First, suppose that client $c_i$ submits the true valuation $v_i$ of data chunk $w_i$. By the first and second conditions, client $c_i$ has zero utility  for either $\mathbf{1}_i(\hat{H}_i, G)=1$ or $\mathbf{1}_i(\hat{H}_i, G)=0$, for all possible coding matrices $G$. Second, suppose that  client $c_i$ submits valuation $\hat{v}_i \neq v_i$ of data chunk $w_i$.  By the first and second conditions, if $\mathbf{1}_i(\hat{H}_i, G)=1$ for some coding matrix $G$, then  client $c_i$ has  utility  $v_i-\bar{v}_i(\hat{\mathbf{V}}_{-i},\hat{\mathbf{H}})=0$; if $\mathbf{1}_i(\hat{H}_i, G)=0$, then  client $c_i$ has zero utility. In summary, client $c_i$ can maximize its utility  by submitting the true valuation $v_i$ of data chunk $w_i$.
	\end{enumerate}

	According to the above claim, we can suppose that client $c_i$ submits its true valuation $v_i$. It remains to show that, \textit{client $c_i$ can maximize its utility  by submitting its complete side information $H_i$.} To prove that claim, we  consider three cases as follows.
	
	\begin{enumerate}
		
		\item  \textit{$v_i > \bar{v}_i(\hat{\mathbf{V}}_{-i},\{H_i, \hat{\mathbf{H}}_{-i}\})$}: First, suppose that client $c_i$ submits the complete side information $H_i$. Client $c_i$ has utility $v_i-\bar{v}_i(\hat{\mathbf{V}}_{-i},\{H_i, \hat{\mathbf{H}}_{-i}\}) > 0$. Second, suppose that  client $c_i$  submits incomplete side information $\hat{H}_i \subset H_i$. If $\mathbf{1}_i(\hat{H}_i, G)=1$ for some coding matrix $G$, then  client $c_i$ has utility $v_i-\bar{v}_i(\hat{\mathbf{V}}_{-i},\{\hat{H}_i, \hat{\mathbf{H}}_{-i}\}) \leq v_i-\bar{v}_i(\hat{\mathbf{V}}_{-i},\{H_i, \hat{\mathbf{H}}_{-i}\})$ by the third condition; if $\mathbf{1}_i(\hat{H}_i, G)=0$,  then client $c_i$ has zero utility.  In summary, client $c_i$ can maximize its utility  by submitting the complete side information $H_i$. 
		\item \textit{$v_i < \bar{v}_i(\hat{\mathbf{V}}_{-i},\{H_i, \hat{\mathbf{H}}_{-i}\})$}: First, suppose that client $c_i$ submits the complete side information $H_i$. Client $c_i$ has zero utility. Second, suppose that  client $c_i$  submits incomplete side information $\hat{H}_i \subset H_i$. By the third condition, we have $\bar{v}_i(\hat{\mathbf{V}}_{-i},\{\hat{H}_i,\hat{\mathbf{H}}_{-i}\}) \geq \bar{v}_i(\hat{\mathbf{V}}_{-i},\{H_i, \hat{\mathbf{H}}_{-i}\}) > v_i$. Client $c_i$ has zero utility. In summary, client $c_i$ can maximize its utility  by submitting the complete side information $H_i$. 
		\item \textit{$v_i = \bar{v}_i(\hat{\mathbf{V}}_{-i},\{H_i, \hat{\mathbf{H}}_{-i}\})$}: First, suppose that client $c_i$ submits the complete side information $H_i$. Client $c_i$ has  zero utility   for either $\mathbf{1}_i(H_i, G)=1$ or $\mathbf{1}_i(H_i, G)=0$, for all possible coding matrices $G$. Second, suppose that  client $c_i$  submits incomplete side information $\hat{H}_i \subset H_i$.  If $\mathbf{1}_i(\hat{H}_i, G)=1$ for some coding matrix $G$, then  client $c_i$ has  utility  $v_i-\bar{v}_i(\hat{\mathbf{V}}_{-i},\{\hat{H}_i, \hat{\mathbf{H}}_{-i}\}) \leq v_i-\bar{v}_i(\hat{\mathbf{V}}_{-i},\{H_i, \hat{\mathbf{H}}_{-i}\}) = 0$ by the third condition; if $\mathbf{1}_i(\hat{H}_i, G)=0$, then  client $c_i$ has zero utility. In summary, client $c_i$ can maximize its utility  by submitting the complete side information $H_i$. 
	\end{enumerate}
\end{enumerate}

\section{Proof of Theorem  \ref{theorem:truthful-coding-pricing}} \label{appendix:truthful-coding-pricing}

	Let   $\hat{\mathbf{G}}_i \subseteq \mathbf{G}$ be the set of coding matrices such that $\mathbf{1}_i(\hat{H}_i, G)=1$. Moreover,  let $\hat{\mathbf{G}}^c_i \subseteq \mathbf{G}$ be the set of coding matrices such that $\mathbf{1}_i(\hat{H}_i, G)=0$. Thus, we can express set $\mathbf{G}$ in Eq.~(\ref{eq:vcg-coding}) by $\mathbf{G}=\hat{\mathbf{G}}_i \cup \hat{\mathbf{G}}^c_i$. Moreover,  let  $\mathbf{G}_i \subseteq \mathbf{G}$ and $\mathbf{G}^c_i \subseteq \mathbf{G}$  be the sets of coding matrices $G$ such that $\mathbf{1}_i(H_i, G)=1$ and $\mathbf{1}_i(H_i, G)=0$, respectively. In addition, let $\mathbf{G}^*$ be the set of coding matrices that are solutions to Eq.~(\ref{eq:vcg-coding}).

Then, we  show that the VCG-based mechanism satisfies the four conditions in Theorem \ref{theorem:truthfulness} as follows. 

\begin{enumerate}
	\item \textit{The VCG-based coding scheme is a threshold-type coding scheme}: Consider a fixed  valuation set $\hat{\mathbf{V}}_{-i}$ and a fixed side information set $\hat{\mathbf{H}}$. First, suppose that while client $c_i$ submits a valuation $\hat{v}_i$ of data chunk $w_i$, the VCG-based coding scheme \textit{can} construct a coding matrix~$G$ such that $\mathbf{1}_i(\hat{H}_i, G)=1$. That is, 
	the solution set $\mathbf{G}^*$ to Eq.~(\ref{eq:vcg-coding}) either belongs to set $\hat{\mathbf{G}}_i$ or\footnote{In the either-or, the first case corresponds to the case when $\hat{v}_i>\bar{v}_i(\hat{\mathbf{V}}_{-i},\hat{\mathbf{H}})$ and the second one  corresponds to the case when when $\hat{v}_i=\bar{v}_i(\hat{\mathbf{V}}_{-i},\hat{\mathbf{H}})$.} has a non-empty intersection with both sets $\hat{\mathbf{G}}_i$ and $\hat{\mathbf{G}}^c_i$ (with the arbitrary tie-breaking rule).
		Let $G^* \in \mathbf{G}^* \cap \hat{\mathbf{G}}_i$ be a solution to Eq.~(\ref{eq:vcg-coding}) such that $\mathbf{1}_i(\hat{H}_i, G^*)=1$. Second, suppose that  client $c_i$ submits a valuation $\tilde{v}_i > \hat{v}_i$ of data chunk $w_i$.  Because the number $\eta(G)$ of transmissions in Eq.~(\ref{eq:w})  is independent of the valuation set for a fixed coding matrix $G$, we can express function $w(\{\tilde{v}_i,\hat{\mathbf{V}}_{-i}\}, \hat{\mathbf{H}}, G)$ by
	\begin{align}
		w(\{\tilde{v}_i,\hat{\mathbf{V}}_{-i}\}, \hat{\mathbf{H}}, G)=\left\{
		\begin{array}{ll}
		w(\{\hat{v}_i,\hat{\mathbf{V}}_{-i}\}, \hat{\mathbf{H}}, G) + (\tilde{v}_i-\hat{v}_i) & \text{if $G \in \hat{\mathbf{G}}_i$;}\\
		w(\{\hat{v}_i,\hat{\mathbf{V}}_{-i}\}, \hat{\mathbf{H}}, G) & \text{if $G \in \hat{\mathbf{G}}^c_i$,}	
		\end{array}		
	\right.
	\label{eq:w-function-relation}
	\end{align}

	Then, we can obtain
	\begin{align*}
	w(\{\tilde{v}_i,\hat{\mathbf{V}}_{-i}\}, \hat{\mathbf{H}}, G^*)\mathop{=}^{(a)}&w(\{\hat{v}_i,\hat{\mathbf{V}}_{-i}\}, \hat{\mathbf{H}}, G^*) + (\tilde{v}_i-\hat{v}_i)\\
	\mathop{\geq}^{(b)} & w(\{\hat{v}_i,\hat{\mathbf{V}}_{-i}\}, \hat{\mathbf{H}}, G) + (\tilde{v}_i-\hat{v}_i)  \,\,\,\text{for all $G \in  \hat{\mathbf{G}}^c_i$}\\
	\mathop{>}^{(c)}& w(\{\tilde{v}_i,\hat{\mathbf{V}}_{-i}\}, \hat{\mathbf{H}}, G) \,\,\,\text{for all $G \in \hat{\mathbf{G}}^c_i$,}
	\end{align*}  
	where (a) is from Eq.~(\ref{eq:w-function-relation}) along with $G^* \in \hat{\mathbf{G}}_i$; (b) is because coding matrix $G^*$ maximizes function $w(\{\hat{v}_i,\hat{\mathbf{V}}_{-i}\}, \hat{\mathbf{H}}, G)$; (c) is from Eq.~(\ref{eq:w-function-relation}) and $\tilde{v}_i > \hat{v}_i$. Thus, while client $c_i$ submits valuation $\tilde{v}_i$ ($>\hat{v}_i$) of  data chunk $w_i$, the set $\mathbf{G}^*$ of solutions to Eq.~(\ref{eq:vcg-coding}) belongs to $\hat{\mathbf{G}}_i$. That is, the VCG-based coding scheme constructs a coding matrix $G$ such that $\mathbf{1}_i(\hat{H}_i, G)=1$ for sure. Thus, the VCG-based coding scheme is a threshold-type coding scheme.

	\item \textit{The VCG-based payment scheme determines payment $p_i=\bar{v}_i(\hat{\mathbf{V}}_{-i}, \hat{\mathbf{H}})$ if $\mathbf{1}_i(\hat{H}_i, G^*)=1$}: Consider a fixed valuation set $\hat{\mathbf{V}}_{-i}$ and a fixed  side information set $\hat{\mathbf{H}}$. We claim that \textit{threshold $\bar{v}_i(\hat{\mathbf{V}}_{-i}, \hat{\mathbf{H}})$ of the VCG-based coding scheme is}
	\begin{align}
	\bar{v}_i(\hat{\mathbf{V}}_{-i}, \hat{\mathbf{H}})=\max_{G \in \hat{\mathbf{G}}_i^c} w(\{0, \hat{\mathbf{V}}_{-i}\},\hat{\mathbf{H}}, G)-\max_{G \in \hat{\mathbf{G}}_i} w(\{0, \hat{\mathbf{V}}_{-i}\},\hat{\mathbf{H}}, G). \label{eq:threshold}
	\end{align}
	Then, the first term of Eq.~(\ref{eq:vcg-pricing}) 
	\begin{align}
	\max_{G \in \hat{\mathbf{G}}_i \cup \hat{\mathbf{G}}^c_i} w(\{0, \hat{\mathbf{V}}_{-i}\},\hat{\mathbf{H}}, G)=\max_{G \in \hat{\mathbf{G}}^c_i} w(\{0, \hat{\mathbf{V}}_{-i}\},\hat{\mathbf{H}}, G) \label{eq:vcg-pricing-1}
	\end{align}
	is the first term of Eq.~(\ref{eq:threshold}) because of $\hat{v}_i=0$. Moreover, the second term of Eq.~(\ref{eq:vcg-pricing}) 
	\begin{align*}
	w(\{0, \hat{\mathbf{V}}_{-i}\},\hat{\mathbf{H}}, G^*)\mathop{=}^{(a)}&w(\{\hat{v}_i,\hat{\mathbf{V}}_{-i}\},\hat{\mathbf{H}}, G^*)-\hat{v}_i\\
	\mathop{=}^{(b)}&\max_{G \in \hat{\mathbf{G}}_i} w(\{\hat{v}_i,\hat{\mathbf{V}}_{-i}\},\hat{\mathbf{H}}, G)-\hat{v}_i\\
	\mathop{=}^{(c)}&\max_{G \in \hat{\mathbf{G}}_i} w(\{0, \hat{\mathbf{V}}_{-i}\},\hat{\mathbf{H}}, G)
	\end{align*}
	is the second term of Eq.~(\ref{eq:threshold}), where (a) is from Eqs.~(\ref{eq:w-function-relation})  along with $G^* \in \hat{\mathbf{G}}_i$ (because Eq.~(\ref{eq:vcg-pricing}) calculates the price for the case when $\mathbf{1}_i(\hat{H}_i, G^*)=1$); (b) is because $G^*$ maximizes function $w(\{\hat{v}_i,\hat{\mathbf{V}}_{-i}\},\hat{\mathbf{H}}, G^*)$ along with $G^* \in \hat{\mathbf{G}}_i$; (c) is from Eq.~(\ref{eq:w-function-relation}). Then, we complete the proof if the claim is true. 
	
	To establish that claim, we first consider the case when client $c_i$ submits a valuation $\hat{v}_i >\max_{G \in \hat{\mathbf{G}}^c_i} w(\{0, \hat{\mathbf{V}}_{-i}\},\hat{\mathbf{H}}, G)-\max_{G \in \hat{\mathbf{G}}_i} w(\{0, \hat{\mathbf{V}}_{-i}\},\hat{\mathbf{H}}, G)$ of data chunk $w_i$. Then, we can obtain
	\begin{align*}
	\max_{G \in \hat{\mathbf{G}}_i} w(\{\hat{v}_i,\hat{\mathbf{V}}_{-i}\}, \hat{\mathbf{H}}, G)\mathop{=}^{(a)}&\max_{G \in \hat{\mathbf{G}}_i} w(\{0, \hat{\mathbf{V}}_{-i}\}, \hat{\mathbf{H}}, G)+\hat{v}_i\\
	\mathop{>}^{(b)}& \max_{G \in \hat{\mathbf{G}}^c_i}w(\{0, \hat{\mathbf{V}}_{-i}\}, \hat{\mathbf{H}}, G)\\
	\mathop{=}^{(c)}&\max_{G \in \hat{\mathbf{G}}^c_i} w(\{\hat{v}_i,\hat{\mathbf{V}}_{-i}\}, \hat{\mathbf{H}}, G),
	\end{align*}
	where (a) and (c) are from Eq.~(\ref{eq:w-function-relation}); (b) is by the assumption of $\hat{v}_i$. Thus, when $\hat{v}_i> \max_{G \in \hat{\mathbf{G}}^c_i} w(\{0, \hat{\mathbf{V}}_{-i}\},\hat{\mathbf{H}}, G)-\max_{G \in \hat{\mathbf{G}}_i} w(\{0, \hat{\mathbf{V}}_{-i}\},\hat{\mathbf{H}}, G)$, the VCG-based coding scheme constructs a coding matrix~$G$ such that $\mathbf{1}_i(\hat{H}_i, G)=1$. 
	Second, similar to the above argument, when $\hat{v}_i < \max_{G \in \hat{\mathbf{G}}^c_i} w(\{0, \hat{\mathbf{V}}_{-i}\},\hat{\mathbf{H}}, G)-\max_{G \in \hat{\mathbf{G}}_i} w(\{0, \hat{\mathbf{V}}_{-i}\},\hat{\mathbf{H}}, G)$,  the VCG-based coding scheme constructs a coding matrix $G$ such that $\mathbf{1}_i(\hat{H}_i, G)=0$. Fully considering both cases  establishes Eq.~(\ref{eq:threshold}). 

		\item \textit{$\bar{v}_i(\hat{\mathbf{V}}_{-i},\{H_i, \hat{\mathbf{H}}_{-i}\}) \leq \bar{v}_i(\hat{\mathbf{V}}_{-i},\{\hat{H}_i, \hat{\mathbf{H}}_{-i}\})$  under the VCG-based coding scheme for all $\hat{H}_i \subseteq H_i$}: 
Consider a fixed valuation set $\hat{\mathbf{V}}_{-i}$ and a fixed side information set $\hat{\mathbf{H}}_{-i}$. Then, we can obtain  
\begin{align*}
	\bar{v}_i(\hat{\mathbf{V}}_{-i},\{H_i, \hat{\mathbf{H}}_{-i}\})\mathop{=}^{(a)}&\max_{G \in \mathbf{G}} w(\{0, \hat{\mathbf{V}}_{-i}\},\{H_i, \hat{\mathbf{H}}_{-i}\}, G)-\max_{G \in \mathbf{G}_i} w(\{0, \hat{\mathbf{V}}_{-i}\},\{H_i, \hat{\mathbf{H}}_{-i}\}, G)\\
	\mathop{=}^{(b)}&\max_{G \in \mathbf{G}} w(\{0, \hat{\mathbf{V}}_{-i}\},\{\hat{H}_i, \hat{\mathbf{H}}_{-i}\}, G)-\max_{G \in \mathbf{G}_i} w(\{0, \hat{\mathbf{V}}_{-i}\},\{\hat{H}_i, \hat{\mathbf{H}}_{-i}\}, G)\\
	\mathop{\leq}^{(c)} & \max_{G \in \mathbf{G}} w(\{0, \hat{\mathbf{V}}_{-i}\},\{\hat{H}_i, \hat{\mathbf{H}}_{-i}\}, G)-\max_{G \in \hat{\mathbf{G}}_i} w(\{0, \hat{\mathbf{V}}_{-i}\},\{\hat{H}_i, \hat{\mathbf{H}}_{-i}\}, G)\\
	\mathop{=}^{(d)}&\bar{v}_i(\hat{\mathbf{V}}_{-i}, \{\hat{H}_i, \hat{\mathbf{H}}_{-i}\}),
\end{align*}
for all $\hat{H}_i \subseteq H_i$,		where (a) and (d) are from Eqs.~(\ref{eq:threshold}) and (\ref{eq:vcg-pricing-1}); (b) is because $\hat{v}_i=0$; (c) is because $\hat{\mathbf{G}}_i \subseteq \mathbf{G}_i$. 
		\item \textit{If the VCG-based coding scheme in Eq.~(\ref{eq:vcg-coding}) can construct  a coding matrix $G^* \in \mathbf{G}$ such that $\mathbf{1}_i(\hat{H}_i, G^*)=0$ but $\mathbf{1}_i(H_i, G^*)=1$ for some $\hat{v}_i$ and $\hat{H}_i$, then $\bar{v}_i(\hat{\mathbf{V}}_{-i},\{H_i, \hat{\mathbf{H}}_{-i}\})=0$}: Consider a fixed valuation set $\hat{\mathbf{V}}_{-i}$  and a fixed side information set $\hat{\mathbf{H}}_{-i}$.  Suppose that client $c_i$ submits a valuation $\tilde{v}_i>0$ and its complete side information $H_i$. Then, we can obtain 
			\begin{align*}
			&w(\{\tilde{v}_i,\hat{\mathbf{V}}_{-i}\}, \{H_i, \hat{\mathbf{H}}_{-i}\}, G^*)\\
			\mathop{=}^{(a)}&w(\{\hat{v}_i,\hat{\mathbf{V}}_{-i}\}, \{\hat{H}_i, \hat{\mathbf{H}}_{-i}\}, G^*)+\tilde{v}_i\\
			\mathop{\geq}^{(b)} & w(\{\hat{v}_i,\hat{\mathbf{V}}_{-i}\},  \{\hat{H}_i, \hat{\mathbf{H}}_{-i}\}, G)+\tilde{v}_i\,\,\,\text{for all $G\in \mathbf{G}_i^c$}\\
			\mathop{=}^{(c)}& w(\{\hat{v}_i,\hat{\mathbf{V}}_{-i}\}, \{H_i, \hat{\mathbf{H}}_{-i}\}, G)+\tilde{v}_i\,\,\,\text{for all $G\in \mathbf{G}_i^c$}\\
			\mathop{>}^{(d)}&w(\{\hat{v}_i,\hat{\mathbf{V}}_{-i}\}, \{H_i, \hat{\mathbf{H}}_{-i}\}, G)\,\,\,\text{for all $G\in \mathbf{G}_i^c$}\\
			\mathop{=}^{(e)}&w(\{\tilde{v}_i,\hat{\mathbf{V}}_{-i}\}, \{H_i, \hat{\mathbf{H}}_{-i}\}, G)\,\,\,\text{for all $G\in \mathbf{G}_i^c$,}
			\end{align*}
		where (a) is because $\mathbf{1}_i(\hat{H}_i, G^*)=0$ but $\mathbf{1}_i(H_i, G^*)=1$ by assumption; (b) is because $G^*$ maximizes  function $w(\{\hat{v}_i, \hat{\mathbf{V}}_{-i}\}, \{\hat{H}_i, \hat{\mathbf{H}}_{-i}\}, G)$; (c) is from Eq.~(\ref{eq:w-function-relation}) along with $\hat{\mathbf{G}}_i^c \subseteq \mathbf{G}_i^c$; (d) is because $\tilde{v}_i>0$; (e) is from Eq.~(\ref{eq:w-function-relation}). Thus, when $\tilde{v}_i>0$, the VCG-based coding scheme constructs a coding matrix $G$ such that $\mathbf{1}_i(H_i, G)=1$ for sure. Thus, the threshold is $\bar{v}_i(\hat{\mathbf{V}}_{-i},\{H_i, \hat{\mathbf{H}}_{-i}\})=0$.

\end{enumerate}

\section{NP-hardness of Eq.~(\ref{eq:vcg-coding}) in a general set $\mathbf{G}$ of coding matrices}\label{appendix:proposition:general-hard}
We construct a reduction from the original index coding problem, whose objective is to identify a coding matrix for minimizing the number  of transmissions. Given a network instance of the original index coding problem, we construct the same  network instance for our  problem; moreover, set $\hat{v}_i=1$ and $\hat{H}_i=H_i$ for all $i$. In this context, the VCG-based coding scheme can construct a coding matrix $G^*$ such that $\mathbf{1}_i(\hat{H}_i,G^*)=1$ for all $i$,  yielding $w(\hat{\mathbf{V}}, \hat{\mathbf{H}}, G) = n-\eta(G^*)$. Because of the constant $n$, the combinatorial optimization problem in Eq.~(\ref{eq:vcg-coding})  becomes the index coding problem. Then, the NP-hardness of the index coding problem causes the NP-hardness of our problem.

\section{NP-hardness of Eq.~(\ref{eq:vcg-coding}) in the set $\mathbf{G}$ of sparse coding matrices} \label{appendix:lemma:multiple-unicast-hard}
We construct a reduction from the cycle packing problem \cite{cycle-packing}, whose objective is to identify the maximum number of disjoint cycles in a directed graph. Given a directed graph $\mathcal{G}(\mathbf{\Lambda}, \mathbf{A})$ (with  vertex set $\mathbf{\Lambda}$ and  arc set $\mathbf{A}$)  of the cycle packing problem, we construct the weighted dependency graph $\mathcal{G}(\mathbf{\Lambda}, \mathbf{A}, \mathbf{\Gamma})$ with arc weight $\gamma_{a}=1$ for all $a \in \mathbf{A}$. Then, similar to Appendix~\ref{appendix:proposition:general-hard}, we can express function $w(\hat{\mathbf{V}}, \hat{\mathbf{H}}, G)$ in Eq.~(\ref{eq:w-cycle1}) by $w(\hat{\mathbf{V}}, \hat{\mathbf{H}}, G)=n-(n-|\mathbf{C}|)$ for  coding matrix $G$ transmitting along a set $\mathbf{C}$ of disjoint cycles, where the second term $(n-|\mathbf{C}|)$ is the total number $\eta(G)$ of transmissions (because each cycle can save one transmission). Then, the combinatorial optimization problem in Eq.~(\ref{eq:vcg-coding})  becomes the cycle packing problem. The NP-hardness of the cycle packing problem causes the NP-hardness of our problem.

\section{Proof of Theorem \ref{theorem:apx-ratio} }
\label{appendix:theorem:apx-ratio}

Let $\mathbf{C}^*$ be a set of those disjoint cycles  that maximizes the total cycle weight in  weighted dependency graph $\mathcal{G}(\mathbf{\Lambda}, \mathbf{A}, \mathbf{\Gamma})$.
Let graph $\mathcal{G}(\mathbf{\Lambda}_k,\mathbf{A}_k, \mathbf{\Gamma}_k)$ be the remaining graph at the beginning of iteration $k$.  
Let $C_k$ be a cycle minimizing the cycle cost in graph $\mathcal{G}(\mathbf{\Lambda}_k,\mathbf{A}_k, \mathbf{\Gamma}_k)$. Let $\mathbf{C}^*_k$ be the set of those cycles that appear in  set $\mathbf{C}^*$ and also in graph $\mathcal{G}(\mathbf{\Lambda}_k,\mathbf{A}_k, \mathbf{\Gamma}_k)$. By $C_k \cap \mathbf{C}^*_k$ we denote the set of those cycles in $\mathbf{C}^*_k$ that has a common vertex with cycle $C_k$.

We define $APX_k=\gamma(C_k)$ and $OPT_k=\sum_{C \in (C_k \cap \mathbf{C}^*_k)} \gamma(C)$. Because cycle $C_k$ maximizes the cycle weight in iteration $k$, any cycle $C \in C_k \cap \mathbf{C}^*_k$ has cycle weight $\gamma(C)$ less than or equal to $APX_k$. Moreover, since there are $|C_k|$ vertices in cycle $C_k$,  there are at most $|C_k|$  cycles in set $C_k \cap \mathbf{C}^*_k$ (because those cycles are disjoint).  Thus, we can obtain $OPT_k \leq |C_k|\cdot APX_k$. Then, we can complete the proof (by Eq.~(\ref{eq:w-cycle2})) as follow:
\begin{align*}
w(\hat{\mathbf{V}},\hat{\mathbf{H}}, G^*)=&w([\hat{\mathbf{V}}]^+_1,\hat{\mathbf{H}}, G^*)+\sum_{\hat{v}_i\geq 1}(\hat{v}_i-1)\\
\leq&\sum_{k}OPT_k+\sum_{\hat{v}_i\geq 1}(\hat{v}_i-1)\\
\leq& \sum_k |C_k| \cdot APX_k +\sum_{\hat{v}_i\geq 1}(\hat{v}_i-1)\\
\leq& \max_k|C_k| \left(\sum_{k}APX_k+\sum_{\hat{v}_i\geq 1}(\hat{v}_i-1)\right)\\
=&\max_k |C_k|  \left(w([\hat{\mathbf{V}}]^+_1,\hat{\mathbf{H}},G_{\text{alg~\ref{alg:greedy-vcg-coding}}})+\sum_{\hat{v}_i\geq 1}(\hat{v}_i-1)\right)\\
=&\max_k |C_k| \cdot w(\hat{\mathbf{V}},\hat{\mathbf{H}},G_{\text{alg~\ref{alg:greedy-vcg-coding}}}),
\end{align*}
where $[\hat{\mathbf{V}}]^+_1=([\hat{v}_1]^+_1, \cdots, [\hat{v}_n]^+_1)$.

\section{Proof of Theorem \ref{theorme:alg1-truthful} }\label{appendix:alg1-truthful}
We  show that the mechanism consisting of the coding scheme in Alg.~\ref{alg:greedy-vcg-coding} and the payment scheme in Alg.~\ref{alg:greedy-vcg-pricing}   satisfies the four conditions in Theorem \ref{theorem:truthfulness} as follows. 
 
\begin{enumerate}
	\item \textit{Alg.~\ref{alg:greedy-vcg-coding} is a threshold-type coding scheme}: First, suppose that while client $c_i$ submits a valuation $\hat{v}_i$ of data chunk $w_i$,  Alg.~\ref{alg:greedy-vcg-coding} encodes along a cycle $C$ containing vertex $\lambda_i$ in some iteration $k$. Second, suppose that client $c_i$ submits a valuation $\tilde{v}_i >\hat{v}_i$. If Alg.~\ref{alg:greedy-vcg-coding} encodes along a cycle containing vertex $\lambda_i$ before iteration $k$, then we complete the proof; otherwise, if Alg.~\ref{alg:greedy-vcg-coding} cannot, then the cycle $C$ also  maximizes  cycle weight $\gamma(C)$ in iteration $k$ (following Appendix~\ref{appendix:truthful-coding-pricing}). Then, we complete the proof. 
	
	\item \textit{Alg.~\ref{alg:greedy-vcg-pricing} determines payment $p_i=\bar{v}_i(\hat{\mathbf{V}}_{-i}, \hat{\mathbf{H}})$}: Following  Appendix~\ref{appendix:truthful-coding-pricing}, we can establish that $\zeta(C_2)-\zeta(C_1)$ in Line~\ref{alg:greedy-vcg-pricing:s1} of Alg.~\ref{alg:greedy-vcg-pricing} (for an iteration) calculates the minimum valuation $\hat{v}_i$ submitted by client $c_i$ such that  Alg.~\ref{alg:greedy-vcg-coding} encodes along a cycle containing vertex $\lambda_i$ (for that iteration). In particular, in the last iteration, Line~\ref{alg:greedy-vcg-pricing:s1} of Alg. \ref{alg:greedy-vcg-pricing}  produces the minimum valuation $\hat{v}_i$ such that  Alg.~\ref{alg:greedy-vcg-coding} can encode along a cycle containing $\lambda_i$.  Then, we complete the proof. 
	
	\item \textit{$\bar{v}_i(\hat{\mathbf{V}}_{-i},\{H_i, \hat{\mathbf{H}}_{-i}\}) \leq \bar{v}_i(\hat{\mathbf{V}}_{-i},\{\hat{H}_i, \hat{\mathbf{H}}_{-i}\})$  under Alg.~\ref{alg:greedy-vcg-coding}} for all  $\hat{H}_i \subseteq H_i$: Following  Appendix~\ref{appendix:truthful-coding-pricing}, we can establish that the valuation of $\zeta(C_2)-\zeta(C_1)$ for side information set $\{H_i, \hat{\mathbf{H}}_{-i}\}$ is less than or equal to that for side information set $\{\hat{H}_i, \hat{\mathbf{H}}_{-i}\}$ (for each iteration) for all $\hat{H}_i \subseteq H_i$, yielding the result.
	
	\item  Note that when Alg.~\ref{alg:greedy-vcg-coding} encodes coding matrix $G$ along a set of cycles,  we can obtain $\mathbf{1}_i(\hat{H}_i,G)=1$ for clients $c_i$ in the cycles and $\mathbf{1}_i(\hat{H}_i,G)=0$ for clients $c_i$ that are not in those cycles. 	 Moreover, we can also obtain $\mathbf{1}_i(H_i,G)=0$ for each clients $c_i$  that is not in the cycles because no coded data chunks (along the cycles) includes $d_{w_i}$. That is, if $\mathbf{1}_i(\hat{H}_i,G)=0$, then $\mathbf{1}_i(H_i, G)=0$  for all coding matrices $G$ generated by Alg.~\ref{alg:greedy-vcg-coding}. Thus, that case in the fourth condition does not occur.
		

\end{enumerate}

\section{Correctness of Alg.~\ref{alg:max-sqrt-weight}} \label{appendix:lemma:maximum-valuation-cycle}
We prove the lemma by  induction on iteration $i$. When iteration $i=2$, Alg.~\ref{alg:max-sqrt-weight} obviously can produce cycle $C$  for maximizing $\frac{\gamma(C)}{\sqrt{|C|}}$ for $|C| \leq 2$. Next, suppose that when iteration $i=k$,  Alg.~\ref{alg:max-sqrt-weight} produces cycle $C_k$ for maximizing $\frac{\gamma(C)}{\sqrt{|C|}}$ for $|C| \leq k$. Suppose that when iteration $i=k+1$, Line \ref{alg-max-sqrt:fix-i} of Alg.~\ref{alg:max-sqrt-weight} produces cycle $C_{k+1}$  maximizing $\gamma(C)$ for $|C| \leq k+1$. Note that $C_k$ maximizes $\frac{\gamma(C)}{\sqrt{|C|}}$ for $|C| \leq k$; moreover, if $|C_{k+1}|=k+1$, then cycle $C_{k+1}$ maximizes $\frac{\gamma(C)}{\sqrt{|C|}}$ for $|C|=k+1$. Thus, either cycle $C_{k}$ or cycle $C_{k+1}$ maximizes $\frac{\gamma(C)}{\sqrt{|C|}}$ for $|C| \leq k+1$. Thus, Line~\ref{alg-max-sqrt:opt} can produce a cycle $C$ for maximizing $\frac{\gamma(C)}{\sqrt{|C|}}$ for $|C| \leq k+1$. Then, we complete the proof.

\section{Proof of Theorem \ref{theorem:sqrt-approximation} }\label{appendix:theorem:sqrt-approximation}
Follow the notation set in Appendix~\ref{appendix:theorem:apx-ratio}; however, re-define   $C_k$ as a cycle  maximizing $\frac{\gamma(C)}{\sqrt{|C|}}$ in graph $\mathcal{G}(\mathbf{\Lambda}_k,\mathbf{A}_k, \mathbf{\Gamma}_k)$, i.e., $\frac{\gamma(C_k)}{\sqrt{|C_k|}} \geq \frac{\gamma(C)}{\sqrt{|C|}}$ for all cycles $C$ in graph $\mathcal{G}(\mathbf{\Lambda}_k,\mathbf{A}_k, \mathbf{\Gamma}_k)$. Then, we can bound $OPT_k$ above by
\begin{align*}
OPT_k =& \sum_{C \in C_k \cap \mathbf{C}^*_k} \gamma(C) \\
\leq& \frac{\gamma(C_k)}{\sqrt{|C_k|}} \sum_{C \in C_k \cap \mathbf{C}^*_k} \sqrt{|C|} \\
\mathop{\leq}^{(a)}& \frac{APT_k}{\sqrt{|C_k|}} \sqrt{|C_k \cap \mathbf{C}^*_k|} \sqrt{\sum_{C \in C_k \cap \mathbf{C}^*_k} |C|}\\
\mathop{\leq}^{(b)}& \frac{APT_k}{\sqrt{|C_k|}} \sqrt{|C_k|} \sqrt{n}\\
=& \sqrt{n} \cdot APT_k, 
\end{align*}
where (a) is from  the Cauchy-Schwarz inequality; (b) is because  all cycles $C \in C_k \cap \mathbf{C}^*_k$ are disjoint. Hence, $OPT_k \leq \sqrt{n} \cdot APX_k $, yielding the approximation ratio of $\sqrt{n}$.

\section{Hard to approximate Eq.~(\ref{eq:vcg-coding}) in the multiple multicast scenario}\label{appendix:theorem:general-arg-hard-multicast}
We focus on the set $\mathbf{G}$ of sparse and instantly decodable coding matrices. The proof needs the following three technical lemmas.

\begin{lemma} \label{lemma:side-1}
	For the instance of our problem constructed from graph $\mathcal{G}(\mathbf{\Lambda},\mathbf{E})$ of the independent set problem, we can obtain $\mathbf{1}_i(\hat{H}_i, G^*)=1$ for a coding matrix $G^*$ from Eq. (\ref{eq:vcg-coding}).
\end{lemma}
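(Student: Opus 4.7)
The plan is to interpret Lemma~\ref{lemma:side-1} as asserting that there exists an optimal solution $G^*$ to Eq.~(\ref{eq:vcg-coding}) (within the restricted set $\mathbf{G}$ of sparse instantly decodable coding matrices) such that $\mathbf{1}_i(\hat{H}_i, G^*)=1$ for every client $c_i$ with $\hat{v}_i\geq 1$. In the reduction these are precisely the clients $c_{e,1}$ for $e\in\mathbf{E}$ (each having valuation~$1$), so the lemma provides a baseline optimum that always serves the unit-valuation clients; subsequent lemmas in Appendix~\ref{appendix:theorem:general-arg-hard-multicast} can then use this structural property to align the welfare contributions of an optimal $G^*$ with the vertices of an independent set.

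The proof would proceed by a simple local-exchange argument. Fix an arbitrary optimal $G^*\in\mathbf{G}$ and suppose that some client $c_i$ with $\hat{v}_i\geq 1$ is not served, i.e., $\mathbf{1}_i(\hat{H}_i,G^*)=0$. I would construct $G^{**}$ by appending to $G^*$ one additional row, namely the coding vector of the uncoded transmission of $d_{w_i}$. This vector has a single nonzero coefficient, so $G^{**}$ remains sparse and is trivially instantly decodable; hence $G^{**}\in\mathbf{G}$.

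Next, I would compare social welfares via Eq.~(\ref{eq:w}). The number of transmissions $\eta(\cdot)$ grows by exactly one, so the transmission-cost term increases by $1$. On the other hand, client $c_i$ now decodes $d_{w_i}$ directly from the appended uncoded row, contributing $\hat{v}_i$ to the valuation term; moreover, for every other client $c_j$ we have $\mathbf{1}_j(\hat{H}_j,G^{**})\geq \mathbf{1}_j(\hat{H}_j,G^*)$, since appending a row can only enlarge the set of decodable combinations. Therefore the welfare changes by at least $\hat{v}_i-1\geq 0$, and combined with the optimality of $G^*$ this forces equality. Thus $G^{**}$ is again an optimal solution of Eq.~(\ref{eq:vcg-coding}) and additionally satisfies $\mathbf{1}_i(\hat{H}_i,G^{**})=1$. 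Iterating this swap once for each unserved unit-valuation client produces the desired optimal $G^*$.

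The main obstacle is pinning down the intended quantification in the lemma statement: once one commits to ``every client $c_i$ with $\hat{v}_i\geq 1$'', the argument is a routine monotone-augmentation swap. If the lemma is instead meant to cover \emph{all} clients---including $c_{e,2}$ and $c_{e,3}$, whose valuations $1/\text{deg}(x)$ and $1/\text{deg}(y)$ are strictly below~$1$---the naive append-an-uncoded-chunk trick strictly decreases welfare, and a more delicate substitution would be required (for instance, replacing an existing $d_e$ transmission in $G^*$ by $d_e+d_x$ to pick up $c_{e,2}$ without extra cost), together with a careful case analysis over which of $d_e$, $d_e+d_x$, $d_e+d_y$ already appears in $G^*$ and over competing transmissions that could be serving $c_{e,2}$ or $c_{e,3}$.
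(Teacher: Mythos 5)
Your proposal establishes the lemma only for the clients $c_{e,1}$ with $\hat{v}_{e,1}=1$, and you explicitly leave open the case of the clients $c_{e,2}$ and $c_{e,3}$ with valuations $1/\text{deg}(x)$ and $1/\text{deg}(y)$. That restricted reading is not the one the paper needs: in the concluding computation of Appendix~\ref{appendix:theorem:general-arg-hard-multicast} the welfare of $G^*$ is written as $\sum_{e}\hat{v}_{e,1}+\sum_{e}(\hat{v}_{e,2}+\hat{v}_{e,3})-\eta(G^*)=|\mathbf{E}|+|\mathbf{\Lambda}|-\eta^*$, which presupposes $\mathbf{1}_i(\hat{H}_i,G^*)=1$ for \emph{every} client, including those with sub-unit valuations. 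So the part you defer is exactly the substance of the lemma, and the deferred case is a genuine gap rather than a technicality.

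The paper closes that gap with a two-case exchange argument that your sketch does not reach. Suppose $c_{e,2}$ (wanting $d_x$, holding $d_e$) is unserved by an optimal $G^*$. If some sibling client $c_{e',2}$ wanting $d_x$ is served, then (with sparse, instantly decodable rows and side information $\{d_{e'}\}$) $G^*$ must contain the row $d_x+d_{e'}$; replacing that row by the uncoded row $d_x$ keeps $c_{e',2}$ served, additionally serves $c_{e,2}$, and thus raises the welfare by at least $1/\text{deg}(x)$, contradicting optimality. If instead no client wanting $d_x$ is served, append the uncoded row $d_x$: all $\text{deg}(x)$ clients wanting $d_x$ become served, and since each has valuation exactly $1/\text{deg}(x)$, the gained valuation is exactly $1$, offsetting the unit transmission cost, so the augmented matrix is again optimal and serves $c_{e,2}$. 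The crucial idea you are missing is precisely this aggregation: the reduction's choice of valuations $1/\text{deg}(\cdot)$ makes an uncoded broadcast of $d_x$ welfare-neutral, which is what lets one assume without loss of generality that all sub-unit-valuation clients are served. Your suggested alternative (replacing a $d_e$ row by $d_e+d_x$) is a plausible starting point but, as you note yourself, would require a case analysis you do not carry out, and it does not by itself handle the configuration in which the transmission serving $c_{e,1}$ is $d_e+d_y$ and is also the only source for $c_{e,3}$. Your argument for the unit-valuation clients (append the uncoded chunk, welfare changes by $\hat{v}_i-1\geq 0$) is fine and coincides with the paper's first step.
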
 
\begin{proof}
	First, coding matrix $G^*$ can satisfy client $c_{e,1}$ for all $e \in \mathbf{E}$,  because they submit valuations $\hat{v}_{e,1}=1$. Second, suppose that $\mathbf{1}_{e,2}(\hat{H}_{e,2}, G^*)=0$ for some  $e=(x,y) \in \mathbf{E}$. Let  $\mathbf{S}_x=\{c_{e',2}: e' \in \mathbf{E}-\{e\} \mbox{\,\, and is incident to vertex $x$}\}$ be the set of clients (except for client $c_{e,2}$) whose associated edges are incident to vertex $x$. We consider two cases as follows.
	\begin{enumerate}
		\item \textit{$\mathbf{1}_{e',2}(\hat{H}_{e',2}, G^*)=1$ for some $c_{e',2} \in \mathbf{S}_x$}: Because $\mathbf{1}_{e',2}(\hat{H}_{e',2}, G^*)=1$, coding matrix $G^*$ includes the coding vector of  $d_x+d_{e'}$ (such that $\mathbf{1}_{e,2}(\hat{H}_{e,2}, G^*)=0$ and $\mathbf{1}_{e',2}(\hat{H}_{e',2}, G^*)=1$). Then,  substituting the coding vector of  $d_x+d_{e'}$ in coding matrix  $G^*$   by that of   $d_x$ can increase  the function value $w(\hat{\mathbf{V}}, \hat{\mathbf{H}}, G^*)$  by at least $\frac{1}{\text{deg}(x)}$, because the valuation of recovered data chunks increases by $\hat{v}_{e,2}$. That contradicts to the optimality of coding matrix $G^*$. 
		\item \textit{$\mathbf{1}_{e',2}(\hat{H}_{e',2}, G^*)=0$ for all $c_{e',2} \in \mathbf{S}_x$}: Adding the coding vector of  $d_x$ to coding matrix $G^*$ does not change the function value $w(\hat{\mathbf{V}}, \hat{\mathbf{H}}, G^*)$, because the valuation of  recovered data chunks increases by one and the transmission cost also increases by one. 
	\end{enumerate}
Thus, we can obtain $\mathbf{1}_i(\hat{H}_i, G^*)=1$ and  complete the proof.  	
\end{proof}

\begin{lemma} \label{lemma:side-2}
	For the instance of our problem constructed from graph  $\mathcal{G}(\mathbf{\Lambda},\mathbf{E})$ of the independent set problem, we have $\eta^* \leq |\mathbf{E}|+OPT_{vc}$, where $\eta^*$ is the minimum number of transmissions to satisfy all clients, and $OPT_{vc}$ is the minimum size of those vertex covers in graph $\mathcal{G}(\mathbf{\Lambda},\mathbf{E})$. 
\end{lemma}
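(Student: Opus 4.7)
The plan is to exhibit an explicit sparse and instantly decodable coding scheme that satisfies every client in the constructed instance using exactly $|\mathbf{E}|+OPT_{vc}$ transmissions; since $\eta^*$ is by definition the minimum achievable, this gives the claimed bound.

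Let $V^*\subseteq \mathbf{\Lambda}$ be a minimum vertex cover of $\mathcal{G}(\mathbf{\Lambda},\mathbf{E})$, so $|V^*|=OPT_{vc}$. For each edge $e=(x,y)\in\mathbf{E}$, since $V^*$ is a vertex cover, at least one endpoint lies in $V^*$; arbitrarily designate one such endpoint as $x_e\in V^*$ and let $y_e$ be the other endpoint (possibly also in $V^*$). I would then construct the following coding matrix:
\begin{itemize}
	\item for each $v\in V^*$, include the coding vector of the uncoded chunk $d_v$ (contributing $OPT_{vc}$ transmissions);
	\item for each $e\in\mathbf{E}$, include the coding vector of the sparse coded chunk $d_e+d_{y_e}$ (contributing $|\mathbf{E}|$ transmissions).
\end{itemize}
The total number of transmissions is $OPT_{vc}+|\mathbf{E}|$, and every coded transmission combines exactly two data chunks, so the scheme is sparse.

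The remaining step is to verify that every one of the three clients associated with each edge $e=(x_e,y_e)$ can recover its desired data chunk instantly from a single transmission combined with its side information. Client $c_{e,1}$ wants $d_e$ and holds $\{d_{x_e},d_{y_e}\}$; from $d_e+d_{y_e}$ it instantly recovers $d_e$ using $d_{y_e}$. Client $c_{e,2}$ wants $d_{x_e}$, and by design $x_e\in V^*$, so $d_{x_e}$ is transmitted uncoded. Client $c_{e,3}$ wants $d_{y_e}$ and holds $\{d_e\}$; from $d_e+d_{y_e}$ it instantly recovers $d_{y_e}$ using $d_e$. Thus all clients are satisfied, which yields $\eta^*\leq |\mathbf{E}|+OPT_{vc}$.

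The only subtlety, which I would flag but treat as immediate, is the choice of the orientation ``XOR with $d_{y_e}$'' rather than with $d_{x_e}$: pairing the edge-chunk with the endpoint \emph{outside} (or dually assigned outside) the cover is what lets a single combined transmission simultaneously serve both $c_{e,1}$ (who sees $d_{y_e}$ directly) and $c_{e,3}$ (who sees $d_e$ directly), while the cover vertex $x_e$ separately serves $c_{e,2}$ through its uncoded transmission. There is no real obstacle beyond making this pairing consistent; no combinatorial optimization on top of $V^*$ is required.
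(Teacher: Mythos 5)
Your proposal is correct and follows essentially the same route as the paper: build an explicit sparse, instantly decodable code from a minimum vertex cover, using $OPT_{vc}$ uncoded vertex chunks plus one transmission per edge, and verify that each of the three clients per edge decodes from a single transmission and its side information. The only (harmless) deviation is that you always code $d_e$ with one endpoint's chunk, whereas the paper sends $d_e$ uncoded when both endpoints lie in the cover; both yield exactly $|\mathbf{E}|+OPT_{vc}$ transmissions.
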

\begin{proof}
	Let $\mathbf{\Lambda}^* \subseteq \mathbf{\Lambda}$ be a minimum vertex cover in graph $\mathcal{G}(\mathbf{\Lambda}, \mathbf{E})$. We construct a coding matrix $G$ as follows. 
	\begin{itemize}
		\item For each vertex $\lambda \in \mathbf{\Lambda}^*$, add the coding vector of $d_{\lambda}$ to coding matrix $G$;
		\item For each edge $e=(x, y) \in \mathbf{E}$, if both $x \in \mathbf{\Lambda}^*$ and $y \in \mathbf{\Lambda}^*$, add the coding vector of  $d_e$ to coding matrix $G$; if either $x \notin \mathbf{\Lambda}^*$ or $y \notin \mathbf{\Lambda}^*$, add the coding vector of  $d_x+d_e$ or $d_y+d_e$, respectively, to coding matrix $G$. 
	\end{itemize}
	Note that the total number of transmissions made by the constructed coding matrix $G$ is $OPT_{vc}+|\mathbf{E}|$. Moreover, by the following four cases, 
	\begin{itemize}
		\item client $c_{e,1}$ can recover the data chunk it wants with $d_e$, $d_x+d_e$, or  $d_y+d_e$;
		\item for  edge $e=(x, y) \in \mathbf{E}$, if $x \in \mathbf{\Lambda}^*$ and $y \in \mathbf{\Lambda}^*$, then clients $c_{e,2}$ and $c_{e,3}$ can recover the data chunks they want with $d_x$ and $d_y$, respectively;
		\item for each $e=(x, y) \in \mathbf{E}$, if $x \notin \mathbf{\Lambda}^*$ and $y \in \mathbf{\Lambda}^*$, then client $c_{e,2}$ can recover the data chunk it wants with  $d_x+d_e$ and also client $c_{e,3}$ can with $d_{y}$,
		\item for each $e=(x, y) \in \mathbf{E}$, if $x \in \mathbf{\Lambda}^*$ and $y \notin \mathbf{\Lambda}^*$, then client $c_{e,2}$ can recover the data chunk it wants with  $d_x$ and also client $c_{e,3}$ can with $d_{y}+d_e$,
		
	\end{itemize}
	the constructed  coding matrix $G$ can satisfy all clients, yielding $\eta^* \leq |\mathbf{E}|+OPT_{vc}$.
\end{proof}

\begin{lemma} \label{lemma:side-3}
	For the instance of our problem constructed from graph  $\mathcal{G}(\mathbf{\Lambda},\mathbf{E})$ of the independent set problem, we have $\eta^* \geq |\mathbf{E}|+OPT_{vc}$. 
\end{lemma}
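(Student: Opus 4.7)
The plan is to show that any feasible sparse and instantly decodable coding matrix $G$ that satisfies every client (which, by Lemma~\ref{lemma:side-1}, an optimal one must) needs at least $|\mathbf{E}|+OPT_{vc}$ transmissions. My first step would be to restrict attention to ``useful'' transmissions. Because every $c_{e,1}$ has side information $\{d_x,d_y\}$ while every $c_{e,2},c_{e,3}$ has side information $\{d_e\}$, a case analysis on what a sparse instant-decoding transmission can possibly decode shows that only the forms $d_\lambda$ (uncoded), $d_e$ (uncoded), and $d_\lambda+d_e$ with $\lambda$ an endpoint of $e$ can ever satisfy a client. Every other sparse transmission (for instance $d_e+d_{e'}$, $d_\lambda+d_{\lambda'}$, or $d_\lambda+d_e$ with $\lambda\notin\{x,y\}$) can be deleted from $G$ without changing which clients decode, so I may assume $G$ consists only of these canonical transmissions.

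Next I would introduce the bookkeeping that links the coding scheme to a vertex cover. Let $\mathbf{\Lambda}^*\subseteq\mathbf{\Lambda}$ be the set of vertices $\lambda$ whose data chunk $d_\lambda$ is sent uncoded, and for each edge $e\in\mathbf{E}$ let $\mathbf{T}_e$ be the set of transmissions that involve $d_e$. Satisfying $c_{e,1}$ forces $|\mathbf{T}_e|\geq 1$. For an edge $e=(x,y)$, satisfying $c_{e,2}$ requires either $x\in\mathbf{\Lambda}^*$ or $d_x+d_e\in\mathbf{T}_e$, and symmetrically for $c_{e,3}$. In particular, if neither endpoint lies in $\mathbf{\Lambda}^*$ then both $d_x+d_e$ and $d_y+d_e$ must be present, so $|\mathbf{T}_e|\geq 2$.

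Writing $\mathbf{E}_0$ for the set of edges with neither endpoint in $\mathbf{\Lambda}^*$ and noting that the $\mathbf{T}_e$'s are pairwise disjoint (each transmission involves at most one edge data chunk), I would then sum to obtain
\begin{align*}
\eta(G)\;\geq\;|\mathbf{\Lambda}^*|\;+\;\sum_{e\in\mathbf{E}}|\mathbf{T}_e|\;\geq\;|\mathbf{\Lambda}^*|\;+\;|\mathbf{E}|\;+\;|\mathbf{E}_0|.
\end{align*}
The final step is the combinatorial observation that $\mathbf{\Lambda}^*$ together with one arbitrarily chosen endpoint from each edge in $\mathbf{E}_0$ is a vertex cover of $\mathcal{G}(\mathbf{\Lambda},\mathbf{E})$, hence $|\mathbf{\Lambda}^*|+|\mathbf{E}_0|\geq OPT_{vc}$. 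Combining these two inequalities yields $\eta(G)\geq|\mathbf{E}|+OPT_{vc}$, and minimising over $G$ gives the claim.

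The step I expect to be the main obstacle is the initial reduction to the three canonical transmission types. I have to rule out every potentially exotic sparse transmission (including combinations of two edge data chunks and combinations of two vertex data chunks) and check carefully, for each client type $c_{e,1}, c_{e,2}, c_{e,3}$, that it cannot be served by such a transmission under the instant-decoding constraint. Once this clean-up is done, the counting and the vertex-cover identification are routine.
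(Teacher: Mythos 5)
Your proof is correct, and it follows the same overall strategy as the paper (a counting argument on any instantly decodable code that satisfies all clients, from which a vertex cover of $\mathcal{G}(\mathbf{\Lambda},\mathbf{E})$ is extracted), but with a genuinely different decomposition. The paper designates, for each edge $e=(x,y)$, one transmission $t_e\in\{d_e,\,d_e+d_x,\,d_e+d_y\}$ forced by client $c_{e,1}$ (giving the $|\mathbf{E}|$ term), defines $\tilde{\mathbf{\Lambda}}$ as the vertices whose client $c_{e,2}$ or $c_{e,3}$ is not served by $t_e$, charges one additional transmission to each such vertex, and observes that $\tilde{\mathbf{\Lambda}}$ is itself a vertex cover because a single $t_e$ can serve at most one of $c_{e,2},c_{e,3}$. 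You instead first prune to the three canonical transmission types, split the count into uncoded vertex chunks ($\mathbf{\Lambda}^*$) and edge-involving transmissions ($\mathbf{T}_e$, with $|\mathbf{T}_e|\geq 2$ on the set $\mathbf{E}_0$ of edges with no uncoded endpoint), and assemble the cover as $\mathbf{\Lambda}^*$ plus one endpoint per edge of $\mathbf{E}_0$. What your accounting buys is that the no-double-counting issue is explicit: the pairwise disjointness of the $\mathbf{T}_e$'s and of the uncoded vertex transmissions is stated and immediate, whereas the paper's claim that the $|\tilde{\mathbf{\Lambda}}|$ extra transmissions are distinct from the $t_e$'s and from one another is left implicit. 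What the paper's version buys is leanness: it needs no cleanup lemma and produces the cover directly as a vertex set. Two small remarks: the case analysis you flagged as the main obstacle is indeed valid, but only because of instant decoding (deleting a transmission that serves no client individually cannot hurt anyone); under general decoding this deletion step fails, which is precisely why the paper treats that case separately in its closing remark with a different argument. Also, the appeal to Lemma~\ref{lemma:side-1} is unnecessary, since $\eta^*$ is by definition the minimum over codes satisfying all clients, so lower-bounding $\eta(G)$ over such codes is exactly the statement to be proved.
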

\begin{proof}
	First, for satisfying client $c_{e,1}$ (associated with edge $e=(x, y)$) with the instant decoding scheme, the server has to make  at least one transmission (denoted by $t_e$)  of   $d_e$, $d_e+d_x$, or  $d_e+d_y$. Thus, the server has to make at least $|\mathbf{E}|$ transmissions for satisfying client $c_{e,1}$ for all $e \in \mathbf{E}$.


	Second,	let $\tilde{\mathbf{\Lambda}} \subseteq \mathbf{\Lambda}$ be the set of vertices $x$ that have an incident edge $e=(x,y)$ so that client $c_{e,2}$ or $c_{e,3}$ cannot recover the data chunk it wants with transmission $t_e$.  For satisfying  that client $c_{e,2}$, the server has to transmit at least one of  $d_{x}$ or $d_{x}+d_{e}$. Thus,  the server has to make at least another $|\tilde{\mathbf{\Lambda}}|$ transmissions for satisfying all clients in set $\tilde{\mathbf{\Lambda}}$. 
	
	For satisfying all clients, the server has to make at least $|\mathbf{E}|+|\tilde{\mathbf{\Lambda}}|$ transmissions. Note that, for each edge $e\in \mathbf{E}$, at most one of clients $c_{e,2}$ or $c_{e,3}$ can recover the data chunks they want with $t_e$. In particular, for each edge in set $\mathbf{E}$, one of its incident vertices  belongs to set $\tilde{\mathbf{\Lambda}}$. Thus, set $\tilde{\mathbf{\Lambda}}$ is a vertex cover, yielding $|\tilde{\mathbf{\Lambda}}| \geq OPT_{vc}$. To conclude, we can obtain $\eta^* \geq |\mathbf{E}|+|\tilde{\mathbf{\Lambda}}| \geq |\mathbf{E}|+OPT_{vc}$.
\end{proof}

Then, we are ready to prove the result. Because $\mathbf{1}_i(\hat{H}_i, G^*)=1$ for coding matrix $G^*$  (from Lemma \ref{lemma:side-1}),  we can express the function value $w(\hat{\mathbf{V}}, \hat{\mathbf{H}}, G^*)$ in Eq.~(\ref{eq:w}) by 
\begin{align*}
	w(\hat{\mathbf{V}}, \hat{\mathbf{H}}, G^*)=\sum_{e \in \mathbf{E}} \hat{v}_{e,1} + \sum_{e \in \mathbf{E}} (\hat{v}_{e,2} + \hat{v}_{e,3}) -\eta(G^*) =|\mathbf{E}|+|\mathbf{\Lambda}|-\eta(G^*) \mathop{=}^{(a)}|\mathbf{E}|+|\mathbf{\Lambda}|-\eta^*,
\end{align*}
where (a) is because coding matrix $G^*$ (for maximizing $w(\hat{\mathbf{V}}, \hat{\mathbf{H}}, G^*)$) minimizes the number of transmissions. Moreover,   because of $\eta^*=|\mathbf{E}|+OPT_{vc}$ (from Lemmas \ref{lemma:side-2} and \ref{lemma:side-3}), we can obtain $w(\hat{\mathbf{V}}, \hat{\mathbf{H}}, G^*)=|\mathbf{\Lambda}|-OPT_{vc}$.
Let $OPT_{is}$ be the maximum size of those independent sets in graph $\mathcal{G}(\mathbf{\Lambda},\mathbf{E})$. Because of $OPT_{is}+OPT_{vc}=|\mathbf{\Lambda}|$, we finally  obtain $w(\hat{\mathbf{V}}, \hat{\mathbf{H}}, G^*)=OPT_{is}$. Then, the result follows from the hardness of the independent set problem \cite{clq-hard}.

\begin{remark}
To extend the result to the set $\mathbf{G}$ of sparse and general decodable coding matrices, we only need to modify Lemma~\ref{lemma:side-3} as below. First, for satisfying client $c_{e,1}$, the server has to transmit  $t_e=d_e+d$ for some  $d \in \mathbf{D} \cup \emptyset$. Note that if the server transmits  $t_{e}$ and $t_{e'}$  with $t_{e}=t_{e'}=d_{e}+d_{e'}$ for some $e, e' \in \mathbf{E}$, then the server needs to make another transmission for satisfying clients $c_{e, 1}$ and $c_{e',1}$, because both clients do not have data chunks $d_{e'}$ and $d_{e}$, respectively, in their side information. 
	Therefore, for satisfying client $c_{e,1}$ for all $e \in \mathbf{E}$, the server needs to make at least $|\mathbf{E}|$ transmissions. Without loss of generality, we can assume that data chunk $t_e$ is different for all $e \in \mathbf{E}$.
	Second,  for satisfying a client associated with $\lambda \in \tilde{\mathbf{\Lambda}}$ (that wants data $d_{\lambda}$), 
	the server has to make at least one transmission denoted by $t_{\lambda}=d_{\lambda}+d$  for some $d \in \mathbf{D} \cup \emptyset$. If the server transmits $t_{\lambda}=t_{\lambda'}= d_{\lambda}+d_{\lambda'}$ for some $\lambda, \lambda' \in \tilde{\mathbf{\Lambda}}$, then the server needs to make another transmission because both clients (associated with $\lambda$ and $\lambda'$) do not have $d_{\lambda'}$ and $d_{\lambda}$, respectively, in their side information. Thus,  the server has to make at least another $|\tilde{\mathbf{\Lambda}}|$ transmissions for satisfying all clients in set $\tilde{\mathbf{\Lambda}}$. 
	
\end{remark}

\small
\bibliographystyle{IEEEtran}
\bibliography{IEEEabrv,ref}

\end{document}